\def\saveenum{\xdef\@savedenum{\the\c@enumi\relax}}
\def\resetenum{\global\c@enumi\@savedenum}
\theoremstyle{plain}
\newtheorem{theorem}{Theorem}
\newtheorem{corollary}[theorem]{Corollary}
\newtheorem{lemma}[theorem]{Lemma}
\newtheorem{conjecture}[theorem]{Conjecture}
\newtheorem{problem}[theorem]{Problem}
\theoremstyle{definition}
\theoremstyle{remark}
\newtheorem{remark}[theorem]{Remark}
\newcommand{\ve}{\varepsilon}
\newcommand{\ignore}[1]{{}}
\newcommand{\diam}{{\mathrm{diam}\,}}
\newcommand{\poly}{{\mathrm{poly}}}
\newcommand{\R}{\mathbb{R}}
\DeclareMathOperator{\suc}{succ}
\title{Tight space-noise tradeoffs in computing the ergodic measure}
\author{Mark Braverman\\mbraverm@cs.princeton.edu\\
Princeton University \and  Crist\'obal Rojas \\ crojas@mat-unab.cl \\Universidad Andres Bello  \and Jon Schneider \\ js44@cs.princeton.edu \\
Princeton University \thanks{MB is  supported in part by an NSF
CAREER award (CCF-1149888), a
Turing Centenary Fellowship, a Packard Fellowship in Science and
Engineering, and the Simons Collaboration on Algorithms and Geometry. CR was partially supported by projects Fondecyt 1150222,  DI-782-15/R Universidad Andr\'es Bello and Basal PFB-03 CMM-Universidad de Chile.}}
\begin{document}

\maketitle

\begin{abstract}
In this note we obtain tight bounds on the space-complexity of computing the ergodic measure 
of a low-dimensional discrete-time dynamical system affected by Gaussian noise. If the scale of the noise is $\ve$, and the function describing the evolution of the system is not by itself a 
source of computational complexity, then the density function of the ergodic measure can be approximated within precision $\delta$ in space polynomial in $\log 1/\ve+\log\log 1/\delta$. We also show that 
this bound is tight up to polynomial factors. 

In the course of showing the above, we prove a result of independent interest in space-bounded computation: that it is possible to exponentiate an $n$ by $n$ matrix to an exponentially large power in space polylogarithmic in $n$. 
\end{abstract}



\section{Introduction}

A discrete-time dynamical system is specified by a function $f$ from a space $X$ to itself. One of the most important problems in the study of dynamical systems is to understand the limiting or asymptotic behavior of such systems; in particular, the limiting distribution of the sequence of iterates $x, f(x), f(f(x)), \dots$. Combinations of such distributions give rise to {\em invariant measures} of the system, which describe the asymptotic behavior in statistical terms. The invariant measures are supported on {\em invariant sets}, which provide a topological description instead. Together, these invariant objects completely characterize the asymptotic behavior of the system. 

Ideally, given a dynamical system, we would like to be able to decide properties of its asymptotic behavior or to compute (to within some approximation)  the invariant objects describing it. Unfortunately, in many cases, simple questions regarding this behavior are undecidable \cite{Mo91,AsMalAm95, Wol02,Ka09} and  computing the relevant invariant objects is impossible \cite{BY,BraYam07, GalHoyRoj07c,BBRY}. The general phenomenon behind these results is that, for many classes of dynamical systems, it is possible to `embed' a Turing machine $M$ in the dynamical system so that achieving the algorithmic task we are concerned with is  equivalent to deciding whether $M$ halts. 

In  \cite{BGR}, Braverman, Grigo, and Rojas showed that under the introduction of noise to a dynamical system (for almost all `natural' noise functions), the set of invariant measures becomes computable. Moreover, in many cases, this set is computable efficiently. Specifically, they show (Theorem C in  \cite{BGR}) that if the noise is Gaussian then there is a unique invariant measure $\mu$; moreover, if $f$ is polynomial-time integrable (convolutions of polynomials in $f$ with polynomial functions can be integrated in polynomial-time), then computing this invariant measure to within precision $\delta$ can be done in time $O(\poly(\log 1/\delta))$. 

The purpose of this paper is to investigate the space complexity of computing the invariant measure of a noisy dynamical system. The algorithm given in Theorem C of  \cite{BGR} for computing the invariant measure requires space $O(\poly(\epsilon^{-1}\log\delta^{-1}))$. By applying (and developing) techniques for space-bounded computation, we prove (in Section \ref{sect:ubnd}) the following refinement of Theorem C that runs in space polylogarithmic of that of the original algorithm (albeit at a cost of a quasi-polynomial increase in the running time). 

An additional assumption that we need to make to obtain tight results is that the function $f$ itself is not a source of additional space complexity. We say that $f$ is $S+$log-space integrable, if it is possible to integrate the convolution of powers of $f$ with polynomial functions with precision $\zeta$ in space $O(S+\log\log 1/\zeta)$ (see Section \ref{sec:prelim} for a precise definition)\footnote{In fact, the conclusion of Theorem~\ref{ubnd} follows even if these convolutions 
can be computed in space $\poly(S+\log\log 1/\zeta)$.}.  

\begin{theorem}\label{ubnd}
Let $X=[0,1]$. 
If the noise $p_{f(x)}^{\epsilon}(\cdot)$ is Gaussian, and $f$ is $(\log \frac{1}{\epsilon})+$log-space integrable, then the computation of the invariant measure $\mu$ at precision $\delta$ can be done in space $O\left(\poly\left(\log\frac{1}{\epsilon} + \log\log\frac{1}{\delta}\right)\right)$. 
\end{theorem}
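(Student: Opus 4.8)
The plan is to discretize the problem and reduce the computation of $\mu$ to a linear-algebraic task that fits the space budget. First I would replace the continuous system on $X=[0,1]$ with a finite-state Markov chain: partition $[0,1]$ into $N = \poly(1/\epsilon)$ intervals and let $\MarkovKernel$ be the $N\times N$ stochastic matrix whose $(i,j)$ entry is the probability that, starting from (the center of) cell $i$, one application of $f$ followed by the Gaussian kernel at scale $\epsilon$ lands in cell $j$. The $S+$log-space integrability hypothesis with $S=\log(1/\epsilon)$ is exactly what lets us compute each entry of $\MarkovKernel$ — and, more importantly, each entry of a product of a few such matrices — to high precision in space $\poly(\log\frac1\epsilon + \log\log\frac1\delta)$; the Gaussian tails mean that truncating the kernel outside a window of $\poly(\log 1/\delta)$ cells changes the answer by $\ll\delta$. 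The invariant measure $\mu$ is then well-approximated (in density, uniformly, because Gaussian smoothing is contractive on densities and regularizing) by the stationary distribution $\stationaryDist$ of $\MarkovKernel$, with discretization error controllable by taking $N$ a sufficiently large polynomial in $1/\epsilon$ and $\log 1/\delta$.

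Next I would reduce computing $\stationaryDist$ to matrix exponentiation. Since the noisy chain mixes in time $T = \poly(1/\epsilon)$ (the Gaussian noise gives a uniform Doeblin/minorization condition, hence a spectral gap bounded below by a fixed polynomial in $\epsilon$), any row of $\MarkovKernel^{T'}$ for $T' = \poly(1/\epsilon,\log 1/\delta)$ is within $\delta$ of $\stationaryDist$ in total variation. So it suffices to compute a single row of a high (but only polynomially high, hence quasi-polynomially large when written in unary — or we can even afford $\MarkovKernel^{2^{\poly\log N}}$) power of the $N\times N$ matrix $\MarkovKernel$ to precision $\delta$. This is where the matrix-exponentiation result advertised in the abstract enters: exponentiating an $n\times n$ matrix to an exponentially large power can be done in space $\poly(\log n)$. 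With $n = N = \poly(1/\epsilon)$ this gives space $\poly(\log\frac1\epsilon)$; carrying $\log 1/\delta$ bits of precision through the repeated-squaring recursion costs an additional $\poly(\log\log\frac1\delta)$ factor in space (each squaring at least doubles the required precision, so the precision budget only needs to grow by a factor polynomial in the depth, i.e. polynomial in $\log$ of the exponent), and composing the entry-oracle for $\MarkovKernel$ (itself computable in space $\poly(\log\frac1\epsilon + \log\log\frac1\delta)$) inside the exponentiation routine composes the space bounds additively. Reading off the resulting row and renormalizing yields the density of $\mu$ to precision $\delta$.

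The main obstacle, and the technical heart of the argument, is the space-efficient matrix exponentiation itself: naively, repeated squaring computes $O(\log T')$ intermediate $N\times N$ matrices, which is $\poly(1/\epsilon)$ space, far too much. The fix is to recompute rather than store — define the $(i,j)$ entry of $\MarkovKernel^{2^k}$ recursively as a sum over an intermediate index of products of two entries of $\MarkovKernel^{2^{k-1}}$, and recurse, so that only the current index tuple and a constant amount of arithmetic scratch live on the stack at each of the $O(\log T')$ levels. The subtlety is numerical: the sum defining each entry has $N$ terms, so to keep the final error below $\delta$ one must carry $\poly(\log N + \log 1/\delta)$ bits at the bottom and argue that the error does not blow up multiplicatively by more than a controlled amount per level — this works because $\MarkovKernel$ is stochastic, so $\|\MarkovKernel^{2^{k-1}}\|_\infty = 1$ and errors propagate additively across the squaring rather than multiplicatively. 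Making this error analysis precise, and checking that each arithmetic operation (including the $N$-fold sum) is itself implementable within the stated space bound via standard space-efficient arithmetic on rationals, is the bulk of the work; everything else — the discretization bound, the spectral-gap/mixing estimate, and the Gaussian truncation — is routine given the hypotheses and the cited results of \cite{BGR}.
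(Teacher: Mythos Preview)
Your reduction to matrix powers is the right shape, and your error analysis for the Savitch-style recursion is fine as far as it goes, but the argument has a genuine gap at the mixing-time step.

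You assert that the Gaussian noise gives ``a spectral gap bounded below by a fixed polynomial in $\epsilon$'' and hence mixing in time $T=\poly(1/\epsilon)$. This is false in general. The Doeblin minorization constant you get from the Gaussian kernel on $[0,1]$ is $\inf_{x,y}K_\epsilon(x,y)\asymp \epsilon^{-1}\exp(-1/(2\epsilon^2))$, which is exponentially small in $1/\epsilon$, and for many $f$ this is essentially tight: take any $f$ with two stable fixed points whose basins are separated by a gap of width $\Theta(1)$; crossing between basins is a large-deviations event of probability $\exp(-\Theta(1/\epsilon^2))$, so the mixing time is $\exp(\Theta(1/\epsilon^2))$. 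This is exactly why the bound from \cite{BGR} that the paper imports reads $t(\delta)=O(\log\delta^{-1}\exp(\epsilon^{-2}))$. With that exponent, your recursive recomputation has depth $\log t(\delta)=\Theta(\epsilon^{-2}+\log\log\delta^{-1})$, and the stack alone costs $\Theta(\epsilon^{-2})\cdot\log N$ space, which is polynomial in $1/\epsilon$ rather than polylogarithmic. So Savitch-style squaring cannot reach the stated bound, and this is not a matter of sharpening constants: the recursion depth is the obstruction.

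The paper's fix is a genuinely different matrix-exponentiation algorithm (its Theorem~\ref{matpowers}) that handles exponents $E$ up to $2^{\poly(n)}$ in space $\poly(\log n)$. The idea is not to iterate at all: compute the eigenvalues $\lambda_1,\dots,\lambda_n$ of $M$ (characteristic polynomial plus polylog-space root-finding), compute each scalar power $\lambda_i^E$ in log space via $\exp(E\log\lambda_i)$, Lagrange-interpolate to obtain a degree-$(n-1)$ polynomial $p$ with $p(\lambda_i)=\lambda_i^E$, and then evaluate $p(M)$, which only needs $M,\dots,M^{n-1}$ and hence ordinary repeated squaring to depth $O(\log n)$. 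Controlling the conditioning of the interpolation requires a perturbation to separate eigenvalues and a root-separation bound; that is where the work is. Your proposal references ``the matrix-exponentiation result advertised in the abstract'' but then tries to realize it by recursion over $\log E$ levels, which is precisely the method that fails here.

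A secondary point: your piecewise-constant discretization with $N=\poly(1/\epsilon,\log 1/\delta)$ cells is not obviously sufficient for $L^\infty$ accuracy $\delta$ on the density; a first-derivative bound on the smoothed density only gives error $O(\mathrm{cell\ width}/\epsilon)$, which would force $N=\poly(1/\epsilon)/\delta$ and hence $\log N=\Theta(\log 1/\delta)$, again too large. The paper avoids this by expanding the density on each width-$\epsilon$ cell as a Taylor series truncated at degree $N(\delta)=O(\poly(1/\epsilon)\log 1/\delta)$, so the state is a vector of Taylor coefficients and the transition is the induced linear map on those coefficients. This is what keeps the matrix dimension at $\poly(1/\epsilon,\log 1/\delta)$ while achieving precision $\delta$.
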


We can also replace the assumption that $f$ is $(\log \frac{1}{\epsilon})+$log-space integrable with the assumption that $f$ is both log-space computable (i.e. that its values can be computed to within error $\zeta$ in space $O(\log \log 1/\zeta)$) and analytic with bounded Taylor series coefficients. In particular, we show that 

\begin{theorem}\label{ubndalt} Let $X=[0,1]$. 
If the noise $p_{f(x)}^{\epsilon}(\cdot)$ is Gaussian, and $f$ is log-space computable, smooth, and (for some $\eta>0$) satisfies $|\partial^{k}f(x)| \leq k!\eta^{k}$ for all $x$, then the computation of the invariant measure $\mu$ at precision $\delta$ can be done in space $O\left(\poly\left(\log \eta + \log\frac{1}{\epsilon} + \log\log\frac{1}{\delta}\right)\right)$. 
\end{theorem}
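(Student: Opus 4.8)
\medskip
\noindent\emph{Proof proposal.}
The plan is to deduce \Thmref{ubndalt} from \Thmref{ubnd}: I would show that the two hypotheses on $f$ --- log-space computability together with the analyticity bound $|\partial^{k}f(x)|\le k!\eta^{k}$ --- already imply that $f$ is $S$+log-space integrable in the relaxed sense of the footnote to \Thmref{ubnd}, with $S=O(\log\eta+\log\tfrac1\epsilon)$; feeding this $S$ into (the proof of) \Thmref{ubnd} then gives space $O(\poly(\log\eta+\log\tfrac1\epsilon+\log\log\tfrac1\delta))$. Concretely, the whole job is: given the two hypotheses, compute a $\zeta$-approximation of the convolution of a power $f^{n}$ with a polynomial $P$ of degree $\poly(\tfrac1\epsilon,\log\tfrac1\zeta)$ --- equivalently a finite iterated integral of products of $f$ against such a polynomial --- using work space $\poly(\log\eta+\log\tfrac1\epsilon+\log\log\tfrac1\zeta)$ (the $\zeta$-approximation being delivered on a write-only output tape, as usual for space-bounded real computation).

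The first step is to convert the analyticity bound into the statement that a \emph{polynomially small} sample of values of $f$ determines the integral. The bound implies that $f$ (hence each $f^{n}$) extends holomorphically to the strip $\{|\operatorname{Im}z|<1/\eta\}$ with $\sup|f^{n}|\le M^{n}$, where $M:=\sup_{[0,1]}|f|+1/\eta$. Chop $[0,1]$ into $\lceil 2\eta\rceil$ equal subintervals; each has length $<1/(2\eta)$, so on it the integrand $f^{n}P$ is analytic in a disc of fixed relative radius, and a classical quadrature rule with $N=\poly(n,\deg P,\log\tfrac1\zeta)$ nodes approximates its integral to error $\zeta$ (equivalently, one truncates the local Taylor series of $f$ at degree $O(\log\tfrac{M^{n}}{\zeta})$). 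Summing over subintervals, the desired integral becomes a sum of $\poly(\eta,n,\deg P,\log\tfrac1\zeta)$ explicit terms, each a product of quadrature weights, coefficients of $P$, and values (or local Taylor coefficients) of $f$. Crucially the $O(\eta)$ subintervals are never listed: the algorithm loops over them with an $O(\log\eta)$-bit counter, which is what keeps the dependence on $\eta$ logarithmic.

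For the arithmetic, I would invoke that each required value of $f$ is obtained, by hypothesis, in space $O(\log\log\tfrac1{\zeta'})=O(\log(\log\tfrac1\zeta+\log\eta))$ at a precision $\zeta'=\zeta/\poly(\cdot)$, and that the nodes and weights of the chosen quadrature are likewise computable to the needed precision in space polylogarithmic in their bit-length. The outer assembly is then an iterated sum of iterated products of such quantities, of total bit-length $\poly(\log\eta+\log\tfrac1\epsilon)\cdot\operatorname{polylog}(\tfrac1\zeta)$ and with $2^{\poly(\log\eta+\log\tfrac1\epsilon+\log\log\tfrac1\zeta)}$ summands; since integer iterated addition, iterated multiplication and division all lie in logspace and logspace is closed under composition (with recomputation), the whole computation runs in space $\poly(\log\eta+\log\tfrac1\epsilon+\log\log\tfrac1\zeta)$, as required. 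The main obstacle is the bookkeeping of this last step: one must check, by a routine but careful propagation-of-errors analysis, that neither the thinning strip of analyticity when $\eta$ is large, nor the power $f^{n}$, nor the polynomial $P$ ever forces an intermediate quantity of bit-length super-polynomial in $\log\eta+\log\tfrac1\epsilon$ times $\operatorname{polylog}(1/\zeta)$. The analyticity hypothesis is precisely what caps all approximation degrees at $\poly(\eta)$, and log-space computability of $f$ is what makes sampling it essentially free.
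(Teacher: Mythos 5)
Your proposal is correct in outline and reaches the theorem by a route that is recognizably different in its technical core, though closely related in spirit. The paper's proof isolates a lemma (their Lemma~\ref{taylor}) showing that the \emph{Taylor coefficients} $a_k$ of $f$ at a center $x_j$ can themselves be computed in log-space, by evaluating the $k$-th finite difference $\sum_{i=0}^k(-1)^{k-i}\binom{k}{i}f(x+i\tau)/(k!\tau^k)$ at a step size $\tau=\delta\eta^{-(k+1)}k^{-(k+2)}2^{-k}$; the analyticity bound $|\partial^{k}f|\le k!\eta^{k}$ is then used twice, once to bound the Lagrange remainder in this finite-difference estimate and once to show the Taylor tail beyond degree $M=\log(k/\delta)$ is $O(2^{-M})$. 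They then integrate the truncated polynomial $(y-x_j)^m f_M(y)^k$ exactly. By contrast, you propose to bypass the Taylor coefficients entirely: use the analyticity bound to obtain a holomorphic extension to a strip of width $\approx 1/\eta$, chop into $O(\eta)$ subintervals iterated over with an $O(\log\eta)$-bit counter, and apply a convergent quadrature rule with $\poly(n,\deg P,\log\tfrac1\zeta)$ nodes, sampling only function values of $f$. Both approaches exploit the same fact (analyticity caps the needed approximation degree polynomially in $\eta,\log\tfrac1\delta$), but yours avoids the paper's finite-difference lemma at the cost of needing quadrature nodes and weights to be computable in small space --- a point you wave at but should substantiate, e.g.\ by choosing Clenshaw--Curtis nodes $\cos(k\pi/N)$, which are handled by the $\exp/\arctan$ machinery already cited, rather than Gauss--Legendre nodes, which would require the polynomial root-finding subroutine. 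One other structural difference worth noting: the paper realizes the analyticity constraint by shrinking the atoms themselves to $\diam\le\min(\epsilon,1/(2\eta))$, increasing the matrix dimension by a factor $\eta$ and thereby introducing $\log\eta$ into the bound; you instead keep the atoms at size $\epsilon$ and push the $\eta$-dependence into the integrability claim via internal chopping, invoking the footnote to Theorem~\ref{ubnd}. Both bookkeepings produce the same final bound. Your error-propagation step remains at the level of a plausibility argument, whereas the paper's version is fully quantitative; this is the main gap you would need to close, but it does not affect correctness of the overall strategy.
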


For the sake of simplicity, in this note we focus on the case where $X = [0,1]$ (as in \cite{BGR}). Both Theorems \ref{ubnd} and \ref{ubndalt}, however, can be generalized to the case where $X=[0,1]^{d}$. For fixed $d$, the space bounds in Theorems \ref{ubnd} and \ref{ubndalt} remain the same; for variable $d$, the space bounds gain an extra factor of $\poly(d)$. We explain this in further detail in Remark \ref{rem:dimension}.

In order to generalize Theorem C of \cite{BGR} and prove Theorems \ref{ubnd} and \ref{ubndalt}, we require a method to exponentiate $n$ by $n$ matrices up to powers potentially as large as $2^{\poly(n)}$
in  space polylogarithmic in $n$, (the traditional method of iterative squaring only works for powers up to $\poly(n)$). To the best of our knowledge, there is no known existing solution to this problem that operates in polylogarithmic space. In Section \ref{sect:matpow}, we present such a solution based on approximating $M^{E}$ via $p(M)$ for some low degree polynomial $p$ (Theorem \ref{matpowers}). This theorem is arguably the main technical innovation of this paper:

\smallskip

\noindent
{\bf Theorem~\ref{matpowers}.}
{\em Given an $n$ by $n$ matrix $M$ whose entries are given up to precision $2^{-\poly(n)}$ and an integer exponent $E = O(2^{\poly(n)})$, there exists an algorithm that computes $M^{E}$ in space $O(\poly(\log n))$ to within precision $2^{-\poly(n)}$ if $||M^{E}|| \leq 2^{n}$ (and otherwise reports that $||M^{E}|| > 2^{n}$). 
}

\smallskip

Finally, in Section \ref{sect:lbnd}, we prove a corresponding lower bound, showing that this upper bound is tight; the space complexity of computing the invariant measure of such a system cannot be further reduced. 

\begin{theorem}\label{lbnd}
Any algorithm that can compute the invariant measure $\mu$ to within precision $\delta$ of a dynamical system with Gaussian noise kernel $p_{f(x)}^{\epsilon}(\cdot)$ and analytic transition function $f(x)$ (that uniformly satisfies $|\partial^{k}f(x)| \leq k!\eta^k$ for some $\eta = \poly(\epsilon^{-1})$) requires space at least $\Omega\left(\log\frac{1}{\epsilon} + \log\log\frac{1}{\delta}\right)$.
\end{theorem}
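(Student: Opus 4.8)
The plan is to encode a space-bounded Turing machine into the noisy dynamical system in such a way that recovering the invariant measure forces the algorithm to effectively simulate the machine, thereby inheriting its space lower bound. The two terms in the bound, $\log\frac{1}{\epsilon}$ and $\log\log\frac{1}{\delta}$, should be handled by two separate (and essentially orthogonal) constructions; the final bound follows by taking the better of the two. First I would isolate the $\log\log\frac{1}{\delta}$ term: this is the ``easy'' half and comes from a purely information-theoretic / counting argument. Even for a trivial map $f$ (say, a contraction toward a fixed point), the density of the Gaussian-smoothed invariant measure is an analytic function whose value at a point encodes roughly $\log\log\frac{1}{\delta}$ bits just to name the precision level; more carefully, outputting a $\delta$-approximation to a real number that ranges over a set of size $\mathrm{poly}(1/\delta)$ requires $\Omega(\log\log\frac1\delta)$ space to even index the answer, since a machine using space $s$ has at most $2^{O(s)}$ configurations and hence its halting output as a function of the ``precision parameter'' $n = \log\frac1\delta$ must be describable with $O(s)$ bits of state — forcing $s = \Omega(\log n) = \Omega(\log\log\frac1\delta)$. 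I would make this rigorous by a direct diagonalization/configuration-counting argument against space-$o(\log\log\frac1\delta)$ machines.

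The substantive half is the $\log\frac{1}{\epsilon}$ lower bound, and here the strategy is to reverse-engineer the upper bound: since Theorem \ref{ubnd} computes $\mu$ by discretizing $[0,1]$ at scale $\sim\epsilon$ and exponentiating the resulting $\sim(1/\epsilon)$-by-$(1/\epsilon)$ transition matrix (this is where Theorem \ref{matpowers} enters), the natural lower bound target is: simulating a Turing machine with $\sim\log\frac1\epsilon$ bits of work tape. Concretely, I would fix a space-$s$ Turing machine $M$ with $s \approx \log\frac1\epsilon$, so it has $N = 2^{\Theta(s)} = \mathrm{poly}(1/\epsilon)$ configurations, and build an analytic $f\colon[0,1]\to[0,1]$ whose dynamics on a grid of $N$ cells mimics the transition function of $M$ — each grid cell corresponds to a configuration, and $f$ maps (the center of) one cell near the center of the successor cell. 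The noise at scale $\epsilon$ is fine enough not to blur adjacent configuration-cells together, so the invariant measure of the noisy system concentrates on the cells corresponding to the recurrent/halting configurations of $M$. Reading off from $\mu$ which region carries the mass then decides whether $M$ halts (or, more robustly, decides some $\mathsf{SPACE}(s)$-hard language), and since that language requires space $\Omega(s) = \Omega(\log\frac1\epsilon)$, so does computing $\mu$. I must check that such an $f$ can be taken analytic with $|\partial^k f|\le k!\eta^k$ for $\eta = \mathrm{poly}(1/\epsilon)$: this is arranged by realizing the piecewise-affine ``shift the grid'' map as a band-limited trigonometric-polynomial-like interpolant, or by a fixed real-analytic gadget rescaled to the grid — the derivative bound with $\eta$ polynomial in the number of cells is exactly what that rescaling costs.

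The main obstacle I anticipate is the robustness of the encoding to the Gaussian noise: after one step the $\delta$-function (or narrow bump) at a configuration-cell is convolved with a Gaussian of width $\epsilon$, and after many steps these bumps spread and overlap, so the stationary measure is not literally supported on the halting cells but is a smeared-out equilibrium. I would handle this exactly as one analyzes the upper-bound chain: the noisy system on the grid is (up to controllable error) a finite Markov chain on $N$ states whose stationary distribution puts $\Omega(1/N)$ — not $o(\delta)$ — mass on the absorbing component iff $M$ halts, provided we choose the grid coarse enough that distinct cells are $\Omega(\epsilon\sqrt{\log N})$ apart so Gaussian tails don't leak, and provided the reading precision $\delta$ is polynomially small in $\epsilon$ (which is the regime of the theorem, since $\eta = \mathrm{poly}(1/\epsilon)$ ties the scales together). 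The second technical point to verify is that the reduction is itself space-efficient — the machine producing the instance $(f,\epsilon,\delta)$ from the description of $M$ must run in space $O(s)$ so that a space-$o(s)$ algorithm for $\mu$ would compose into a space-$o(s)$ decider for the hard language; this is routine since $f$'s local rule is just ``apply $M$'s transition to an index,'' computable in space $O(\log N) = O(s)$. Combining the two halves — $\Omega(\log\log\frac1\delta)$ from configuration counting and $\Omega(\log\frac1\epsilon)$ from the $\mathsf{SPACE}$-hardness encoding — and noting the hypotheses of the two constructions are compatible, yields the stated $\Omega\!\left(\log\frac1\epsilon + \log\log\frac1\delta\right)$ bound.
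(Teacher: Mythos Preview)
Your overall strategy matches the paper's: handle $\log\log\frac{1}{\delta}$ by an output-length/configuration-counting argument, and $\log\frac{1}{\epsilon}$ by encoding a space-$s$ Turing machine (with $s\approx\log\frac{1}{\epsilon}$) into the map $f$ so that the invariant measure reveals acceptance. Two steps in your argument, however, do not go through as written.

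First, the space accounting for the reduction is wrong. You assert that the reduction runs in space $O(\log N)=O(s)$ and that composing this with a hypothetical space-$o(s)$ algorithm for $\mu$ yields a space-$o(s)$ decider; but $O(s)+o(s)=O(s)$, not $o(s)$, so no contradiction with $\mathsf{SPACE}(s)$-hardness follows. The paper's reduction runs in space $O(\log s)$: although the configuration index $k$ has $\Theta(s)$ bits, the successor $\suc(k)$ differs from $k$ only in $O(1)$ tape cells near the head together with the head-position and control-state fields, so each output bit of $\suc(k)$ can be produced by reading $O(1)$ bits of $k$ whose positions are addressed with an $O(\log s)$-bit pointer. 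You mention that the transition rule is ``local'' but then fail to exploit this to get below $O(s)$; without it the reduction is too expensive to yield any lower bound.

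Second, your $\Omega(1/N)$ accept/reject gap forces $\delta=\poly(\epsilon)$, whereas the theorem must deliver $\Omega(\log\frac{1}{\epsilon})$ even for \emph{constant} $\delta$. The paper obtains a constant gap by a small amplification trick: upon reaching an accepting configuration the map sends the state on a deterministic walk through $S^{2}$ fresh cells occupying $[1/2,1]$ before returning to the start configuration; upon rejecting (or after $S$ steps without halting) it returns immediately. The noiseless dynamics is then a single cycle of length at most $2S^{2}$ whose intersection with $[1/2,1]$ has relative length $\geq 1/2$ if and only if the machine accepts, so the noisy invariant measure puts mass $\geq 1/3$ versus $\leq N^{-\Omega(1)}$ on $[1/2,1]$ in the two cases. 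Without some device of this kind your reduction only works in a restricted precision regime. (For the analytic smoothing the paper takes $f$ to be a sum of shifted logistic functions $x\mapsto 1/(1+e^{-Cx})$ with $C=\poly(\epsilon^{-1})$; this gives the bound $|\partial^{k}f|\le k!\eta^{k}$ with $\eta=\poly(\epsilon^{-1})$ directly, which your ``band-limited trigonometric interpolant'' would also need to be checked against.)
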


These theorems provide evidence for the Space-Bounded Church-Turing thesis (SBCT), introduced by the authors in \cite{BRS}. The SBCT roughly states that a physical system with ``memory'' $M$ is only capable of performing computation in the complexity class $\mathbf{SPACE}(M^{O(1)})$, where memory is a measure of the amount of information the system can preserve from one timestep to the next. For dynamical systems with Gaussian noise of variance $\epsilon$, one can show that $M = O(\log \frac{1}{\epsilon})$; the SBCT thus suggests that such dynamical systems are limited to computations in $\mathbf{SPACE}(\poly\log\frac{1}{\epsilon})$, which is implied by Theorem \ref{lbnd}. See Appendix \ref{sec:sbct} for more details.

\subsection{Open Problems}

In this paper we focus exclusively on the case where the noise is Gaussian. It is straightforward to adapt the proofs in this paper to other choices of noise functions. It remains unclear, however, how the space complexity of computing the invariant measures of $f$ depends precisely on the noise function. More specifically, we would like to be able to answer the following problem.

\begin{problem}
Can we associate with every random perturbation a value $M$ so that computing the invariant measure of a dynamical system with this noise can be done in space $O(\poly(\log M + \log\log 1/\delta))$, and moreover that this is tight: given a random perturbation with value $M$, there is some function $f$ whose invariant measures subject to this random perturbation take space $\Omega(\poly(\log M + \log\log 1/\delta))$?
\end{problem}

For the case when the random perturbation is Gaussian with variance $\epsilon^2$, this note shows that it suffices to take $M = \epsilon^{-1}$ (or, in the $d$-dimensional case, $M = \epsilon^{-d}$).

\subsubsection*{Acknowledgments}
We would like to thank Eric Allender for his advice on space-bounded computation. 

\section{Preliminaries} \label{sec:prelim}

\subsection{Discrete-time dynamical systems}

We begin by giving a brief description of the relevant aspects of the theory of discrete time dynamical systems, largely following the notation of \cite{BGR}. For a complete treatment see for instance \cite{Wal82,Pet83,Man87}.

A \textit{dynamical system} is a metric space $X$ representing the set of possible states along with a map $f: X\rightarrow X$ representing the transitions between states. Given an initial state $x \in X$ of the system, the \textit{trajectory} of $x$ is the sequence $\{x, f(x), f(f(x)), \dots\}$. To avoid certain technical pathologies that can arise, throughout the course of this paper we will assume that $X$ is a compact Lebesgue-measurable subset of $\mathbb{R}^d$ and the function $f$ is continuous.

Given a probability measure $\mu$ over $X$, we can define the pushforward of $\mu$ under $f$ via $(f\mu)(A) = \mu(f^{-1}(A))$ for all events $A \subset X$. A probability measure $\mu$ is \textit{invariant} for the dynamical system if $f\mu = \mu$.

In this note, we focus on the case of dynamical systems with noise. Denote by $P(X)$ the set of Borel probability measures over $X$ under the weak convergence topology. A \textit{random perturbation $\mathcal{S}$ of $f$} is given by a family $\{Q_{x}\}_{x\in X} \in P(X)$ of probability measures over $X$ for each point in $x$ which each represent the `noise' at that point. Then, instead of a deterministic trajectory, $\mathcal{S}$ induces a Markov chain over $X$, where $\mathrm{Pr}[x_{t+1} \in A] = Q_{f(x_{t})}(A)$ for all Borel sets $A \subset X$. Likewise, the pushforward of a probability measure $\mu \in P(X)$ under $\mathcal{S}$ is defined by $(\mathcal{S}\mu)(A) = \int_{X}Q_{f(x)}(A)d\mu$. As before, $\mu$ is an \textit{invariant measure} of the random perturbation $\mathcal{S}$ of $f$ if $\mathcal{S}\mu = \mu$. 

For simplicity, throughout this paper we will assume that the domain $X$ is the $d$-dimensional cube $[0,1]^d$ (and for the majority of the discussion, we will focus on the case where $d$ equals $1$). Moreover, in all of our examples we will be concerned with the case of Gaussian noise with variance $2\epsilon^2$, where the measure $Q_{x}$ is defined (in the case $d=1$) by the probability density function

\begin{equation*}
K_{\epsilon}(y, x) = C_{\epsilon}(x)\frac{1}{\epsilon\sqrt{2\pi}}\exp(-(y-x)^2/2\epsilon^2)
\end{equation*}

\noindent
where $C_{\epsilon}(x)$ is a normalization factor so that $K_{\epsilon}(y,x)$ has measure 1 over $[0,1]$; specifically, $C_{\epsilon}(x)$ is given by

\begin{equation*}
C_{\epsilon}(x) = \left(\int_{0}^{1}\frac{1}{\epsilon\sqrt{2\pi}}\exp(-(y-x)^2/2\epsilon^2)dy\right)^{-1}
\end{equation*}

\noindent
Note that if $\mu(x)$ is the density function of a probability measure on $[0,1]$, then the density $\rho = \mathcal{S}\mu$ of the pushforward measure under $\mathcal{S}$ is given by

\begin{equation*}
\rho(x) = \int_{0}^{1} \mu(y) K_{\epsilon}(f(y), x) dy
\end{equation*}

For this reason (following the notation of \cite{BGR}), we will write $K_{f}(y, x)$ as shorthand for $K_{\epsilon}(f(y), x)$.  We will also write $p_{f(x)}^{\epsilon}$ to denote the family $Q_{f(x)}$ of probability measures for this dynamical system with noise (i.e. the probability measure induced by $K_{\epsilon}(y,f(x))$). 

\subsection{Space-bounded computation}\label{sect:spacebound}

The space complexity classes we consider in this paper are very small; they are (poly)logarithmic in the size of the output. To this end, we review some classic results from space-bounded computation. 


A function $f$ is \textit{log-space computable} if it can be computed by a Turing machine with a read-only input tape, a one-way write-only output tape, and a read-write work tape of size $O(\log n)$. The following functions are known to be log-space computable:

\begin{enumerate}[(a)]

\item The composition of a constant number of log-space functions. The composition of two functions $f(g(x))$ can be performed by dividing the work tape into two tapes of size $O(\log n)$, and using the second tape to compute the desired bit of $g(x)$ whenever it is required for $f(g(x))$. By induction, this can be extended to any constant-depth composition of log-space functions. \label{comp}

\item Addition of $\poly(n)$ $n$-bit integers. This can be done with via the standard grade-school addition algorithm (with some attention paid to how to represent carries). \label{add}

\item Multiplication of two $n$-bit integers. This follows from \ref{add} via the standard algorithm for long multiplication. \label{mult}

\item Multiplication of two $n$ by $n$ matrices, each of whose entries is an $n$-bit integer. This follows from \ref{add} and \ref{mult} (each entry is the sum of $n$ products of two $n$-bit integers). \label{matrixmult}

\item Division of two $n$-bit integers. This result is due to Chiu, Davida, and Litow \cite{CDL95}. The main idea of their proof is to represent both numbers in terms of their values modulo various small primes, perform the arithmetic operations modulo these small primes, and reconstruct the result via the Chinese Remainder Theorem. \label{div}

\item Multiplication of $\poly(n)$ $n$-bit integers. This can be done via the same technique of Chinese Remainder representation described in \ref{div} and is also described in \cite{CDL95}.\label{itmult}

\item Arithmetic operations on real numbers up to precision $2^{-\poly(n)}$. This follows from the preceding results (we need only additionally keep track of the location of the decimal/binary point, which requires at most a logarithmic amount of extra space). \label{realarith}

\item Computation of factorials and binomial coefficients. This follows from \ref{div} and \ref{itmult}. \label{facts}

\item Taking products, powers (with exponents of size $\poly(n)$), and compositions of polynomials with degree $\poly(n)$ and coefficients of size $\poly(n)$. This follows from \ref{facts}, \ref{itmult}, and \ref{add}. \label{polys}

\item Computing $\exp$, $\log$, and $\arctan$ of numbers to within precision $2^{-\poly(n)}$. This was originally shown by Alt in \cite{Alt84} (in all cases it suffices to approximate these functions via some sufficiently long prefix of their Taylor series). \label{funcs}

\item Computing $x^{E}$ to within precision $2^{-\poly(n)}$, where $x$ is a real number provided to precision $2^{-\poly(n)}$ and $E$ is a $\poly(n)$-bit integer. Again, this was shown by Alt in \cite{Alt84} and essentially follows from \ref{funcs} by writing $x^{E} = \exp(E\log x)$. For completeness, we provide a derivation of this fact in Appendix \ref{sect:numpow}.

\saveenum
\end{enumerate}

There are some operations which, while we do not know how to perform in a logarithmic amount of space, we do know how to perform in a polylogarithmic amount of space. These include:

\begin{enumerate}[(a)]
\resetenum

\item
Computing the composition of logarithmically many log-space functions. By similar logic as \ref{comp} above, this can be done in space $O(\log^2 n)$.

\item \label{matpow}
Computing $M^{\poly(n)}$ to within precision $2^{-\poly(n)}$, where and $M$ is an $n$-by-$n$ matrix of $\poly(n)$-bit entries. This can be done in space $O(\log^2 n)$ via repeated squaring (this is essentially the logic behind Savitch's theorem, see \cite{Sav70}).

\item \label{det}
Computing the determinant (and more generally, the coefficients of the characteristic polynomial) of an $n$-by-$n$ matrix $M$ with $\poly(n)$-bit integer entries. This can be done in space $O(\log^2 n)$ via a result of Buntrock, Damm, Hertrampf, and Meinel (see \cite{BDMH92}).

\item
Inverting an $n$-by-$n$ matrix $M$ with $\poly(n)$-bit integer entries. This follows from \ref{det} by expressing the inverse of $M$ in terms of the determinant of $M$ and cofactor matrix of $M$. 

\item
Computing all roots of a polynomial of degree $n$ with $\poly(n)$-bit integer coefficients to within precision $2^{-\poly(n)}$. This follows from a result of Neff and Reif; their algorithm uses space $O(\log^7 n)$ (see \cite{NR96}). \label{roots}

\item
Computing the eigenvalues of an $n$-by-$n$ matrix $M$ with $\poly(n)$-bit integer entries to within precision $2^{-\poly(n)}$. This follows from \ref{roots} and \ref{det} by computing the roots of the characteristic polynomial of $M$. \label{eigenvals}

\end{enumerate}

It should be noted that many of these operations, when restricted to polylogarithmic space, require (to the best of our knowledge) superpolynomial running times. In particular, the above algorithm for matrix exponentiation (and more generally, Savitch's algorithm for STCONN) requires time $O(2^{\log^2 n})$. It is open whether every function computable in polylogarithmic space can be computed simultaneously in polylogarithmic space and polynomial time. We therefore cannot ensure the same time bound as in the original statement of Theorem C in \cite{BGR}.

\subsection{Real computation}

Throughout the rest of the paper (and particularly in the next section) we will often have to work with binary representations of real numbers. We summarize in this section some common notation we use in the remainder of the paper.

A real number $x$ is \textit{given up to precision $2^{-n^c}$} if $x$ is given as an integer multiple of $2^{-n^c}$. We say $x$ is \textit{given up to precision $2^{-\poly(n)}$} if it is given up to precision $2^{-n^c}$ for some $c$. We further assume that all numbers given this way are also bounded above in magnitude by $2^{\poly(n)}$. 

We say we can compute a function $f(x)$ up to precision $\delta$ if there is an algorithm which, when provided with $x$ up to a sufficiently high precision, computes a dyadic number $x'$ such that $|x'-f(x)| \leq \delta$. We say we can compute a function $f(x)$ up to precision $2^{-\poly(n)}$ in polylogarithmic (alternatively, logarithmic) space if, for each positive integer $c$, we can compute $f(x)$ up to precision $2^{-n^{c}}$ in space $O(\poly(\log n))$ (alternatively, $O(\log n)$), where the degree of the polynomial is independent of $c$.

In the statement of Theorem \ref{ubnd}, we require that the function $f(x)$ is $(\log \epsilon^{-1})+\log$-space integrable. Formally, a function $f:\R\rightarrow\R$ is \textit{$S+\log$-space integrable}, if, given an interval $[a,b]$ (with $a$ and $b$ both given up to precision $2^{-\poly(n)}$) and a polynomial $p(x)$ of degree $\poly(n)$ whose coefficients are all given to precision $2^{-\poly(n)}$, it is possible to compute the integral $\int_{a}^{b} f(x)p(x)dx$ to within precision $2^{-\poly(n)}$ in space $O(S + \log n)$. In the higher dimensional case where $f$ is a function from $\R^{d}$ to $\R^{d}$, the interval $[a,b]$ is replaced by the box $[a_1, b_1]\times \dots \times [a_d, b_d]$. 

Finally, we define what we mean by the computation of an invariant measure of a dynamical system. We say a measure $\mu'$ agrees with a measure $\mu$ up to precision $\delta$ if the total variation distance between $\mu$ and $\mu'$ is at most $\delta$. If measures $\mu$ and $\mu'$ are given by density functions, we will write $||\mu - \mu'||_{\infty}$ to denote the $L_{\infty}$ distance between the two density functions; note that since the size of our domain is normalized to $1$, if $||\mu - \mu'||_{\infty} \leq \delta$, then the total variation distance between $\mu$ and $\mu'$ is also at most $\delta$. We say we can compute a measure $\mu$ in space $O(S)$ if, for any interval $[a, b]$, we can approximate the weight of $\mu$ over $[a,b]$ to within precision $2^{-n}$ in space $O(S + \log n)$ (again, in the $d$-dimensional case, we replace the interval $[a,b]$ with the box $[a_1, b_1] \times \dots \times [a_d, b_d]$).  

\section{Exponentiating matrices to large powers} \label{sect:matpow}

Iterative squaring allows us to compute powers of $n$-bit matrices up to exponents that are polynomial in $n$ in polylogarithmic space. Proving Theorem \ref{ubnd}, however, requires us to be able to exponentiate numbers and matrices up to exponents of size potentially exponential in $n$.

In this section we demonstrate how to raise matrices to exponentially large exponents using a polylogarithmic amount of space. In particular, we prove the following theorem. 

(Throughout this section, we take the norm $||M||$ of a matrix $M$ to be the maximum norm, i.e. the maximum absolute value of an entry of $M$). 

\begin{theorem}\label{matpowers}
Given an $n$ by $n$ matrix $M$ whose entries are given up to precision $2^{-\poly(n)}$ and an integer exponent $E = O(2^{\poly(n)})$, there exists an algorithm that computes $M^{E}$ in space $O(\poly(\log n))$ to within precision $2^{-\poly(n)}$ if $||M^{E}|| \leq 2^{n}$ (and otherwise reports that $||M^{E}|| > 2^{n}$). 
\end{theorem}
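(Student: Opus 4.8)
The plan is to approximate $M^E$ by evaluating a single low-degree polynomial $p(M)$, where $p$ is chosen so that $p(x) \approx x^E$ on all of the relevant spectrum of $M$. The key structural observation is that if we can find a polynomial $p$ of degree $\poly(n)$ with $\poly(n)$-bit dyadic coefficients such that $|p(x) - x^E|$ is tiny for every eigenvalue $x$ of $M$ (and, more carefully, such that $p$ is close to $x \mapsto x^E$ in a strong enough sense to control the nilpotent/Jordan part), then $p(M)$ can be computed in space $O(\poly(\log n))$ by item~\ref{polys} of Section~\ref{sect:spacebound} (products, powers, and sums of polynomials of $\poly(n)$ degree), applied to matrices via item~\ref{matrixmult}. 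So the whole difficulty is pushed into (i) producing such a $p$ in small space and (ii) proving the error bound.

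First I would reduce to controlling the spectrum. Using item~\ref{eigenvals}, compute all eigenvalues $\lambda_1, \dots, \lambda_n$ of $M$ to precision $2^{-\poly(n)}$ in space $O(\poly(\log n))$. If any $|\lambda_i|$ exceeds $1 + 2^{-\poly(n)}$ by a safe margin, then (since $\|M^E\|$ grows like $\max_i |\lambda_i|^E$ up to polynomial-in-$n$ Jordan-block factors, and $E$ may be exponential) $\|M^E\|$ will exceed $2^n$ and we report this; conversely if every $|\lambda_i| \le 1 + n^{-c}/E$-ish we can still have controlled growth. The delicate regime is eigenvalues on or extremely near the unit circle, where $\lambda_i^E$ does not decay and must actually be approximated. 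Here I would first handle the case where $M$ is diagonalizable with well-separated behavior, and then argue the general case by a perturbation/Jordan-form argument: write $M = D + N$ block-wise, note $M^E = \sum_{k} \binom{E}{k} D^{E-k} N^k$ truncates after $k < n$ terms since $N$ is nilpotent, and each $\binom{E}{k}$ with $k < n$ is computable in small space by item~\ref{facts}. This reduces everything to approximating scalar powers $\lambda^{E-k}$ for the eigenvalues $\lambda$ and small $k$ — and scalar powers $x^E$ to precision $2^{-\poly(n)}$ are exactly item~(l) of Section~\ref{sect:spacebound} (Alt's result, via $x^E = \exp(E \log x)$), which works even for $E$ of size $2^{\poly(n)}$.

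The cleanest route, avoiding explicit Jordan form (which is not obviously small-space), is interpolation: build $p$ as the unique polynomial of degree $< n$ agreeing with $x \mapsto x^E$ together with enough derivatives at each distinct eigenvalue (a Hermite interpolation matching multiplicities), so that $p(M) = M^E$ exactly in exact arithmetic by the theory of matrix functions. The values $x^E$ and the derivatives $\tfrac{d^j}{dx^j} x^E = \tfrac{E!}{(E-j)!} x^{E-j}$ at each $\lambda_i$, for $j < n$, are all computable to precision $2^{-\poly(n)}$ in small space (scalar powers plus falling factorials). The divided-difference / Lagrange-Hermite coefficients of $p$ are then rational expressions in these values and the $\lambda_i$, computable in space $O(\poly(\log n))$ by items~\ref{div}, \ref{itmult}, \ref{realarith}. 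Finally evaluate $p(M)$ and output it.

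The main obstacle is \textbf{numerical stability / error propagation}. We only know the $\lambda_i$ approximately, so $p$ is an approximate interpolant of an approximate spectrum, and the interpolation coefficients can be astronomically large when eigenvalues are clustered — a naive bound would need precision far worse than $2^{-\poly(n)}$. The fix I anticipate: (a) exploit that $\|M^E\| \le 2^n$ is promised, which already constrains how badly-conditioned the situation can be (in particular it bounds $\|p(M)\|$ we are targeting); (b) work instead with the slightly-perturbed polynomial directly on $M$ rather than resolving individual close eigenvalues — i.e. bound $\|p(M) - M^E\|$ via a contour-integral (Cauchy) representation $M^E = \frac{1}{2\pi i}\oint z^E (zI - M)^{-1}\,dz$ on a circle of radius $1 + \Theta(1/\poly)$, using that $(zI-M)^{-1}$ is computable in small space (item: matrix inverse) and has norm controlled by the bound on $\|M^E\|$ and the characteristic polynomial's coefficients (item~\ref{det}); the integral is then a small-space-computable finite quadrature because $z^E$, while huge in degree, has $|z^E| \le (1+1/\poly)^E$ which is $2^{\poly(n)}$-bounded precisely in the regime where the answer is $\le 2^n$. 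Getting the quantitative precision bookkeeping in this last step — choosing the contour radius, the quadrature resolution, and the working precision so that everything stays $2^{\poly(n)}$ and $\poly(\log n)$-space — is where the real work lies, and I would isolate it as a self-contained lemma.
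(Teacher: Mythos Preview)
Your high-level strategy --- replace $M^E$ by $p(M)$ for a degree-$<n$ polynomial $p$ interpolating $x\mapsto x^E$ on the spectrum, compute the eigenvalues and the scalar powers $\lambda_i^E$ in small space, then evaluate $p(M)$ --- is exactly the paper's approach. You also correctly put your finger on the real obstacle: when eigenvalues cluster, the Lagrange/Hermite interpolation coefficients can blow up far beyond $2^{\poly(n)}$, and neither the promise $\|M^E\|\le 2^n$ nor a Hermite variant by itself prevents this (two eigenvalues can be distinct yet $2^{-2^n}$-close). What you are missing is the paper's device for eliminating this obstacle: before doing anything else, perturb $M$ to a nearby matrix $M_0 = (1-t)M + tD$ (with $D=\mathrm{diag}(1,\dots,n)$ and $t$ a tiny dyadic chosen so the discriminant of the characteristic polynomial is nonzero) whose entries are still exact dyadics of precision $2^{-\poly(n)}$. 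Because $M_0$ has integer entries after rescaling, the classical root-separation bound (Collins/Mahler) forces \emph{every} pair of distinct eigenvalues of $M_0$ to be at least $2^{-\poly(n)}$ apart. That single reduction makes plain Lagrange interpolation well-conditioned: the denominators $\prod_{j\neq i}(\lambda_i-\lambda_j)$ are at worst $2^{-\poly(n)}$, so all coefficients of $p$ are $2^{\poly(n)}$-bounded, and a short argument (the paper's Lemma on $U^{-1}D'U$) converts the scalar accuracy $|p(\lambda_i)-\lambda_i^E|$ into a matrix bound on $\|p(M_0)-M_0^E\|$.

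Your fallback via the Cauchy integral $M^E=\tfrac{1}{2\pi i}\oint z^E(zI-M)^{-1}dz$ is an interesting alternative, but as stated it has the same gap: to keep $|z|^E\le 2^{\poly(n)}$ you must take the contour within $O(n/E)$ of the spectral radius, and then $\|(zI-M)^{-1}\|$ is governed by the distance from $z$ to the nearest eigenvalue, which you have no a priori lower bound on without something like the root-separation argument above. So either route needs that perturbation-plus-separation step (or an equivalent conditioning lemma) to close.
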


Our general approach will be to construct a polynomial $p(x)$ of degree at most $n$ such that, for each eigenvalue $\lambda$ of $M$, $p(\lambda) \approx \lambda^{E}$.  It will then follow that $p(M) \approx M^{E}$. 

We first show that we can reduce Theorem \ref{matpowers} to the case where $M$ is diagonalizable with $n$ distinct eigenvalues. 

\begin{theorem}\label{nonsingdiag}
Given any $n$ by $n$ matrix $M$ whose entries are given up to precision $2^{-\poly(n)}$, an integer exponent $E \leq 2^{\poly(n)}$ (that satisfies $||M^{E}|| \leq 2^n$) and a precision $\delta = \Omega(2^{-\poly(n)})$, there exists an algorithm that computes in space $O(\poly(\log n))$ a matrix $M_0$ with entries provided to precision $2^{-\poly(n)}$ such that $M_0$ has $n$ distinct eigenvalues and $||M^{E} - M_0^{E}|| \leq \delta$. 
\end{theorem}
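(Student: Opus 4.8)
The plan is to perturb $M$ by a small random (or carefully chosen deterministic) matrix so that the result has $n$ distinct eigenvalues, and then argue that this perturbation changes $M^E$ by at most $\delta$. The first step is to find, in polylogarithmic space, a matrix $M_0$ close to $M$ with distinct eigenvalues. One clean way: set $M_0 = M + t D$ where $D = \mathrm{diag}(1, 2, \dots, n)$ (or $\mathrm{diag}(1, \beta, \beta^2, \dots, \beta^{n-1})$ for a suitable $\beta$) and $t$ ranges over a small finite set of dyadic values of size $2^{-\poly(n)}$. The characteristic polynomial of $M + tD$ is a polynomial in $t$ (of degree $n$) whose discriminant is a nonzero polynomial in $t$ (it is not identically zero because, e.g., for $t$ large the eigenvalues are dominated by the distinct diagonal entries $t \cdot i$). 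A nonzero polynomial of degree $\poly(n)$ with $\poly(n)$-bit coefficients has at most $\poly(n)$ roots, and its coefficients can be computed in space $O(\log^2 n)$ using fact~\ref{det} (coefficients of the characteristic polynomial), so by testing $\poly(n)+1$ candidate dyadic values of $t$ of magnitude at most (say) $2^{-\poly(n)}$ and checking which ones make the discriminant nonzero — discriminant computation is again a determinant/resultant, computable in space $O(\poly(\log n))$ — we can locate a valid $t$. Fix such a $t$ and set $M_0 = M + tD$; its entries are given to precision $2^{-\poly(n)}$ and it has $n$ distinct eigenvalues.

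The second step is the quantitative perturbation bound: we need $\|M^E - M_0^E\| \le \delta$. Write $M_0 = M + \Delta$ with $\|\Delta\| \le t n \le 2^{-\poly(n)}$, and expand $M_0^E - M^E$ as a telescoping sum $\sum_{k=0}^{E-1} M_0^{k} \Delta\, M^{E-1-k}$. Each term has norm at most $n^2 \|M_0\|^k \|\Delta\| \|M\|^{E-1-k}$ (with an extra factor of $n$ per matrix product in the max-norm), so the whole sum is bounded by roughly $E \cdot n^{O(1)} \cdot \|\Delta\| \cdot \max_k \|M_0\|^k \|M\|^{E-1-k}$. The hypothesis gives $\|M^E\| \le 2^n$, and since the entries of $M$ are bounded by $2^{\poly(n)}$ we also have $\|M\| \le 2^{\poly(n)}$; one then needs to control the intermediate powers $\|M_0^k\|$ and $\|M\|^{j}$ for $0 \le k, j < E$. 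Here the key point is that if $\|M^E\| \le 2^n$ then all intermediate powers $\|M^j\|$ for $j \le E$ are also at most $2^{\poly(n)}$ — this is essentially because the spectral radius of $M$ is at most $(2^n)^{1/E}$ up to the usual Jordan-block polynomial-in-$j$ correction, combined with the a~priori entrywise bound $\|M\| \le 2^{\poly(n)}$ which handles small $j$. The same reasoning, together with continuity of the powers under the small perturbation $\Delta$, gives $\|M_0^k\| \le 2^{\poly(n)}$ for all $k \le E$. Hence $\|M_0^E - M^E\| \le E \cdot 2^{\poly(n)} \cdot \|\Delta\|$, and since $E \le 2^{\poly(n)}$, choosing $\|\Delta\|$ (i.e.\ $t$) small enough — $t \le \delta \cdot 2^{-\poly(n)}$, which is still $\Omega(2^{-\poly(n)})$ given $\delta = \Omega(2^{-\poly(n)})$ — makes this at most $\delta$. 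We must double-check that shrinking the allowed magnitude of $t$ to this level still leaves room for $\poly(n)+1$ distinct candidate values not hitting the $\poly(n)$ discriminant roots; it does, since we can space the candidates at resolution $2^{-\poly(n)}$ within an interval of length $2^{-\poly(n)}$.

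I expect the main obstacle to be the uniform control of the intermediate powers $\|M^j\|$ and $\|M_0^j\|$ for $j$ up to the exponential bound $E$. The difficulty is that $\|M^E\| \le 2^n$ does not immediately bound $\|M^j\|$ for $j < E$ in the generic matrix norm — a matrix can have large intermediate powers that eventually shrink (transient growth), and with a non-trivial Jordan structure the powers can grow polynomially in $j$ even when the spectral radius is exactly $1$. The resolution is to split into cases by spectral radius $\rho(M)$: if $\rho(M) \le 1$ then $\|M^j\|$ grows at most polynomially in $j$ times a constant depending on the Jordan blocks, which is $2^{\poly(n)}$ since $j \le E \le 2^{\poly(n)}$ and the Jordan constants are controlled by the entrywise bound on $M$; if $\rho(M) > 1$ then $\|M^j\|$ is essentially increasing (again up to polynomial-in-$j$ Jordan factors), so it is dominated by $\|M^E\| \le 2^n$ for $j \le E$. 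In all cases $\|M^j\| \le 2^{\poly(n)}$, and the perturbation $M_0$ inherits the same bound because $\|M_0\| \le \|M\| + 1 \le 2^{\poly(n)}$ handles small $j$ directly while for larger $j$ we can re-run the telescoping argument inductively (bounding $\|M_0^{j}\|$ in terms of $\|M^{j}\|$ and already-established bounds on $\|M_0^{k}\|$ for $k < j$, with the perturbation contribution kept negligible by the choice of $t$). Once this uniform $2^{\poly(n)}$ bound on intermediate powers is in hand, the rest is the routine telescoping estimate above, and the space complexity is dominated by the characteristic-polynomial and discriminant computations, both $O(\poly(\log n))$.
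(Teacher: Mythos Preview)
Your approach is essentially the paper's: perturb by $tD$ with $D=\mathrm{diag}(1,\dots,n)$, observe that the discriminant of the characteristic polynomial of $M+tD$ is a nonzero polynomial in $t$ of degree $\le n(n-1)$ (the paper uses $M(t)=(1-t)M+tD$ and checks $p(1)\neq 0$), and pick one of $n(n-1)+1$ equally spaced dyadic values of $t$ in a window of size $2^{-\poly(n)}$ that avoids the roots. You have in fact been \emph{more} careful than the paper on the perturbation estimate: the paper just writes $\|M^E-M(t)^E\|\approx\|t(D-M)\|\cdot\|M^{E-1}\|$ with an $\approx$ sign and moves on, whereas you correctly flag that the telescoping bound needs uniform control of all intermediate powers $\|M^j\|$ and $\|M_0^j\|$ for $j\le E$.

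The one soft spot in your resolution is the appeal to Jordan form: the claim that ``the Jordan constants are controlled by the entrywise bound on $M$'' is not justified, because the similarity $P$ in $M=PJP^{-1}$ can have condition number unrelated to (and far exceeding any polynomial in) $\|M\|$. Both branches of your spectral-radius dichotomy implicitly carry a factor of $\|P\|\cdot\|P^{-1}\|$, which you have not bounded. The fix is to use the \emph{Schur} form $M=QTQ^{*}$ with $Q$ unitary instead: then $\|M^j\|$ and $\|T^j\|$ agree up to a factor of $n$, and $(T^j)_{pq}$ is a sum over monotone index paths $p=i_0\le\cdots\le i_j=q$, each contributing at most $n-1$ off-diagonal factors (bounded by $\|T\|\le n\|M\|\le 2^{\poly(n)}$) times a product $\prod_m\lambda_{v_m}^{e_m}$ over at most $n$ distinct eigenvalues, where each factor is either $\le 1$ (if $|\lambda|\le 1$) or $\le|\lambda|^E\le\|M^E\|_2\le n2^n$ (if $|\lambda|>1$). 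The number of such paths is $\sum_{k\le n-1}\binom{j}{k}\binom{n}{k}\le 2^{\poly(n)}$ since $j\le E\le 2^{\poly(n)}$, giving $\|M^j\|\le 2^{\poly(n)}$ uniformly. With this in hand, your inductive transfer to $\|M_0^j\|$ and the final telescoping estimate go through exactly as you wrote.
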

\begin{proof}
Let $D$ be the diagonal matrix $\mathrm{diag}(1, 2, 3, \dots, n)$, and set 

\begin{equation*}
M(t) = M(1-t) + Dt
\end{equation*}

Let $p(t)$ be the discriminant of the characteristic polynomial of the matrix $M(t)$; that is, if $\lambda_i(t)$ are the roots of the characteristic polynomial of $M(t)$, then

\begin{equation}\label{eq:disc}
p(t) = \prod_{i<j} (\lambda_i(t)-\lambda_j(t))^2
\end{equation}

It is known that the discriminant of a polynomial $P(x)$ of degree $d$ can be computed as the determinant of a $(2d-1)$ by $(2d-1)$ matrix whose entries are coefficients of $P(x)$ (see for instance \cite{GKZ94}). Since the coefficients of the characteristic polynomial matrix are in turn polynomials in the entries of $M$, it follows that $p(t)$ is a polynomial in $t$. Moreover, by equation \ref{eq:disc}, scaling a matrix by some multiplicative factor $c$ multiplies the discriminant of the characteristic polynomial of this matrix by a factor of $c^{n(n-1)}$; it follows that the discriminant of the characteristic polynomial of a matrix is a homogeneous polynomial of degree $n(n-1)$ in the entries of the matrix, and therefore $p(t)$ has degree at most $n(n-1)$. Finally, since we can compute determinants and characteristic polynomials of matrices in polylogarithmic space (by remark \ref{det} in Section \ref{sect:spacebound}), we can compute $p(t)$ in polylogarithmic space.

Note that since $M(1) = D$, it follows that $p(1) = \prod_{i<j} (i-j)^2 \neq 0$, and therefore that $p(t)$ is not identically $0$. Now, let $t_0$ be the largest power of $2$ satisfying

\begin{equation*}
t_{0} = 2^{-e_0}  \leq \frac{\delta}{100 n(n-1) 2^n E^2 ||D - M||}
\end{equation*}

\noindent
and consider the $n(n-1)+1$ values $t = kt_0$ where $k$ ranges from $0$ to $n(n-1)$ inclusive. Since $p(t)$ is a polynomial of degree $n(n-1)$ that is not identically $0$, it can have at most $n(n-1)$ roots, so for at least one of these choices of $k$, $p(t) \neq 0$. Since $p(t)$ is non-zero, no two eigenvalues of $M(t)$ are equal. On the other hand, for this value of $t$, note that

\begin{eqnarray*}
||M^{E} - M(t)^{E}|| &\leq & \left|\left| M^{E} - \left(M + (D-M)\frac{\delta k}{100 n(n-1) 2^{n} E^2 ||D - M||} \right)^{E}\right|\right| \\
&\approx & \left|\left| \frac{\delta k (D-M)}{100 n(n-1)2^{n} ||D-M||} \right| \right|  ||M^{E-1}|| \\
& \leq & \dfrac{\delta}{100}
\end{eqnarray*}

\noindent
It therefore suffices to take $M_0 = M(t)$. Since $t_0 = 2^{-\poly(n)}$, the entries of $M_0$ are all given to precision $2^{-\poly(n)}$, as desired.

\end{proof}

We next cite the following technical lemma about the minimum distance between distinct eigenvalues of $M$.

\begin{lemma} \label{rootdist}
Let $p(x)$ be a degree $n$ polynomial whose coefficients are integers all with absolute value at most $A$. Then for any two distinct roots $r_i\neq r_j$ of $p(x)$,

\begin{equation}
|r_i - r_j| \geq 2nA^{-n^2}
\end{equation}
\end{lemma}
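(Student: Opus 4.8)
The plan is to bound $|r_i - r_j|$ from below by relating it to the discriminant of $p(x)$, which is an integer (up to a unit) and hence, if nonzero, has absolute value at least $1$. First I would recall that the discriminant of $p(x)$ is
\[
\mathrm{disc}(p) = a_n^{2n-2} \prod_{k < \ell} (r_k - r_\ell)^2,
\]
where $a_n$ is the leading coefficient. Since the coefficients of $p$ are integers, $a_n^{2n-2}\prod_{k<\ell}(r_k-r_\ell)^2$ is (up to sign) the resultant of $p$ and $p'$, which is an integer; because the roots $r_i$ and $r_j$ are distinct, this integer is nonzero, so $|\mathrm{disc}(p)| \ge 1$. (If one prefers to avoid the normalization subtleties with $a_n$, one may instead argue directly that $\prod_{k<\ell}(r_k-r_\ell)^2$ times a suitable power of $a_n$ is the determinant of the integer Sylvester matrix of $p$ and $p'$.)

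Next I would isolate the factor $(r_i - r_j)^2$ and bound all the other factors from above. Each root $r_k$ of $p$ satisfies $|r_k| \le 1 + A/|a_n| \le 1 + A \le 2A$ by the standard Cauchy bound on the magnitude of polynomial roots (using $|a_n|\ge 1$). Hence every difference satisfies $|r_k - r_\ell| \le 4A$, and similarly $|a_n| \le A$. There are $\binom{n}{2} \le n^2/2$ pairs in the product; separating out the $(i,j)$ term,
\[
1 \le |\mathrm{disc}(p)| = |a_n|^{2n-2}\, |r_i - r_j|^2 \prod_{\{k,\ell\}\neq\{i,j\}} |r_k - r_\ell|^2 \le A^{2n-2}\, |r_i - r_j|^2 \, (4A)^{2(\binom{n}{2}-1)}.
\]
Solving for $|r_i - r_j|$ gives a lower bound of the form $c_n A^{-m}$ with an exponent $m \le n^2$ (for $n$ large; the small cases can be absorbed into the constant), and the crude constant $2n$ in the statement comfortably dominates whatever rational power of $4$ and $n$ appears, so one obtains $|r_i - r_j| \ge 2n A^{-n^2}$ as claimed.

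The main obstacle — really the only delicate point — is making sure the power of $A$ that emerges is genuinely at most $n^2$ rather than, say, $2n^2$, and getting the leading constant to work out to exactly $2n$ rather than some messier expression. This is a matter of carefully tallying the exponents: $2n-2$ from $|a_n|^{2n-2}$ and $2(\binom{n}{2}-1) = n^2 - n - 2$ from the remaining root differences, for a total of $n^2 + n - 4 \le n^2$ after dividing by the $2$ from the square — so in fact the exponent on $A$ is at most $(n^2+n-4)/2 \le n^2/2 \le n^2$, giving ample slack. The factor $(4A)$ versus $A$ and the $(4)^{\cdots}$ constant only help once absorbed, so the constant $2n$ in the statement is not tight and the inequality follows with room to spare; alternatively, this is a standard "root separation" bound (a Mahler–Mignotte type estimate) and one could simply cite it, but the short self-contained discriminant argument above suffices.
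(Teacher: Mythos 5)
The paper itself does not prove this lemma --- it simply cites \cite{Col01} --- and the discriminant argument you sketch is indeed the standard route to such Mahler--Mignotte root-separation bounds, so the approach is reasonable. But there are two genuine gaps. First, $\mathrm{disc}(p) = 0$ whenever $p$ has \emph{any} repeated root, not merely when $r_i = r_j$ coincide, so your step ``because the roots $r_i$ and $r_j$ are distinct, this integer is nonzero'' is false in general: the hypothesis only requires that the two roots under consideration be distinct. To make the argument go through one must first pass to the squarefree part of $p$ (an integer polynomial dividing $p$, with coefficients controllable via a Landau--Mignotte type factor bound), and that step needs to be stated and its effect on $A$ tracked.

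Second, the constant bookkeeping in your final paragraph runs exactly backwards. Solving your displayed inequality gives
\[
|r_i - r_j| \;\geq\; 4^{-(n^2-n-2)/2}\, A^{-(n^2+n-4)/2},
\]
so the multiplicative constant you actually derive is $4^{-\Theta(n^2)}$, shrinking super-exponentially in $n$, whereas the lemma asserts a constant $2n$ that \emph{grows} with $n$; the factor $4$ in $(4A)$ hurts the lower bound, it does not ``help once absorbed.'' The slack $A^{n^2-(n^2+n-4)/2}$ cannot rescue this uniformly --- take $A=1$. Indeed, the lemma as printed is false: $p(x) = x^n - 1$ has $A = 1$ and minimal root separation $2\sin(\pi/n) < 2n$ for every $n \geq 2$. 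The constant in the statement is evidently a typo (it should shrink with $n$, e.g.\ be of the form $2^{-\poly(n)}$, which is all that Corollary~\ref{eigenvaldist} actually uses), and once corrected your computation --- together with the squarefree fix --- does yield a bound of the right shape. As written, though, the proposal both relies on a nonvanishing claim that does not follow from the hypothesis and asserts an inequality between constants in exactly the wrong direction.
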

\begin{proof}
See \cite{Col01}.
\end{proof}
\begin{corollary}\label{eigenvaldist}
Let $M$ be an $n$ by $n$ matrix whose entries are provided to precision $2^{-\poly(n)}$ and are at most $2^{\poly(n)}$ in absolute value. Then for any two distinct eigenvalues $\lambda_i \neq \lambda_j$ of $M$, $|\lambda_i - \lambda_j| \geq 2^{-\poly(n)}$.
\end{corollary}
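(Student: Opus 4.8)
The plan is to derive Corollary~\ref{eigenvaldist} from Lemma~\ref{rootdist} by applying the latter to the characteristic polynomial of $M$, after first clearing denominators so that its coefficients are integers of controlled magnitude. The eigenvalues $\lambda_i$ are precisely the roots of the characteristic polynomial $\chi_M(x) = \det(xI - M)$, so if I can exhibit an integer polynomial with the same roots and polynomially-bounded coefficients, the bound $|\lambda_i - \lambda_j| \geq 2^{-\poly(n)}$ follows immediately.

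First I would handle the rationality of the entries. By hypothesis the entries of $M$ are given to precision $2^{-\poly(n)}$, i.e. each is an integer multiple of $2^{-N}$ for a common $N = n^{c}$, and each has absolute value at most $2^{\poly(n)}$. Multiply through: the matrix $\widetilde{M} = 2^{N} M$ has integer entries, each bounded in absolute value by $2^{N + \poly(n)} = 2^{\poly(n)}$. Its eigenvalues are $2^{N}\lambda_i$, so it suffices to lower bound $|2^{N}\lambda_i - 2^{N}\lambda_j|$ and divide by $2^{N} = 2^{\poly(n)}$ at the end; this only costs another polynomial in the exponent.

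Next I would bound the coefficients of $\chi_{\widetilde{M}}(x) = \det(xI - \widetilde{M})$, which is a monic degree-$n$ polynomial with integer coefficients. Each coefficient is (up to sign) a sum of $k\times k$ principal minors of $\widetilde{M}$; a $k\times k$ determinant of a matrix with entries bounded by $B$ is at most $k! \, B^{k} \leq n!\, B^{n}$ in absolute value, and there are at most $\binom{n}{k} \leq 2^{n}$ such minors, so every coefficient is an integer of absolute value at most $2^{n} n! B^{n} =: A$. With $B = 2^{\poly(n)}$ this gives $A = 2^{\poly(n)}$. Now apply Lemma~\ref{rootdist} to $\chi_{\widetilde{M}}$: for distinct roots $2^{N}\lambda_i \neq 2^{N}\lambda_j$ we get $|2^{N}\lambda_i - 2^{N}\lambda_j| \geq 2n A^{-n^{2}} = 2^{-\poly(n)}$, since $A^{-n^2} = 2^{-\poly(n)}$. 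Dividing by $2^{N}$ yields $|\lambda_i - \lambda_j| \geq 2^{-\poly(n)}$, as claimed.

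There is essentially no serious obstacle here; the only point requiring a little care is making sure the "$\poly(n)$" in the precision and magnitude hypotheses is uniform across all entries (which is part of the standing convention in Section~\ref{sec:prelim} on real computation), and then tracking that the composition of polynomially-many polynomial blow-ups — clearing denominators, bounding the determinantal coefficients, and the $A^{-n^2}$ term in the lemma — still leaves the final exponent polynomial in $n$. All of these are routine, so the corollary follows.
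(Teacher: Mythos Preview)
Your proof is correct and follows essentially the same approach as the paper: scale $M$ by $2^{N}$ to make it an integer matrix, observe that the characteristic polynomial then has integer coefficients of size $2^{\poly(n)}$, apply Lemma~\ref{rootdist}, and divide back by $2^{N}$. Your version is slightly more explicit in bounding the coefficients via principal minors, but the argument is the same.
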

\begin{proof}
If the entries of $M$ are provided to within precision $2^{-a(n)}$, consider $2^{a(n)}M$. This is an integer matrix whose entries are all  of size at most $2^{\poly(n)}$. It follows that the coefficients of the characteristic polynomial of this matrix have absolute value at most $2^{\poly(n)}$, and hence (by Lemma \ref{rootdist}),

\begin{equation*}
|2^{a(n)}\lambda_i - 2^{a(n)}\lambda_j| \geq 2n\left(2^{\poly(n)}\right)^{-n^2} = 2^{-\poly(n)}
\end{equation*}

\noindent
and hence

\begin{equation*}
|\lambda_i - \lambda_j| \geq 2^{-\poly(n)}
\end{equation*}
\end{proof}

Finally, we prove the following lemma bounding the size of the matrices related to the eigendecomposition of a matrix $M$.

\begin{lemma}\label{eigdecompbnd}
Let $M$ be an $n$ by $n$ non-singular matrix with distinct eigenvalues whose entries are provided to precision $2^{-\poly(n)}$, and let $D'$ be a diagonal matrix all of whose diagonal entries have absolute value at most $1$. Then if we write $M = U^{-1}DU$, the matrix $M' = U^{-1}D'U$ has entries at most $2^{\poly(n)}$.
\end{lemma}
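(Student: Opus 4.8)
The plan is to bound the entries of $M' = U^{-1}D'U$ by controlling the conditioning of the diagonalizing matrix $U$, i.e.\ by bounding $\|U\|$ and $\|U^{-1}\|$ separately, and then using $\|M'\| \le n^2 \|U^{-1}\|\,\|D'\|\,\|U\| \le n^2 \|U^{-1}\|\,\|U\|$ (since the diagonal entries of $D'$ are at most $1$ in absolute value). So the task reduces to showing that $M$, being a matrix with $\poly(n)$-precision entries and $n$ distinct eigenvalues, admits a diagonalization $M = U^{-1}DU$ with $\|U\|, \|U^{-1}\| \le 2^{\poly(n)}$.

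First I would take the columns of $U^{-1}$ to be eigenvectors of $M$, normalized appropriately. For each eigenvalue $\lambda_i$, an eigenvector $v_i$ is a nonzero solution of $(M - \lambda_i I)v_i = 0$; since the $\lambda_i$ are distinct, each $\lambda_i$ has a one-dimensional eigenspace, and one can pick $v_i$ to be (up to scaling) a vector of cofactors of an appropriate $(n-1)\times(n-1)$ submatrix of $M - \lambda_i I$. By Corollary~\ref{eigenvaldist}, the eigenvalues are separated by at least $2^{-\poly(n)}$ and (since $\|M^E\|$, $E$ are controlled, or simply since the entries of $M$ are at most $2^{\poly(n)}$) bounded by $2^{\poly(n)}$ in absolute value. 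Hence the entries of $M - \lambda_i I$ are $2^{\poly(n)}$-bounded, so the cofactors are $2^{\poly(n)}$-bounded, giving an eigenvector $v_i$ with entries bounded by $2^{\poly(n)}$; normalizing so that $\|v_i\|=1$ keeps the entries in $[-1,1]$. This yields $\|U^{-1}\| \le 1$ after normalization (or at worst $2^{\poly(n)}$), which is the easy half.

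The harder half is bounding $\|U\| = \|(U^{-1})^{-1}\|$, i.e.\ showing the eigenvectors are not nearly linearly dependent. For this I would use the classical lower bound on $|\det U^{-1}|$: the columns are eigenvectors for distinct eigenvalues, and one can relate $\det U^{-1}$ to a Vandermonde-type quantity (or argue directly via the separation of eigenvalues) to show $|\det U^{-1}| \ge 2^{-\poly(n)}$. Concretely, since $U^{-1} = \mathrm{adj}(U^{-1})/\det(U^{-1}) \cdot \det(U^{-1})$, we have $U = \mathrm{adj}(U^{-1})^T/\det(U^{-1})$ up to transposition conventions, so $\|U\| \le \|\mathrm{adj}(U^{-1})\| / |\det(U^{-1})|$; the adjugate has $2^{\poly(n)}$-bounded entries (products of $2^{\poly(n)}$-bounded entries), and dividing by $|\det(U^{-1})| \ge 2^{-\poly(n)}$ still leaves $\|U\| \le 2^{\poly(n)}$. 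The crux is therefore the quantitative lower bound $|\det(U^{-1})| \ge 2^{-\poly(n)}$; I would establish this by picking the eigenvectors as scaled cofactor vectors so that the columns of $U^{-1}$ have a clean algebraic form, expanding $\det(U^{-1})$ as a polynomial in the entries of $M$ and the $\lambda_i$, observing it is nonzero (the $v_i$ are genuinely independent because the $\lambda_i$ are distinct), and then applying a root-separation / height bound in the spirit of Lemma~\ref{rootdist} to conclude it cannot be smaller than $2^{-\poly(n)}$. Combining, $\|M'\| \le n^2\|U^{-1}\|\,\|U\| \le 2^{\poly(n)}$, which is the claim.

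I expect the main obstacle to be making the lower bound on $|\det(U^{-1})|$ fully rigorous: one must commit to a specific choice of eigenvectors (so that $\det(U^{-1})$ is a concrete algebraic expression of bounded degree and height in the data $M, \lambda_1,\dots,\lambda_n$), verify non-degeneracy, and then invoke an appropriate effective bound. Everything else — the adjugate bound on $\|U\|$, the cofactor bound on $\|U^{-1}\|$, and the final submultiplicative estimate — is routine given the eigenvalue separation and boundedness from Corollary~\ref{eigenvaldist}.
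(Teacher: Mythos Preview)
Your approach is genuinely different from the paper's, and considerably harder. The paper never bounds $\|U\|$ or $\|U^{-1}\|$ at all. Instead it observes that $M'$ can be realized as a \emph{polynomial} in $M$: by Lagrange interpolation there is a polynomial $p$ of degree at most $n-1$ with $p(\lambda_i)=\mu_i$ for each $i$ (where $\mu_i$ are the diagonal entries of $D'$), and then $p(M)=U^{-1}p(D)U=U^{-1}D'U=M'$. The eigenvalue separation from Corollary~\ref{eigenvaldist} bounds the denominators in the Lagrange formula by $2^{\poly(n)}$, and $|\mu_i|\le 1$ bounds the numerators, so the coefficients of $p$ are at most $2^{\poly(n)}$; since the entries of $M$ are at most $2^{\poly(n)}$ and $\deg p \le n-1$, the entries of $p(M)$ are at most $2^{\poly(n)}$. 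The whole argument is three lines and never touches the eigenbasis.

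Your route --- bounding the condition number of $U$ directly --- can in principle be pushed through, but the step you flag as the obstacle really is one. Lemma~\ref{rootdist} is a root-separation bound, not a lower bound on nonzero values of algebraic expressions, so it does not apply to $|\det(U^{-1})|$ as stated; you would need a Liouville-type height argument instead. Moreover, the ``cofactor vector of a fixed row of $M-\lambda_i I$'' construction can vanish for some $i$ (when that particular $(n-1)\times(n-1)$ minor is singular), so you would have to vary the choice of row across eigenvalues, which spoils the clean antisymmetry/Vandermonde factorization you would want for the determinant lower bound. None of this is insurmountable, but the interpolation trick sidesteps all of it: the moral is that $U^{-1}D'U$ depends only on the \emph{functional calculus} $\lambda_i\mapsto\mu_i$, not on $U$ itself, so one should work with polynomials in $M$ rather than with the eigendecomposition.
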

\begin{proof}
Let $\lambda_1, \lambda_2, \dots, \lambda_n$ be the eigenvalues of $M$ (i.e. the diagonal entries of $D$), and let $\mu_1, \mu_2, \dots \mu_n$ be the diagonal entries of $D'$. Consider the polynomial $p(x)$ of degree at most $n-1$ which maps $\lambda_i$ to $\mu_i$ for each $i$. By the Lagrange interpolation theorem, we can write $p(x)$ as

\begin{equation*}
p(x) = \sum_{i=1}^{n}\prod_{j\neq i}\mu_i\dfrac{(x-\lambda_j)}{(\lambda_i - \lambda_j)}
\end{equation*}

By Corollary \ref{eigenvaldist}, for all $i\neq j$, $|\lambda_i - \lambda_j| \geq 2^{-\poly(n)}$. Combining this with the fact that $|\mu_i| \leq 1$ implies that all coefficients of $p(x)$ are at most $2^{\poly(n)}$ in absolute value.

Consider now the matrix $p(M)$. Note that since $p(D) = D'$, $p(M) = M'$. But since all the entries of $M$ are at most  $2^{\poly(n)}$, the entries of $p(M)$ will be at most $2^{\poly(n)}$, and hence the entries of $M'$ are at most $2^{\poly(n)}$. 

\end{proof}

We now proceed to prove Theorem \ref{matpowers}.

\begin{proof}[Proof of Theorem~\ref{matpowers}]
By Theorem \ref{nonsingdiag} we can assume without loss of generality that $M$ is diagonalizable with distinct eigenvalues. We begin by finding the eigenvalues of $M$. By remark \ref{eigenvals} of Section \ref{sect:spacebound}, it is possible in polylogarithmic space to compute the eigenvalues of $M$ to within any precision $2^{-\poly(n)}$. 

Let $\lambda_1, \lambda_2, \dots, \lambda_n$ be the eigenvalues of $M$.  For each $\lambda_{i}$, let $\tilde{\lambda}_i$ be our approximation to $\lambda_i$ (so that $|\tilde{\lambda}_i - \lambda_i| \leq 2^{-\poly(n)}$ for some choice of $\poly(n)$). We now construct via Lagrange interpolation the polynomial $p(x)$ such that for each $i$, $p(\tilde{\lambda}_i) = \tilde{\lambda_{i}}^{E}$ (note that by Theorem \ref{cpxpowers}, we can compute $\tilde{\lambda_{i}}^{E}$ to within any precision $2^{-\poly(n)}$ in logarithmic space). We wish to show that we can ensure (via approximating the roots with fine enough precision) that $|p(\lambda_{i}) - \lambda_{i}^{E}| \leq 2^{-\poly(n)}$ for any given choice of precision $2^{-\poly(n)}$.

To show this, first note that the Lagrange interpolation formula says that we can write $p(x)$ as

\begin{equation*}
p(x) = \sum_{i=1}^{n}\prod_{j\neq i}\tilde{\lambda}_i^{E}\dfrac{(x-\tilde{\lambda}_j)}{(\tilde{\lambda_i} - \tilde{\lambda_j})}
\end{equation*}

Recall that, by Corollary \ref{eigenvaldist}, for all $i\neq j$, $|\tilde{\lambda}_i - \tilde{\lambda_j}| \geq 2^{-n^{a}}$, for some constant $a$. In addition, $\tilde{\lambda}_i^{E}$ is at most $||M^{E}||$ which by our assumption is at most $2^{n}$. Hence, all the coefficients of $p(x)$ have magnitude at most $2^{n}\left(2^{-n^{a}}\right)^{-n} \leq 2^{n^{a+2}}$. 

Next, note that if $p(x)$ is a polynomial of degree $d$ all of whose coefficients are at most $A$ in absolute value, then

\begin{eqnarray}
|p(x) - p(y)| &\leq & A \sum_{i=0}^{d} |x^{i} - y^{i}| \\
&=&  A|x-y| \sum_{i=1}^{d}\left|\sum_{j=0}^{i-1}x^{j}y^{i-j}\right| \\
&\leq & d^2 A \max(|x|,|y|)^{d-1} |x-y|
\end{eqnarray}

Since $|\tilde{\lambda}_i - \lambda_i| \leq 2^{-n^{b}}$ for some constant $b$ and $|\lambda_i|^{d-1} \leq |\lambda_i|^{E} \leq 2^{n}$, then it follows that,

\begin{equation*}
|p(\lambda_{i}) - p(\tilde{\lambda}_i)| \leq n^2 2^{n^{a+2}-n^{b}+n}
\end{equation*}

Therefore, as long as we choose $b > a+2$, $|p(\lambda_{i}) - p(\tilde{\lambda}_i)|$ will be at most $2^{-O(n^{b})}$. Since $p(\tilde{\lambda}_i) = \tilde{\lambda}_i^{E}$, and since $|\tilde{\lambda}_i^{E} - \lambda_{i}^{E}| \approx E|\tilde{\lambda}_i - \lambda_i| \leq 2^{-n^{b}}|\tilde{\lambda}_i - \lambda_i|$, it follows that 

\begin{equation*}
|p(\lambda_{i}) - \lambda_{i}^{E}| \leq 2^{-O(n^{b})} + E 2^{-n^{b}}
\end{equation*}

Since $E \leq 2^{\poly(n)}$, $E \leq 2^{n^{c}}$ for some $c$. For any $c'$, choosing $b = c+c'$ ensures that $|p(\lambda_i) - \lambda_{i}^{E}| \leq 2^{-n^{c'}}$, as desired.

Finally, consider the matrix $p(M)$. We claim that each entry of $p(M) - M^{E}$ has absolute value at most $2^{-\poly(n)}$. To see this, note that if we diagonalize $M$ as $M = U^{-1}DU$, where $D$ is a diagonal matrix containing the eigenvalues of $M$, then $p(M) - M^{E} = U^{-1}(p(D) - D^{E})U$. Each diagonal entry of $p(D) - D^{E}$ is of the form $p(\lambda_i) - \lambda_i^{E}$ and therefore by the above discussion has magnitude at most $2^{-n^{c'}}$, for any $c'$ of our choosing. Rewriting $p(M) - M^{E}$ in the form $2^{-n^{c'}}U^{-1}2^{n^{c'}}(p(D)-D^{E})U$ and applying Lemma \ref{eigdecompbnd}, it follows that (for sufficiently large $c'$) each entry of $p(M) - M^{E}$ also has magnitude $2^{-\poly(n)}$.

It therefore suffices to compute $p(M)$. Since we can compute the coefficients of the polynomial $p$ in polylogarithmic space and since we can compute $M^{k}$ for any $k \leq n$ in polylogarithmic space via repeated squaring, we can compute $p(M)$ in polylogarithmic space, as desired.
\end{proof}

\section{Computing invariant measures in small space}\label{sect:ubnd}

In this section we prove Theorem \ref{ubnd}. 

This theorem can be seen as a refinement of Theorem C in  \cite{BGR}. Our strategy, therefore, will be mainly to adapt the algorithm described in the proof of Theorem C, taking care to implement each step in polylogarithmic space.

For completeness, we will first describe the algorithm presented in  \cite{BGR}. We defer the analysis of this algorithm to the original paper. 

Recall that Theorem C states

\begin{theorem}
Let $S_{\epsilon}$ be a computable dynamical system defined by a continuous function $f$ from a compact space $M$ to itself and a Gaussian noise kernel $p_{f(x)}^{\epsilon}(\cdot)$. Assume also that $f$ is polynomial-time integrable (i.e. it is possible to integrate the convolution of powers of $f$ with polynomial functions in polynomial time). Then computing $\mu$ to within precision $\delta < O(\epsilon)$ requires time and space $O_{S,\epsilon}(\poly(\log(1/\delta)))$. 
\end{theorem}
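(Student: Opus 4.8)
The plan is to turn the computation of $\mu$ into a bounded amount of numerical linear algebra, exploiting two features of Gaussian noise: the transfer operator $\mathcal{S}$ mixes exponentially fast, so that only $T=O_{\epsilon}(\log 1/\delta)$ iterations are needed; and every iterate is an extremely smooth density, hence is captured to precision $2^{-\poly_{\epsilon}(\log 1/\delta)}$ by a polynomial of degree $D=\poly_{\epsilon}(\log 1/\delta)$. Concretely, I would replace the normalized kernel $K_{f}(y,x)=C_{\epsilon}(x)\,g(f(y)-x)$, where $g(u)=(\epsilon\sqrt{2\pi})^{-1}e^{-u^{2}/2\epsilon^{2}}$, by its degree-$D$ Taylor approximation in $x$ about $x_{0}=\tfrac12$; this turns $\mathcal{S}$ into a finite-rank operator on the space of polynomials of degree at most $D$, represented by a $(D{+}1)\times(D{+}1)$ matrix $A$. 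Compute $A$ once, then iterate $v_{t+1}=\mathrm{renorm}(v_{t}A)$ starting from the coefficient vector $v_{0}$ of $\mathbf{1}$ for $T$ steps, and answer a query $[a,b]$ by outputting the integral over $[a,b]$ of the polynomial with coefficient vector $v_{T}$.

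\emph{Exponential mixing (why $T$ is small).} Since $f(y)\in[0,1]$ one has $K_{f}(y,x)\ge c_{\epsilon}:=(\epsilon\sqrt{2\pi})^{-1}e^{-1/2\epsilon^{2}}>0$ for all $x,y\in[0,1]$, so every transition density dominates $c_{\epsilon}$ times Lebesgue measure; the Dobrushin coefficient of $\mathcal{S}$ is therefore at most $1-c_{\epsilon}$, and by the Jordan decomposition $\TVnorm{\mathcal{S}\nu}\le(1-c_{\epsilon})\TVnorm{\nu}$ for every signed measure $\nu$ with $\nu([0,1])=0$. Hence there is a unique invariant density $\mu$, and $\TVnorm{\mathcal{S}^{t}\mathbf{1}-\mu}=\TVnorm{\mathcal{S}^{t}(\mathbf{1}-\mu)}\le(1-c_{\epsilon})^{t}$, so $T=\lceil c_{\epsilon}^{-1}\log(2/\delta)\rceil=O_{\epsilon}(\log 1/\delta)$ makes this at most $\delta/2$. (The crude bound on $c_{\epsilon}$ only inflates the $\epsilon$-dependent constant, which is all Theorem~C asks for.)

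\emph{Smoothness and computing $A$ (why $D$ is small and $A$ is cheap).} For $u\in[0,1]$ the factor $C_{\epsilon}$ and the function $x\mapsto g(u-x)$ are analytic with $k$-th $x$-derivatives of magnitude $(O(1/\epsilon))^{k}k^{O(k)}$, so the Taylor remainder beyond degree $D$ of $x\mapsto C_{\epsilon}(x)g(f(y)-x)$ is uniformly at most $2^{-\poly_{\epsilon}(\log 1/\delta)}$; this both fixes $D=\poly_{\epsilon}(\log 1/\delta)$ and bounds by $\delta':=c_{\epsilon}\delta/4$ the total-variation error introduced at each iteration. Expanding the $x$-derivatives of $g$ through Hermite polynomials, the entries of $A$ are explicit $\epsilon$-dependent combinations --- with coefficients $C_{\epsilon}^{(\ell)}(x_{0})$, themselves Gaussian moment integrals --- of the quantities $\int_{0}^{1}y^{j}\,q_{m}(f(y))\,e^{-(f(y)-x_{0})^{2}/2\epsilon^{2}}\,dy$; after expanding the Gaussian into $\poly_{\epsilon}(\log 1/\delta)$ monomials in $f(y)$ these become integrals $\int_{0}^{1}y^{j}f(y)^{m}\,dy$, which are exactly the convolutions of powers of $f$ with polynomials that the polynomial-time integrability hypothesis evaluates to precision $2^{-\poly_{\epsilon}(\log 1/\delta)}$ in time $\poly_{S,\epsilon}(\log 1/\delta)$. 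Thus $A$ is built in time $\poly_{S,\epsilon}(\log 1/\delta)$, and the $T$ ensuing matrix--vector products and renormalizations cost no more.

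\emph{Error accounting and the main obstacle.} Writing $\tilde\mu_{t}$ for the density with coefficient vector $v_{t}$, one has $\tilde\mu_{t+1}=\mathcal{S}\tilde\mu_{t}+r_{t}$ with $\TVnorm{r_{t}}\le\delta'$ and $r_{t}([0,1])=0$, so the contraction gives $\TVnorm{\tilde\mu_{t+1}-\mu}\le(1-c_{\epsilon})\TVnorm{\tilde\mu_{t}-\mu}+\delta'$ and hence $\TVnorm{\tilde\mu_{T}-\mu}\le(1-c_{\epsilon})^{T}+\delta'/c_{\epsilon}<\delta$; carrying $\poly_{\epsilon}(\log 1/\delta)$ bits of precision throughout absorbs all rounding into the remaining slack, and the hypothesis $\delta<O(\epsilon)$ is only used to keep $\delta'$ and the output precision in the range where these estimates are clean. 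Since a Turing machine with a write-only output tape uses at most as much work space as time, the whole procedure runs in space (and time) $\poly_{S,\epsilon}(\log 1/\delta)$, as claimed. The genuinely delicate step is the smoothness one: one must pin down quantitative derivative bounds for the normalized kernel --- especially for the normalization $C_{\epsilon}(x)$, a reciprocal of a difference of Gaussian CDFs, whose derivatives need a Fa\`a-di-Bruno estimate --- to justify simultaneously the degree bound $D=\poly_{\epsilon}(\log 1/\delta)$ and the per-step error $\delta'$, and then carry out the Hermite bookkeeping that rewrites the entries of $A$ as instances of the integrability oracle; the mixing and error-propagation parts are routine contraction arguments.
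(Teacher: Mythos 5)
Your proposal is correct in outline and would yield the stated $O_{S,\epsilon}(\poly(\log 1/\delta))$ bound, but it follows a genuinely different route from the one in \cite{BGR} that the paper reproduces. The paper's algorithm partitions $[0,1]$ into $A = O(1/\epsilon)$ intervals of diameter $\epsilon$ and represents $\mu^{(t)}$ by a \emph{piecewise} Taylor series with $N = O(\poly(\epsilon^{-1})\log \delta^{-1})$ coefficients per piece, yielding an $AN\times AN$ transfer matrix whose power $P_N^{t(\delta)}$ is then computed by repeated squaring (with $t(\delta) = O(\exp(\epsilon^{-2})\log\delta^{-1})$). You instead use a \emph{single global} Taylor expansion of $K_f(y,\cdot)$ about $x_0=\tfrac12$ to degree $D=\poly_\epsilon(\log 1/\delta)$, giving one $(D{+}1)\times(D{+}1)$ matrix, and you compute the fixed point by $T=O_\epsilon(\log 1/\delta)$ forward matrix--vector iterations rather than by squaring. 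The mixing estimate is also packaged differently: the paper cites the spectral-gap lemma (Theorem 36 of \cite{BGR}), while you rederive contraction directly from the uniform minorization $K_f(y,x)\ge c_\epsilon$ via the Dobrushin coefficient --- a cleaner and essentially equivalent route, since the cited lemma ultimately rests on the same minorization. Each choice has trade-offs: the paper's local pieces of width $\epsilon$ keep the Taylor coefficients $\rho_{i,k}$ of moderate size, which matters for numerical conditioning and is what makes the subsequent polylog-space refinement (Theorem~\ref{ubnd}) go through; your single expansion at distance up to $\tfrac12$ from $x_0$ produces coefficients of size roughly $\epsilon^{-k}$ and forces $D=\Omega(\epsilon^{-2})$ before the $\sqrt{D!}$ decay of the Gaussian's Hermite derivatives kicks in, which is fine for the $O_{S,\epsilon}$ statement here but would be more awkward in the quantitative $\epsilon$-dependent settings of Theorems~\ref{ubnd} and \ref{ubndalt}. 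Likewise, forward iteration is perfectly adequate for polynomial time, but repeated squaring (and ultimately Theorem~\ref{matpowers}) is what the paper needs when it later optimizes space. As you note, the remaining work --- Fa\`a di Bruno bounds on $C_\epsilon$ and the Hermite bookkeeping that turns the entries of $A$ into instances of the integrability oracle --- is delicate but routine; it parallels the derivative estimates already carried out in \cite{BGR}, so I see no obstruction.
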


The algorithm used in the proof of Theorem C proceeds as follows.

\begin{enumerate}
\item
Begin by partitioning $M$ into $A$ regions $\mathfrak{a}_i$ each with diameter at most $\epsilon$. Assign each atom a center $x_i \in \mathfrak{a}_i$. 

\item
Let $\mu^{(t)}(x)$ be the probability density function of the system at time $t$ (given some arbitrary initial distribution $\mu^{(0)}(x)$). Then on each of the regions $\mathfrak{a}_{i}$, $\mu^{(t)}(x)$ can be written as a Taylor series in $(x-x_i)$. In particular, we have that

\begin{equation*}
\mu^{(t)}(x) = \sum_{i=1}^{A}\mathbf{1}\{x \in \mathfrak{a}_i\} \sum_{k=0}^{\infty} \rho_{i,k}^{(t)}(x-x_i)^k
\end{equation*}

\noindent
where $\rho_{i,k} \in \R$ are the coefficients of these Taylor series. The coefficients at time $t+1$ are related to the coefficients at time $t$ via the following linear map.

\begin{equation*}
\rho_{i,l}^{(t+1)} = \sum_{j, m}\rho_{j, m}^{(t)}\int_{\mathfrak{a}_j}(y-x_j)^{m}\frac{\partial^{l}_2 K_{f}(y, x_i)}{l!}dy
\end{equation*}

Call this linear map $P$. The coefficients of $P$ can then be computed to arbitrary precision by computing convolutions of derivatives of the noise kernel with certain polynomials (which is possible in polynomial time by our assumption).

\item
For any positive integer $N$, ignoring all terms in the Taylor expansion of degree larger than $N$ truncates the transition map $P$ to form a finite linear map $P_{N}$ (representable as an $AN$ by $AN$ matrix). The analysis in  \cite{BGR} proves the following lemma. 

\begin{lemma}
There exist log-space computable functions $t(\delta)$ and $N(\delta)$ such that

\begin{equation*}
|| \pi - P^{t(\delta)}_{N(\delta)}\rho||_{\infty} \leq \delta
\end{equation*}

\noindent
for all $\delta > 0$, uniformly in $\rho$, where

\begin{eqnarray*}
t(\delta) &=& O\left(\log \delta^{-1}\exp\left(\epsilon^{-2}\right)\right) \\
N(\delta) &=& O\left(\log \delta^{-1}\,\poly\left(\epsilon^{-1}\right) \right)
\end{eqnarray*}
\end{lemma}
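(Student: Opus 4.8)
The quantitative part of this lemma is exactly what the analysis of Theorem~C in \cite{BGR} establishes, so my plan is to recall that argument in outline and then add the single new observation that the resulting functions $t(\delta)$ and $N(\delta)$ can be evaluated in the small space budget of Theorem~\ref{ubnd}. I would split the proof into a truncation estimate, a mixing estimate, and a computability remark.

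For the truncation estimate, the point is that the Gaussian kernel $K_\epsilon(\cdot,x)$ is entire, so after one application of the noisy transition the density $\rho(x)=\int_0^1\mu(y)K_f(y,x)\,dy$ is real-analytic on each atom $\mathfrak{a}_i$ (of diameter at most $\epsilon$) with Taylor coefficients $\rho_{i,k}=\frac1{k!}\int\mu(y)\,\partial_2^k K_\epsilon(f(y),x_i)\,dy$; bounding $\partial_2^k$ of a width-$\epsilon$ Gaussian on the bounded domain via Hermite polynomials gives $|\rho_{i,k}|\le C(\epsilon)\,\epsilon^{-k}/\sqrt{k!}$, hence on an interval of length $\epsilon$ the degree-$>N$ tail contributes $L_\infty$ error at most $C(\epsilon)\sum_{k>N}1/\sqrt{k!}$. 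Because $P$ is mixing (next paragraph), these one-step truncation errors do not accumulate over iterations but sum to a geometric series of ratio $1-\beta$; dividing the one-step bound by $\beta=\exp(-O(\epsilon^{-2}))$ and requiring the result to be $\le\delta/2$ is what forces $N(\delta)=O(\log\delta^{-1}\poly(\epsilon^{-1}))$ (a safe upper bound for the needed $N$). For the mixing estimate I would invoke the Doeblin/minorization property of the system: since the Gaussian noise has width $\epsilon$ on the bounded domain $[0,1]$, for every $x$ the one-step kernel $K_f(\cdot,x)$ dominates $\beta\,\mathrm{Leb}$ with $\beta=\exp(-O(\epsilon^{-2}))$, so $P$ contracts total-variation distance by a factor $1-\beta$ per step and $\|\pi-P^{t}\rho\|_\infty\le\delta/2$ once $t(\delta)=O(\beta^{-1}\log\delta^{-1})=O(\exp(\epsilon^{-2})\log\delta^{-1})$, uniformly in $\rho$. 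Combining the two estimates (and transferring the contraction from the stochastic operator $P$ to the finite, non-stochastic operator $P_N$ by controlling $\|P_N^{t}\rho-P^{t}\rho\|_\infty$ rather than asking $P_N$ itself to be contracting) yields $\|\pi-P_{N(\delta)}^{t(\delta)}\rho\|_\infty\le\delta$.

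Finally, with $\epsilon$ regarded as a fixed parameter, $t(\delta)$ and $N(\delta)$ are simply products of $\log\delta^{-1}$ with the constants $\exp(\epsilon^{-2})$ and $\poly(\epsilon^{-1})$; by the results recalled in Section~\ref{sect:spacebound} ($\log$ and $\exp$ to precision $2^{-\poly(m)}$, together with products, are log-space computable, applied with $m=\log\log 1/\delta$), both are computable in space $O(\log\log 1/\delta)$, which is the sense of "log-space computable" needed here. The one genuine subtlety — which I expect to be the main obstacle if one were to reprove rather than cite — is pinning down the interplay between the truncation and mixing estimates with honest constants, in particular verifying that $t(\delta)$ iterations of the truncated map $P_{N(\delta)}$ (as opposed to the exact map $P$) already suffice; since this is precisely the content of the analysis in \cite{BGR}, I would cite their argument for the quantitative bounds and restrict the new writing to the computability observation above.
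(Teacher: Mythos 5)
Your proposal is correct and takes essentially the same route as the paper: the paper's proof is simply a citation to Theorem~36 of \cite{BGR} for the explicit forms of $t(\delta)$ and $N(\delta)$, which is exactly what you fall back on after correctly sketching the underlying truncation (Hermite-type bounds on Gaussian derivatives) and mixing (Doeblin minorization with $\beta=\exp(-O(\epsilon^{-2}))$) estimates and supplying the easy log-space-computability observation that the paper leaves implicit.
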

\begin{proof}
See Theorem 36 in  \cite{BGR}. Explicit expressions for $t(\delta)$ and $N(\delta)$ can be found in the proof of Theorem 36.
\end{proof}

By repeated squaring, we can compute $P_{N(\delta)}^{t(\delta)}$ in time $O(\poly(N(\delta))\log t(\delta)) = O_{\epsilon}(\poly(\log \delta^{-1}))$. The above lemma implies that the measure given by $P_{N(\delta)}^{t(\delta)}$ is within $\delta$ of the invariant measure $\mu$, as desired.
\end{enumerate}

We now proceed to prove Theorem \ref{ubnd}. As in Theorem C in \cite{BGR}, we initially restrict ourselves to the one-dimensional case for clarity. We later describe the changes necessary for the $d$-dimensional case.

\newtheorem*{thm:ubnd}{\bf Theorem \ref{ubnd}}
\begin{thm:ubnd} {\em
Let $X=[0,1]$. 
If the noise $p_{f(x)}^{\epsilon}(\cdot)$ is Gaussian, and $f$ is $(\log \frac{1}{\epsilon})+$log-space integrable, then the computation of the invariant measure $\mu$ at precision $\delta$ can be done in space $O\left(\poly\left(\log\frac{1}{\epsilon} + \log\log\frac{1}{\delta}\right)\right)$.}
\end{thm:ubnd}
\begin{proof}
We describe how to adapt the algorithm presented above so that it can be performed in poly-logarithmic space. 

In order to show we can execute the above approach in polylogarithmic space, we must show we can both compute the coefficients of the matrix $P$ to within $\poly(\delta)$ accuracy and that we can then subsequently exponentiate the truncated matrix $P_{N(\delta)}$ to the power $t(\delta)$. Note that the coefficients of $P$ are given by the expression

\begin{equation*}
P^{(i,j)}(l, m) = \int_{\mathfrak{a}_j}(y-x_j)^m\frac{\partial^{l}_2 K_{f}(y, x_i)}{l!}dy
\end{equation*}

\noindent
In the case of a Gaussian kernel, 

\begin{equation*}
K_{f}(y, x_i) = C_{\epsilon}(x_i)\frac{1}{\epsilon\sqrt{2\pi}}\exp\left(-(f(y)-x_i)^2/2\epsilon^2\right)
\end{equation*}

We can expand this expression out via the Taylor series for $\exp(x)$. Since $(f(y) - x_i)$ is bounded (by the diameter of $M$, for example), to approximate this integral to within $\delta$, it suffices to take the first $\poly\left(\frac{1}{\epsilon}+\log\frac{1}{\delta}\right)$ terms of this expansion. We can therefore approximate $P^{(i,j)}(l,m)$ as a linear combination of $\poly\left(\frac{1}{\epsilon}+\log\frac{1}{\delta}\right)$ terms of the form 

\begin{equation*}
\int_{\mathfrak{a}_j}(y-x_j)^mf(y)^k dy
\end{equation*}

By our assumption, we can evaluate each of these integrals (to within precision $\poly(\delta)$) in space $O(\log\log \frac{1}{\delta})$. The coefficients of the linear combination can also each be computed in space complexity $O\left(\poly\left(\log \frac{1}{\epsilon} + \log\log\frac{1}{\delta}\right)\right)$ via the comments in Section \ref{sect:spacebound} (in particular, \ref{facts} and \ref{polys}), and hence the entire linear combination can be computed in this space complexity. The normalization constant $C_{\epsilon}(x_i)$ can similarly be computed in this space complexity by expanding out $\exp(y - x_i)^2$ as a Taylor series in $y$ and integrating over $[0,1]$. 

Finally, we must compute $P_{N(\delta)}^{t(\delta)}$. Note that since $t(\delta)$ is exponential in $\epsilon^{-2}$, we cannot compute $P_{N(\delta)}^{t(\delta)}$ in space $O(\poly(\log \log \delta^{-1} + \log\epsilon^{-1}))$ via repeated squaring, as in the proof of Theorem C. Instead, we apply the algorithm presented in Section \ref{sect:matpow}; by Theorem \ref{matpowers}, this allows us to compute $P_{N(\delta)}^{t(\delta)}$ to within precision $\delta$ in polylogarithmic space. 

Since $P_{N(\delta)}^{t(\delta)}$ is an $AN(\delta)$ by $AN(\delta)$ matrix, where $A = O(\poly(1/\epsilon))$ and $N(\delta) = O(\poly(\log(1/\delta) + 1/\epsilon))$, it follows that computing $P_{N(\delta)}^{t(\delta)}$ can be done in total space complexity 

$$O(\poly(\log AN(\delta))) = O\left(\poly\left(\log \frac{1}{\epsilon} + \log\log\frac{1}{\delta}\right)\right).$$

\end{proof}

\begin{remark}\label{rem:dimension}
To extend this result to the case of $d$ dimensions, we can follow essentially the same procedure; the only change is that we now must write the density functions $\mu^{(t)}(x)$ as multivariate Taylor series in $(\mathbf{x} - \mathbf{x_i})$ (and each of the components of $f$ must be $(\log \epsilon^{-1}) + \log$-space integrable). Since the number of terms in the multivariate Taylor expansion of degree at most $N$ is $O(N^{d})$, the truncated matrix $P_{N}$ still has size polynomial in $\frac{1}{\epsilon}$ and $\log\frac{1}{\delta}$, so the invariant measure can be computed in space complexity

\begin{equation}
\poly\left(d + \log \frac{1}{\epsilon} + \log\log\frac{1}{\delta}\right).
\end{equation}
\end{remark}

\subsection{Computing Taylor coefficients of $f$}

Theorem \ref{ubnd} relies on the assumption that convolutions of powers of $f$ with polynomials are log-space integrable. While this assumption holds true for many natural choices of $f$, it is perhaps not the easiest condition to work with, and one might hope for a more natural constraint on $f$. In this section, we show an alternate constraint which implies our previous assumption; namely, that $f$ is log-space computable, smooth, and has bounded Taylor coefficients. Recall that $f$ is logspace computable
if given $x$ on the input tape, $f(x)$ can be computed within precision $2^{-n}$ using space $O(\log n)$. 
We prove the following theorem\footnote{A similar theorem holds under the assumption that $f$ is computable in polylogarithmic space. The conclusion is then that the integrals are also computable in polylogarithmic space --- which suffices to obtain the conclusion of Theorem~\ref{ubnd}.}. 

\begin{theorem}\label{analytic}
Let $f$ be a function that is log-space computable, smooth, and for some constant $\eta$, satisfies (for all $x$)

\begin{equation*}
|\partial^{k}f(x)| \leq k!\eta^{k}
\end{equation*}

\noindent
Then it is possible to compute integrals of the form

\begin{equation*}
\int_{\mathfrak{a}_j} (y-x_{j})^mf(y)^k dy
\end{equation*}

\noindent
where $\diam \mathfrak{a}_j < \frac{1}{2\eta}$ to within precision $\delta$ in space logarithmic in $m$, $k$, $\log \eta$ and $\log 1/\delta$.
\end{theorem}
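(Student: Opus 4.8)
The plan is to replace $f$, on the short interval $\mathfrak{a}_j$, by an explicit polynomial $\tilde P$ of degree $T = \poly(n)$ whose coefficients can be produced in logarithmic space, and then to integrate $(y-x_j)^m\tilde P(y)^k$ symbolically. Throughout write $r = \diam\mathfrak{a}_j < \tfrac{1}{2\eta}$, assume as in the partition of Section~\ref{sect:ubnd} that $x_j\in\mathfrak{a}_j$, and set $n := m + k + \log\tfrac1\delta + \log\eta$, so that ``space $O(\log n)$'' is exactly ``space logarithmic in $m,k,\log\eta,\log\tfrac1\delta$''. The entire procedure will be a composition of a constant number of logarithmic-space subroutines, hence logarithmic-space by remark~\ref{comp} of Section~\ref{sect:spacebound}.

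\textbf{The analyticity estimate.} The hypothesis $|\partial^{l}f(x)|\le l!\eta^{l}$ makes $f$ real-analytic: the Taylor expansion $f(y) = \sum_{l\ge 0} a_l (y-x_j)^{l}$ about $x_j$ has $|a_l|\le\eta^{l}$, and the Lagrange remainder gives $|f(y) - \sum_{l=0}^{T} a_l(y-x_j)^{l}| \le (\eta r)^{T+1} < 2^{-(T+1)}$ for all $y\in\mathfrak{a}_j$ (using $\eta r < \tfrac12$); in particular $\|f\|_{L^\infty(\mathfrak{a}_j)}\le 2$. Hence a degree-$T$ polynomial approximates $f$ on $\mathfrak{a}_j$ to within any $\theta$ once $T = O(\log\tfrac1\theta)$. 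I will take $\theta = 2^{-\poly(n)}$ small enough that, after the error is amplified by the factor $k\max(\|\tilde P\|,\|f\|)^{k-1} = 2^{O(k)}$ from raising to the $k$-th power and by $r^{m+1}\le 2^{\poly(n)}$ from integrating, it still stays below $\delta$; since this only asks $\log\tfrac1\theta = \poly(n)$, we keep $T = \poly(n)$.

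\textbf{Extracting the coefficients.} I would approximate $a_l = \partial^{l}f(x_j)/l!$ by the $l$-th forward difference $\tilde a_l := \frac{1}{l!\,h^{l}}\sum_{i=0}^{l}(-1)^{l-i}\binom{l}{i}f(x_j + ih)$ with step $h = 2^{-\poly(n)}$ chosen small enough (and $\le r/(2T)$, so every $x_j+ih$ stays in $\mathfrak{a}_j$). A short Taylor computation, again using $|\partial^{p}f|\le p!\eta^{p}$, shows $|\tilde a_l - a_l|\le 2\eta^{l+1}l^{l+1}h$, so a polynomially small $h$ makes this $2^{-\poly(n)}$. Each $f(x_j+ih)$ is computable in logarithmic space by composition (remark~\ref{comp}: feed the log-space-computable number $x_j + ih$ to the machine for $f$), and the binomials, the power $h^{l}$, and the division are handled by remarks~\ref{comp},~\ref{facts},~\ref{polys} and the real-arithmetic facts of Section~\ref{sect:spacebound}; since the sum has $l+1\le T+1 = \poly(n)$ terms, by remark~\ref{add} each $\tilde a_l$ is computable to precision $2^{-\poly(n)}$ in space $O(\log n)$. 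Because $r < \tfrac1{2\eta}$ gives the geometric bound $|\tilde a_l|r^{l} \le (|a_l| + 2^{-\poly(n)})r^{l} < 2^{-l} + 2^{-\poly(n)}$, the coefficients $\tilde a_l$ are $\poly(n)$-bit and $\|\tilde P - f\|_{L^\infty(\mathfrak{a}_j)}\le\theta$ for $\tilde P(y):=\sum_{l=0}^{T}\tilde a_l(y-x_j)^{l}$, after choosing $T$ and the internal precision appropriately.

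\textbf{Assembling the integral.} With $\|\tilde P\|,\|f\|\le 3$ on $\mathfrak{a}_j$ we get $|\tilde P(y)^{k} - f(y)^{k}|\le k\,3^{k-1}\theta$ on $\mathfrak{a}_j$, hence $|\int_{\mathfrak{a}_j}(y-x_j)^{m}(\tilde P^{k}-f^{k})\,dy|\le r^{m+1}3^{k}\theta\le\delta$ by the choice of $\theta$. It remains to integrate a genuine polynomial: expand $\tilde P(y)^{k} = \sum_{s=0}^{kT} b_s(y-x_j)^{s}$ --- a $k$-th power (with $k = \poly(n)$) of a degree-$T$ polynomial with $\poly(n)$-bit coefficients, computable in space $O(\log n)$ by remark~\ref{polys} --- and output $\sum_{s=0}^{kT} b_s\int_{\mathfrak{a}_j}(y-x_j)^{m+s}\,dy$, where for $\mathfrak{a}_j = [\alpha_j,\beta_j]$ each $\int_{\mathfrak{a}_j}(y-x_j)^{p}\,dy$ is the closed form $((\beta_j-x_j)^{p+1} - (\alpha_j - x_j)^{p+1})/(p+1)$ (powers with $\poly(n)$ exponent and one division). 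This is a sum of $kT+1 = \poly(n)$ terms, each a logarithmic-space function of its parameters, so by remark~\ref{add} the sum --- and therefore the whole computation, being a composition of constantly many logarithmic-space routines --- runs in space $O(\log n)$, as required. The step I expect to need the most care is exactly this bookkeeping: one must check that every intermediate quantity ($\eta^{T}$, the amplified finite differences, the $b_s$) stays within $\poly(n)$ bits, and it is the hypothesis $\diam\mathfrak{a}_j < \tfrac1{2\eta}$ --- through the geometric decay it forces on the Taylor tail --- that makes this true.
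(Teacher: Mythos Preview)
Your proposal is correct and follows essentially the same approach as the paper: approximate $f$ on $\mathfrak{a}_j$ by a truncated Taylor polynomial (using $\diam\mathfrak{a}_j<\tfrac1{2\eta}$ for geometric decay of the tail), extract the Taylor coefficients in logarithmic space via the forward finite-difference formula together with the derivative bound $|\partial^{l}f|\le l!\eta^{l}$, and then integrate the resulting polynomial symbolically using the log-space arithmetic of Section~\ref{sect:spacebound}. The paper separates the finite-difference step into its own lemma (Lemma~\ref{taylor}) and uses the normalization $f(y)\in[0,1]$ rather than your $\|f\|\le 2$ bound from the Taylor series, but the argument is otherwise the same.
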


\begin{remark}
We note that if $f$ is analytic in $[a,b]$, then such a constant $\eta$ always exists. In fact, if we let $\rho$ to be a strict lower bound of the set of all the radii of convergence of the Taylor series with centers in $[a,b]$ (note that $\rho>0$ by compactness), then the integral Cauchy formula implies that, for any $x\in[a,b]$

\begin{equation*}
|\partial^{k}f(x)| \leq \frac{Mk!}{\rho^{k}}
\end{equation*}
where $M$ is any upper bound of $f$ over $[a,b]^{\rho}=\{z\in\mathbb{C}: |z-x|\leq\rho \text{ for some }x\in [a,b]\}$. 
\end{remark}

To prove the above theorem, we first show that if $f$ satisfies the above constraints, then it is possible to compute its Taylor coefficients in logarithmic space.

\begin{lemma} \label{taylor}
Assume $f$ is log-space computable, smooth, and satisfies $|\partial^{k}f(x)| \leq k!\eta^{k}$ for all $x$ in the domain. Then for any $x_c$, we can write

\begin{equation*}
f(x) = \sum_{k} a_{k}(x-x_c)^k
\end{equation*}

\noindent
The value of $a_{k}$ is then computable to within precision $\delta$ in space logarithmic in $k$, $\log 1/\eta$, and $\log 1/\delta$. 
\end{lemma}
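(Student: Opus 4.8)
The plan is to estimate the Taylor coefficient $a_k = \partial^k f(x_c)/k!$ by a finite difference formula and control the resulting approximation error using the bound $|\partial^k f(x)| \leq k!\eta^k$. The natural choice is the $k$-th forward (or central) finite difference $\Delta_h^k f(x_c) = \sum_{j=0}^{k} (-1)^{k-j}\binom{k}{j} f(x_c + jh)$ for a suitably small step $h$, since by Taylor's theorem with remainder $\Delta_h^k f(x_c) = h^k \partial^k f(\xi)/1 + (\text{error from higher-order terms})$, and more precisely $\Delta_h^k f(x_c)/(h^k)$ approximates $\partial^k f(x_c)$ with an error governed by $\partial^{k+1} f$ times $h$. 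First I would write down the integral (Hermite--Genocchi) or Lagrange-remainder form of the finite difference, namely $\Delta_h^k f(x_c) = h^k \int_{[0,1]^k} \partial^k f\!\left(x_c + h(s_1+\dots+s_k)\right) ds_1\cdots ds_k$ is not quite it; rather I would use the cleaner estimate that if $f$ is smooth then $\Delta_h^k f(x_c) = k!\,h^k\, a_k + R$ where $|R| \leq h^{k+1}\sup|\partial^{k+1}f|\cdot k/(k+1)! \cdot (\text{combinatorial factor})$, which by the hypothesis is at most roughly $h^{k+1}(k+1)!\eta^{k+1}\cdot(\text{poly}(k))$. Dividing by $k!h^k$, the error in approximating $a_k$ is then $O(h \cdot k \cdot \eta^{k+1})$ up to polynomial-in-$k$ factors, so choosing $h = 2^{-p}$ with $p = O(k\log\eta + k + \log(1/\delta))$ bits drives this below $\delta$.

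Next I would address the space complexity, which is the real point. Each value $f(x_c + jh)$ is computed to precision $2^{-q}$ (for $q = \poly$ in the relevant parameters) in space $O(\log q) = O(\log(k + \log\eta + \log 1/\delta))$ using the assumed log-space computability of $f$. The binomial coefficients $\binom{k}{j}$ for $j \le k$ are log-space computable by item (h) in Section~\ref{sect:spacebound}. The finite-difference sum has $k+1$ terms, each a product of an integer $\binom{k}{j}$ of $\poly(k)$ bits with a $\poly$-bit real number, and a sum of $\poly$ many such numbers is log-space computable by items (b), (c), (g) of Section~\ref{sect:spacebound}; composing this outer computation with the inner evaluation of $f$ costs only an additive $O(\log\cdot)$ by item (a). Finally the division by $k! h^k = k! 2^{-pk}$ is log-space by item (e) (division) together with item (h) (factorials) and bookkeeping of the binary point. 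Composing all of these is a constant-depth composition of log-space procedures, hence log-space, and all intermediate quantities are bounded by $2^{\poly}$ in the parameters, so the word size stays logarithmic.

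The main obstacle I anticipate is getting the error analysis of the $k$-th finite difference clean enough and with small enough constants: the naive bound $|\Delta_h^k f| \le 2^k h^k \sup|\partial^k f|$ loses a factor $2^k$, and one must be careful that the cancellation in the alternating sum is genuinely captured — i.e. that $\Delta_h^k f(x_c) = h^k \partial^k f(x_c) + O(h^{k+1})$ rather than merely $O(h^k)$ — so that after dividing by $h^k$ one still has a convergent-as-$h\to 0$ estimate for $a_k$ with the $\eta$-dependence only appearing polynomially in the exponent of $h$. The standard way around this is the integral remainder representation $\Delta_h^k f(x_c) = h^k k! \int_0^1 \!\!\cdots \int_0^1 \partial^k f\big(x_c + h(t_1 + \cdots + t_k)\big)\,dt_1\cdots dt_k$ followed by a single further Taylor step on $\partial^k f$, which isolates $a_k$ exactly and leaves a remainder controlled by $\partial^{k+1}f$; the hypothesis $|\partial^{k+1}f(x)|\le (k+1)!\eta^{k+1}$ then makes everything explicit. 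A secondary but minor nuisance is ensuring $x_c + jh$ stays inside the domain $[0,1]$ (or wherever $f$ is defined) for all $j \le k$; one handles this by using a backward difference near the right endpoint, or by noting that $h$ is exponentially small so $kh \ll 1$ unless $x_c$ is within $kh$ of the boundary, in which case the symmetric adjustment applies.
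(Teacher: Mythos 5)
Your proposal is correct and follows essentially the same route as the paper: approximate $a_k$ by the $k$-th finite difference of $f$ with step $\tau$, use the Lagrange remainder form of Taylor's theorem together with the bound $|\partial^{k+1}f|\le(k+1)!\eta^{k+1}$ to control the error, and then choose $\tau$ with $\log(1/\tau)=\poly(k,\log\eta,\log 1/\delta)$ so that the whole computation runs in space logarithmic in those quantities.

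One small remark on the obstacle you flag: the naive $2^k$ factor coming from $\sum_j\binom{k}{j}$ is not actually something you need to fight. The paper does incur it --- its bound after dividing by $k!\tau^k$ is roughly $\tau\,\eta^{k+1}k^{k+2}2^k/k!$, which has \emph{exponential}-in-$k$ (not merely polynomial) prefactors --- but this is harmless because it is simply absorbed into the choice of $\tau$ (the paper takes $\tau=\delta\,\eta^{-(k+1)}k^{-(k+2)}2^{-k}$). Since the space cost of carrying $\tau$ depends only on $\log\log(1/\tau)$, which remains $O(\log(k+\log\eta+\log 1/\delta))$ even with these exponential prefactors, the integral/Hermite--Genocchi refinement you contemplate is unnecessary. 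Your observation about keeping $x_c+j\tau$ inside the domain (using backward differences near the right endpoint) is a valid point the paper glosses over, but it is easily handled as you describe.
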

\begin{proof}
We claim that if we choose $\tau = \delta \eta^{-(k+1)}k^{-(k+2)}2^{-k}$, then 

\begin{equation}\label{coeffbound}
\left| \dfrac{\sum_{i=0}^{k}f(x+i\tau)(-1)^{k-i}\binom{k}{i}}{k!\tau^k} - a_k\right| \leq \delta
\end{equation}

Note that since $f$ is computable in log-space, the quantity on the LHS of equation \ref{coeffbound} is computable in space $O(\log k + \log\log 1/\tau) = O(\log k + \log\log \eta + \log\log 1/\delta)$, as desired.

To prove equation \ref{coeffbound}, recall that the Lagrange remainder theorem for Taylor series says that for any $x$ (within the radius of convergence of the Taylor series about $x_c$), we can write

\begin{equation*}
f(x) = \left(\sum_{i=0}^{k} a_{i}(x-x_c)^i\right) + \dfrac{f^{(k+1)}(\xi)}{(k+1)!}(x-x_c)^{k+1}
\end{equation*} 

\noindent
for some $\xi$ between $x_c$ and $x$. Write $f(x) = \left(\sum_{i=0}^{k} a_{i}(x-x_c)^i\right) + R_{k+1}(x)$. By our constraint on $f$, we know that $\left|\frac{f^{(k+1)}(\xi)}{(k+1)!}\right| \leq \eta^{k+1}$, so we can rewrite this as

\begin{equation*}
\left| R_{k+1}(x) \right| \leq \eta^{k+1}(x-x_c)^{k+1}
\end{equation*} 

Next, recall the following binomial identities. For all $r < k$, we have that

\begin{equation*}
\sum_{i=0}^{k}i^{r}(-1)^{k-i}\binom{k}{i} = 0
\end{equation*}

\noindent
On the other hand, when $r=k$, we have that

\begin{equation*}
\sum_{i=0}^{k}i^{k}(-1)^{k-i}\binom{k}{i} = k!
\end{equation*}

Substituting in the Taylor expansion for $f$ and applying the above binomial identities, we see that

\begin{eqnarray*}
\left|\dfrac{\sum_{i=0}^{k}f(x+i\tau)(-1)^{k-i}\binom{k}{i}}{k!\tau^k} - a_k\right| &=& \left|\dfrac{\sum_{i=0}^{k}R_{k+1}(x+i\tau)(-1)^{k-i}\binom{k}{i}}{k!\tau^k} \right| \\
&\leq & \dfrac{1}{k!\tau^k}\sum_{i=0}^{k}\left|\eta^{k+1}i^{k+1}\tau^{k+1}\binom{k}{i}\right| \\
&\leq & \dfrac{\tau \eta^{k+1}}{k!}\sum_{i=0}^{k}k^{k+1}2^{k} \\
&\leq & \dfrac{\tau \eta^{k+1}k^{k+2}2^{k}}{k!} \\
&\leq & \delta
\end{eqnarray*}

\noindent
as desired.
\end{proof}

We can now prove Theorem \ref{analytic}.

\begin{proof}[Proof of Theorem~\ref{analytic}]
We wish to compute the integral

\begin{equation*}
\int_{\mathfrak{a}_j}(y-x_j)^{m}f(y)^k dy
\end{equation*}

\noindent where we know that $\diam \mathfrak{a}_j < \frac{1}{2\eta}$. Write $f(y) = \sum a_i(y-x_j)^i$; by our assumption $|a_i| \leq \eta^{i}$ for all $i$. 

Let $f_{M}(y) = \sum_{i=0}^{M} a_i(y-x_j)^i$. Then we have that 

\begin{eqnarray*}
|f(y) - f_M(y)| &=& \left |\sum_{i=M+1}^{\infty} a_i(y-x_j)^{i}\right|\\
&\leq & \sum_{i=M+1}^{\infty}|a_i|\cdot |y-x_j|^{i} \\
&\leq & \sum_{i=M+1} \eta^{i}(2\eta)^{-i} \\
&=& \sum_{i=M+1} 2^{-i} \\
&=& 2^{-M}
\end{eqnarray*}

Since $f(y) \in [0,1]$, this further implies that $|f(y)^k - f_{M}(y)^k| \leq k2^{-M}$; it follows that if we take $M = \log(\delta/k)$, then $|f(y)^{k} -f_{M}(y)^{k}| \leq \delta$, and in particular

\begin{equation*}
\left| \int_{\mathfrak{a}_j}(y-x_j)^{m}f(y)^k dy - \int_{\mathfrak{a}_j}(y-x_j)^{m}f_M(y)^k dy\right| \leq \delta
\end{equation*}

But note that by Lemma \ref{taylor}, we can compute each of the coefficients of $f_{M}(y)$ (to within precision $\poly(\delta)$) in space logarithmic in $M$, $\log 1/\eta$, and $\log 1/\delta$. We can then compute the coefficients of $(y-x_j)^mf_{M}(y)^k$ via remark \ref{polys} of Section \ref{sect:spacebound}, and hence compute the integral over $\mathfrak{a_j}$ in space $O(\log k + \log m + \log \log \eta + \log\log 1/\delta)$, as desired.
\end{proof}

Theorem \ref{ubndalt} now follows as a straightforward corollary to Theorem \ref{analytic}.

\newtheorem*{thm:ubndalt}{\bf Theorem \ref{ubndalt}}
\begin{thm:ubndalt} {\em
Let $X=[0,1]$. 
If the noise $p_{f(x)}^{\epsilon}(\cdot)$ is Gaussian, and $f$ is log-space computable, smooth, and (for some $\eta>0$) satisfies $|\partial^{k}f(x)| \leq k!\eta^{k}$ for all $x$, then the computation of the invariant measure $\mu$ at precision $\delta$ can be done in space $O\left(\poly\left(\log \eta + \log\frac{1}{\epsilon} + \log\log\frac{1}{\delta}\right)\right)$.  }
\end{thm:ubndalt}
\begin{proof}
In the proof of Theorem \ref{ubnd}, we make the slight modification that instead of simply picking the regions $\mathfrak{a}_i$ to satisfy $\diam \mathfrak{a}_i \leq \epsilon$, we instead make them satisfy the stronger requirement that $\diam \mathfrak{a}_i \leq \min(\epsilon, 1/(2\eta))$. Then, by Theorem \ref{analytic}, we can compute all the necessary integrals in logarithmic space, as before by the earlier assumption. 

Since the number of regions is linear in $\eta$, the resulting space bound is $O(\poly(\log\eta + \log \frac{1}{\epsilon} + \log\log\frac{1}{\delta}))$, as desired.
\end{proof}

\section{Space lower bound for computing invariant measures}\label{sect:lbnd}

In this section we prove Theorem \ref{lbnd}. We begin by proving a weaker version of Theorem \ref{lbnd} where we don't restrict our constructed function $f$ to be analytic (or even continuous). 

\begin{lemma}\label{lbndpre}
Any algorithm that can compute the invariant measure $\mu$ of a dynamical system to within precision $\delta$ with Gaussian noise kernel $p_{f(x)}^{\epsilon}(\cdot)$ requires space at least $\Omega\left(\log\frac{1}{\epsilon} + \log\log\frac{1}{\delta}\right)$.
\end{lemma}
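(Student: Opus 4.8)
\textbf{Proof proposal for Lemma~\ref{lbndpre}.}

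The plan is to encode an arbitrary space-$S$ computation into the dynamical system so that the invariant measure reveals the outcome of the computation, where $S = \Theta(\log\frac{1}{\epsilon} + \log\log\frac{1}{\delta})$. The natural vehicle is to embed a deterministic Turing machine (or, more convenient here, a bounded counter / pointer machine) on a configuration space of size roughly $2^{S}$, represented as a partition of $[0,1]$ into $N = 2^{\Theta(S)}$ small intervals $I_1,\dots,I_N$, each of length $\approx 1/N$. The transition function $f$ will be (essentially) a piecewise-constant map sending the midpoint of $I_j$ near the midpoint of $I_{\suc(j)}$, where $\suc$ is the one-step transition of the configuration graph of the machine. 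Since $\epsilon$ controls the scale of the Gaussian kernel, and the intervals have width $\approx 1/N$, we want $N \approx 1/\epsilon$ so that the noise does not blur neighboring configurations together; this is exactly where the $\log\frac1\epsilon$ term in $S$ comes from. The $\log\log\frac1\delta$ term enters because distinguishing ``halting'' from ``non-halting'' via the invariant measure requires the measure on the distinguished region to differ by more than $\delta$ from what it would be otherwise, and the transient behavior before the machine settles contributes an error that shrinks only polynomially in the relevant parameters — so to push it below $\delta$ we need the machine to be able to run for $\poly\log\frac1\delta$ steps, which needs an extra $O(\log\log\frac1\delta)$ bits of ``clock''.

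Concretely, I would proceed as follows. First, fix any language $L$ decidable in space $S$ but not in space $o(S)$ (a standard space-hierarchy argument gives such an $L$ — e.g. a padded version of a $\mathbf{SPACE}(S)$-complete problem, or directly a diagonal language). Given an input, the $\mathbf{SPACE}(S)$ machine deciding $L$ has at most $2^{cS}$ configurations for some constant $c$; set $N = 2^{cS}$ and choose $\epsilon$ so that $1/\epsilon = \Theta(N)$ and $\delta$ so that $\log\log\frac1\delta = \Theta(S)$ as well (the two contributions to $S$ are independently tunable, so we can realize any target $S$). Build $f$ as the piecewise map implementing $\suc$ on the interval partition, arranging that the unique halting configuration is a fixed point (a sink) — or rather, to make the invariant measure concentrate, route both accept and reject to two distinct absorbing intervals $I_{\rm acc}$, $I_{\rm rej}$. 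Then argue: (i) because the noise scale $\epsilon$ is small compared to interval width, the induced Markov chain on $[0,1]$ behaves, up to total-variation error $\ll \delta$, like the finite Markov chain on $\{1,\dots,N\}$ that deterministically follows $\suc$ with a tiny probability of a ``slip'' to a neighbor; (ii) since the machine halts within $T \le 2^{cS}$ steps and $2^{cS}$ is much smaller than the mixing-relevant scales, the invariant measure of this perturbed chain puts mass $\ge 1 - o(\delta)$ on $I_{\rm acc} \cup I_{\rm rej}$, and crucially puts mass $\ge \frac12$ on $I_{\rm acc}$ iff the input is in $L$; (iii) therefore an algorithm computing $\mu$ to precision $\delta$ on the interval $I_{\rm acc}$ decides $L$, and since $f$ here is trivially describable/computable (it is essentially a lookup in the transition table, computable in space $O(S) = O(\log N)$, and the Gaussian kernel is explicit), the only resource that matters is the space of the measure-computing algorithm — which must therefore be $\Omega(S) = \Omega(\log\frac1\epsilon + \log\log\frac1\delta)$.

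The main obstacle is step (ii): controlling the invariant measure of the noisy chain and showing that the ``slip'' probabilities don't accumulate over the $T$ steps before halting to destroy the signal, and that after halting the mass genuinely concentrates on the absorbing intervals rather than slowly leaking back out. A clean way to handle this is to make the absorbing intervals truly absorbing at the level of $f$ by design, and to bound the probability that a trajectory of length $T$ ever slips, which is at most $T \cdot (\text{per-step slip prob}) \le T \cdot e^{-\Omega(1/\epsilon^2 N^{-2})}$; choosing the constants so that $1/\epsilon$ is a large enough multiple of $N$ makes this $\ll \delta$ even after multiplying by $T \le 2^{cS} = \poly(N)$. One also needs to check that the invariant measure is unique (it is, by Theorem C / the general theory of Gaussian-perturbed systems, since the kernel is everywhere positive) and that the transient contribution from the initial distribution is irrelevant for the \emph{invariant} measure (it is, by definition — $\mu$ is the fixed point, so we only need to understand where the chain equilibrates). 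A secondary bookkeeping obstacle is making $f$ continuous for the eventual strengthening to Theorem~\ref{lbnd}, but the lemma as stated explicitly permits discontinuous $f$, so for the lemma I would keep $f$ piecewise constant and defer smoothing to the proof of the full theorem.
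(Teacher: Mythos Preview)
Your high-level plan --- embed a space-bounded Turing machine by partitioning $[0,1]$ into $N=2^{\Theta(S)}$ intervals and making $f$ piecewise constant implementing the transition function --- matches the paper. But the specific construction you propose has a genuine gap, and your treatment of the $\log\log\frac{1}{\delta}$ term is more complicated than necessary.

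\medskip
\textbf{The absorbing-state construction does not work.} You route the computation into two absorbing intervals $I_{\rm acc}$ and $I_{\rm rej}$, intending the invariant measure to concentrate on whichever one the machine reaches from the initial configuration. But the invariant measure is, by definition, independent of the initial distribution --- you even say so yourself. With Gaussian noise the kernel is everywhere positive, so the chain is irreducible and has a \emph{unique} invariant measure $\mu$. If both $I_{\rm acc}$ and $I_{\rm rej}$ are near-fixed points of $f$, then $\mu$ will put mass on each of them in proportion to their relative escape/re-entry rates under the noisy dynamics, which has nothing to do with which one the deterministic orbit from the designated initial configuration happens to reach. In short, the invariant measure sees the global graph of $\suc$, not the orbit from one distinguished vertex; two sinks give you no information about the computation.

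The paper solves this by \emph{not} using absorbing states at all. Instead, both accept and reject route back to the initial configuration, so the deterministic dynamics is a single cycle (the orbit of the initial state) and every other state is transient. The asymmetry between accept and reject is encoded in the \emph{length of the detour} taken before restarting: on accept, the orbit first walks through a block of $S^2$ extra states located in $[1/2,1]$; on reject it restarts immediately. The invariant measure then concentrates on this cycle, and its mass on $[1/2,1]$ is roughly the fraction of the cycle spent there --- at least $1/2$ if the machine accepts, essentially $0$ otherwise. A counter bounding the number of steps to $S$ ensures that from \emph{every} configuration (not just reachable ones) the orbit eventually returns to the initial state, so the cycle is the unique recurrent class even after noise perturbations.

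\medskip
\textbf{The $\log\log\frac{1}{\delta}$ term is trivial.} The paper obtains it in one line: the output has $\log\frac{1}{\delta}$ bits, so one needs $\Omega(\log\log\frac{1}{\delta})$ space merely to index the output bit being written. The reduction itself only needs the invariant measure to \emph{constant} precision (distinguish $\ge 1/3$ from $\ll 1/3$), so $\delta$ never enters the embedding. Your attempt to fold a $\poly\log\frac{1}{\delta}$-step clock into the machine is unnecessary and, as stated, does not clearly yield the bound anyway.
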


\begin{proof}
Since our output is of size $\log\frac{1}{\delta}$, it requires $\Omega\left(\log\log\frac{1}{\delta}\right)$ space to simply keep track of which bit we are currently outputting. This immediately shows the $\Omega\left(\log\log\frac{1}{\delta}\right)$ part of the lower bound.

It remains to show the $\Omega\left(\log\frac{1}{\epsilon}\right)$ portion of the lower bound. We will present a $SPACE(\log M)$-reduction from $SPACE(M)$ to the problem of computing the invariant measure of a noisy dynamical system $S_{\epsilon}$ with $\epsilon = 2^{-\Theta(M)}$, thus showing computing the invariant measure of a noisy dynamical system requires space at least $\Omega(\log\frac{1}{\epsilon})$.

More specifically, we will show how to convert any Turing machine $T$ with tape size $M$ along with an input $s$ into a function $f:X\rightarrow X$ that `embeds' this machine/input pair. We will construct this embedding so that the invariant measure of the corresponding dynamical system will have significant measure on some subset of the domain $X$ if $T$ accepts $s$ and close to zero measure otherwise. 

Let $S$ be the total number of states of the Turing machine $T$ (including the current state of the tape, so $S = \Theta(2^{M})$), and let $N = 2S^2$. Choose $X$ to be the unit interval $[0, 1]$, and partition $X$ into the $N$ intervals $X_{k} = [\frac{k}{N}, \frac{k+1}{N}]$ for $0 \leq k < N$. Let $c_k = \frac{2k+1}{2N}$ be the center of interval $X_k$. 

Choose $\epsilon$ (the size of the Gaussian noise) so that $\int_{-1/2N}^{1/2N}p_{\epsilon}(x)dx = 1-N^{-100}$; since the tail of a Gaussian decreases to $0$ exponentially quickly, it suffices to take $\epsilon = \Omega(N^{-2}) = 2^{-O(M)}$ (then this integral corresponds to the probability of being at least $\Omega(N)$ standard deviations away from the mean).

Finally, if $x\in X_{k}$, then we define $f$ so that $f(x) = c_{\suc(k)}$, where $\suc(k):\{0, \dots, N-1\} \rightarrow \{0, \dots, N-1\}$ is defined as follows.

\begin{enumerate}[(i)]
\item
If $k < S^2$, set $(v, t) = \left(\lfloor\frac{k}{S}\rfloor, k - S\lfloor\frac{k}{S}\rfloor\right)$. We will interpret $v$ as the binary representation of some state of $T$, and $t$ as a counter of how many steps we have run machine $T$ for so far. 

\begin{enumerate}
\item
If $v$ is an accepting state, set $\suc(k) = S^2$. 
\item
If $v$ is a rejecting state, set $\suc(k) = sS$, where $s$ is the initial state of the Turing machine $T$.
\item
If $t<S-1$ and $v$ is neither an accepting or a rejecting state, find the successor state $v'$ of $v$ according to the Turing machine $T$ (note that since computation is local, this can be done in space $O(\log M)$), and set $\suc(k) = v'S + (t+1)$. 
\item
If $t=S-1$, set $\suc(k) = sS$, where $s$ is the initial state of the Turing machine $T$.
\end{enumerate}

\item
If $S^2 \leq k < 2S^2 - 1$, then $\suc(k) = k+1$. 
\item
If $k = 2S^2-1$, then $\suc(k) = sS$, where $s$ is the initial state of $T$. 
\end{enumerate}

Intuitively, this function $f$ simulates the Turing machine $T$ for up to $S$ time steps (the maximum amount of time a Turing machine with $S$ states can take to reach an accepting state). If, within these $S$ time steps, we encounter an accepting state, we go on a walk for another $S$ time steps through $[1/2, 1]$ and then return to the initial state; otherwise, if we encounter a rejecting state (or run for $S$ steps without accepting or rejecting), we immediately return to the initial state. In this way, if $s$ is an accepting initial state, the invariant measure will have approximately half their weight on the interval $[1/2, 1]$, and if $s$ is not an accepting initial state, the invariant measure will have approximately no weight on $[1/2, 1]$. We formalize this intuition below.

Let $\mu$ be the invariant measure of this dynamical system perturbed by Gaussian noise of variance $\epsilon^2$ with $\epsilon$ as chosen above (note that since the noise is Gaussian, there must be a unique invariant measure; this follows from the fact that for any set $U$ of positive measure, the probability $x_{t+1} \in U$ given $x_{t}$ is always strictly positive). We claim that if $T$ eventually accepts on $s$, then $\mu$ will have measure at least $1/3$ on $[1/2, 1]$. Otherwise, $\mu$ will have measure approximately $0$ on $[1/2, 1]$.

Let $\mathcal{S} = \{sS, \suc(sS), \suc(\suc(sS)), \dots \}$ be the set of iterates of the initial state $s$ of our Turing machine under this successor function. Note that if $T$ accepts starting on $s$, then $\{S^2, \dots, 2S^2-1\}$ is a subset of $\mathcal{S}$; otherwise, if it rejects or fails to halt, then $\{S^2, \dots, 2S^2-1\}$ is not a subset of $\mathcal{S}$.

We first claim that the weight of the invariant measure $\mu$ over states in $\mathcal{S}$ is at least $1-N^{-99}$. To see this, let $x_1, x_2, \dots$ be a sequence of iterates of our dynamical system. Call a time $t$ \textit{bad} if $x_{t} \in X_{k}$ but $x_{t+1} \not\in X_{\suc(k)}$. By our choice of $\epsilon$, the probability of any given time $t$ being bad is at most $N^{-100}$ and is independent of all other times being bad. In addition, by our construction, after $N$ noise-free steps we are guaranteed to be in $\mathcal{S}$, since after $N$ steps of $\suc(k)$ we must pass through $sS$. If we let $X_{\mathcal{S}} = \cup_{k\in \mathcal{S}} X_k$, it then follows that the probability that $x_{t} \in X_{\mathcal{S}}$ is at least $(1-N^{-100})^{N} \geq 1-N^{-99}$. 

Next, assume that $T$ accepts on $s$, and let $X_{path} = \cup_{k=S^2}^{2S^2-1}X_{k} = [1/2, 1]$; note that $X_{path}$ is a subset of $X_{\mathcal{S}}$. We claim that the weight under the measure $\mu$ of $X_{path}$ is at least $\frac{1}{2}(1-2S^{-9})$ of the weight of $X_{\mathcal{S}}$. To see this, call the sequence $x_{t}, x_{t+1}, \dots, x_{t+|\mathcal{S}|}$ \textit{good} if no time $t+i$ is bad for any $0\leq i < |\mathcal{S}|$ (in other words, no low probability noise events occur for $|\mathcal{S}|$ steps). Note that this occurs with probability at least $(1-N^{-100})^{N} \geq 1-N^{-99}$. But in any good sequence, each element of $\mathcal{S}$ appears exactly once; it follows that, asymptotically, the probability that $x_{t}$ belongs to $X_{path}$ given that $x_{t}$ belongs to $X_{\mathcal{S}}$ is at least

\begin{equation*}
(1-N^{-99})\frac{S^2}{|\mathcal{S}|} \geq (1-N^{-99})\frac{S^2}{2S^2} = \frac{1}{2}(1-N^{-99})
\end{equation*}

Combining these two results, it follows that the weight of the invariant measure over $X_{path}$ is at least

\begin{equation*}
\frac{1}{2}(1-S^{-99})^2 > \frac{1}{3}
\end{equation*}

\noindent
On the other hand, if $T$ does not accept on $s$, then $[1/2, 1] \cap X_{\mathcal{S}} = \emptyset$, and therefore the weight of $\mu$ over $[1/2, 1]$ is at most $N^{-99} \ll 1/3$, as desired.

\end{proof}

Note that, since the function constructed in this reduction is piecewise linear with $O(2^M)$ pieces, it is in fact $(\log \epsilon^{-1}) + \log$-space integrable in the sense of Theorem \ref{ubnd}. On the other hand, this function is not continuous (let alone analytic), and hence does not satisfy the conditions of Theorem \ref{ubndalt}.  

To prove Theorem \ref{lbnd}, we transform the above example into a uniformly analytic function by replacing each of the intervals in the construction in Lemma \ref{lbndpre} with an analytic approximation to a step function. We describe this below, starting with the construction of our analytic `step function'.

\begin{lemma}\label{step}
For any $\alpha, \beta > 0$, there exists an analytic function $F(x):\R\rightarrow\R$ that satisfies the following constraints:

\begin{itemize}
\item For all $x < -\alpha$, $|F(x)| < \beta$.
\item For all $x > \alpha$, $|F(x)-1| < \beta$.
\item For all integer $k \geq 0$ and all $x$, $|\partial^{k}F(x)| \leq k!\eta^{k}$ for some $\eta = O(\alpha^{-1}\log\beta^{-1})$.
\item The function $F(x)$ is computable to within precision $\delta$ in space $O(\log\log \delta^{-1})$. 
\end{itemize}
\end{lemma}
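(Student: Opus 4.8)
The plan is to build $F$ explicitly out of the error function, since $\erf$ is essentially the integral of a Gaussian and hence an extremely well-behaved analytic function. Concretely, I would set
\[
F(x) = \frac{1}{2}\left(1 + \erf\!\left(\frac{cx}{\alpha}\right)\right)
\]
for a suitable absolute constant $c$ (something like $c = \sqrt{\log \beta^{-1}}$ up to constants, so that the Gaussian tail bound $\int_{t}^{\infty} e^{-s^2}\,ds = O(e^{-t^2})$ gives $|F(x)| < \beta$ for $x < -\alpha$ and $|F(x) - 1| < \beta$ for $x > \alpha$). This immediately handles the first two bullet points by the standard tail estimate for $\erf$, and analyticity of $F$ is inherited from analyticity of $\erf$ on all of $\R$.

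The two remaining bullets are the substantive ones. For the derivative bound, note $F'(x) = \frac{c}{\alpha\sqrt{\pi}}\exp(-c^2x^2/\alpha^2)$, so the $k$-th derivative of $F$ is, up to the leading factor, the $(k-1)$-st derivative of a scaled Gaussian, which is a Hermite polynomial times a Gaussian. The key estimate is that the $m$-th derivative of $e^{-u^2}$ is bounded in absolute value by something like $2^{m} m!^{1/2} \cdot (\text{const})$ uniformly in $u$ (this follows from the generating-function/Cauchy-estimate bound on Hermite polynomials, or directly from a Cauchy integral estimate on $e^{-z^2}$ over a circle of radius $\sqrt{m}$). Tracking the chain-rule factor of $(c/\alpha)^k$ through the substitution $u = cx/\alpha$, one gets $|\partial^k F(x)| \leq k! \eta^k$ with $\eta = O(c/\alpha) = O(\alpha^{-1}\log\beta^{-1})$, which is exactly what is claimed (one may need to be slightly generous with constants and absorb the $m!^{1/2} \leq m!$ slack — this only helps). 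I would carry out this estimate via a Cauchy integral: for fixed real $x$, write $F(x)$ as a contour integral of $F(z)$ over a circle of radius $r$ around $x$, choose $r = \Theta(\alpha/c)$, and bound $|F(z)|$ on that circle using that $\erf$ grows only like $e^{c^2 r^2/\alpha^2} = e^{O(1)}$ there; then $|\partial^k F(x)| \leq k!\, r^{-k}\max_{|z-x|=r}|F(z)| \leq k!\, r^{-k} \cdot O(1)$, giving $\eta = O(1/r) = O(c/\alpha)$.

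For the final bullet — computability of $F$ within precision $\delta$ in space $O(\log\log\delta^{-1})$ — I would simply invoke remark \ref{funcs} of Section \ref{sect:spacebound}: $\erf$ is (like $\exp$, $\log$, $\arctan$) computable to precision $2^{-\poly(n)}$ in logarithmic space by truncating its Taylor series, and $F$ is obtained from $\erf$ by an affine change of variable and an affine post-composition, both of which are log-space operations (scaling by $c/\alpha$, where $\alpha,\beta$ are part of the input, is real arithmetic, remark \ref{realarith}); composition of a constant number of log-space functions is log-space by remark \ref{comp}. One subtlety to address: $\erf$ is not listed by name in remark \ref{funcs}, so I would note that $\erf(t) = \frac{2}{\sqrt\pi}\sum_{j\geq 0} \frac{(-1)^j t^{2j+1}}{j!(2j+1)}$ is an everywhere-convergent power series with rapidly decaying coefficients, so the same Alt-style argument (\cite{Alt84}) applies verbatim — a prefix of length $\poly(n)$ suffices for precision $2^{-n}$ on the relevant bounded range of arguments, and each term is log-space computable by remarks \ref{facts}, \ref{mult}, \ref{div}.

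**The main obstacle** I anticipate is the uniform-in-$x$ derivative bound: one must be careful that the constant $\eta$ depends only on $\alpha$ and $\beta$ (not on $k$ or $x$), and getting the clean form $k!\eta^k$ rather than, say, $k!^{3/2}\eta^k$ requires using the Cauchy-integral argument rather than a crude Hermite-polynomial coefficient bound. Everything else (the tail estimates, the space-complexity claims) is routine given the machinery already assembled in Section \ref{sect:spacebound}.
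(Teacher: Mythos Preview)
Your proposal is correct, but it takes a genuinely different route from the paper. The paper constructs $F$ as the logistic sigmoid $F(x) = (1+e^{-Cx})^{-1}$ with $C \approx \alpha^{-1}\log\beta^{-1}$, verifies the tail bounds by direct algebra, and obtains the derivative bound by rewriting $F(x) = \tfrac{1}{2}(1+\tanh(Cx/2))$ and invoking an explicit closed-form for $\partial^k \tanh$ in terms of Eulerian numbers of the second kind, which yields $|\partial^k F| \leq 2^{k+1}k!\,(C/2)^k$ and hence $\eta = C$. Computability is then immediate from log-space computability of $\exp$. Your $\erf$-based construction with the Cauchy-integral estimate is cleaner in one respect: it avoids the combinatorial identity and works for any entire function bounded in a horizontal strip, and in fact it yields the slightly sharper $\eta = O(c/\alpha) = O(\alpha^{-1}\sqrt{\log\beta^{-1}})$ --- note that with your choice $c \approx \sqrt{\log\beta^{-1}}$ the quantity $c/\alpha$ is $\alpha^{-1}\sqrt{\log\beta^{-1}}$, not $\alpha^{-1}\log\beta^{-1}$ as you wrote, though this only strengthens the conclusion. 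The tradeoff is that you must separately argue log-space computability of $\erf$ (which, as you note, goes through by the same Taylor-prefix argument as for $\exp$, with the observation that for $|t| \gtrsim \sqrt{n}$ one can simply output $\pm 1$), whereas the paper's choice reduces directly to $\exp$, which is already in its toolkit.
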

\begin{proof}
We will consider functions of the form

\begin{equation}
F(x) = \frac{1}{1+e^{-Cx}}
\end{equation}

\noindent
where $C$ is a positive integer. Note that in order for $|F(x)-1|$ to be less than $\beta$ for all $x>\alpha$, we must have

\begin{equation*}
\left|\frac{1}{1+e^{-C\alpha}} -1\right| < \beta
\end{equation*}

\noindent
which is satisfied when
\begin{equation*}
C > \alpha^{-1}\log\frac{1-\beta}{\beta}
\end{equation*}

Likewise, in order for $|F(x)|$ to be less than $\beta$ when $x<-\alpha$, we must have

\begin{equation*}
\left|\frac{1}{1 + e^{C\alpha}}\right| < \beta
\end{equation*}

\noindent
which is satisfied when

\begin{equation*}
C > \alpha^{-1}\log\frac{1-\beta}{\beta}
\end{equation*}

\noindent
Therefore to satisfy the first two requirements, we can take

\begin{equation*}
C = \left\lceil \alpha^{-1}\log\frac{1-\beta}{\beta} \right\rceil \approx \alpha^{-1}\log\beta^{-1}
\end{equation*}

To prove the third requirement, note that we can write

\begin{equation*}
F(x) = \frac{1}{2}\left(1 + \tanh\left(\frac{Cx}{2}\right)\right)
\end{equation*}

\noindent
By \cite{AS65}, it is known that (for $x\geq 0$),

\begin{eqnarray*}
\left|\dfrac{d^k\tanh(x)}{dx^k}\right| &=& \frac{2^{k+1}e^{2x}}{(1+e^{2x})^{k+1}}\left|\sum_{j=0}^{k-1}\left\langle {k \atop j} \right\rangle (-1)^{j} e^{2jx}\right| \\
&\leq & \frac{2^{k+1}e^{2(k+1)x}}{(1+e^{2x})^{k+1}}\sum_{j=0}^{k-1}\left\langle {k \atop j} \right\rangle \\
&=& 2^{k+1}\left(\frac{e^{2x}}{1+e^{2x}}\right)^{k+1} k! \\
&\leq & 2^{k+1} k!
\end{eqnarray*}

\noindent
where $\left\langle {n \atop i} \right\rangle$ are Eulerian numbers of the second kind (in the third line we use the fact that $\sum_{i} \left\langle {n \atop i} \right\rangle = n!$). Since $\tanh(x)$ is an odd function, the same bound holds for $x \leq 0$. It follows that for all $k > 0$ and all $x$,

\begin{equation*}
|\partial^{k}F(x)| \leq C^{k} k!
\end{equation*}

\noindent
and therefore we can take $\eta = C$ (for $k=0$, it suffices to note that $|F(x)| \leq 1$ for all $x$). 

Finally, since we can compute $e^{x}$ to within precision $\delta$ in space $O(\log\log \delta^{-1})$ via Lemma \ref{compexp}, and since we can perform all arithmetic operations to within precision $\delta$ in space $O(\log\log\delta^{-1})$ via the remarks in Section \ref{sect:spacebound}, it is possible to compute $F(x)$ in space $O(\log\log\delta^{-1})$.
\end{proof}

We now proceed to prove Theorem \ref{lbnd}.

\newtheorem*{thm:lbnd}{\bf Theorem \ref{lbnd}}
\begin{thm:lbnd}{\em
Any algorithm that can compute the invariant measure $\mu$ to within precision $\delta$ of a dynamical system with Gaussian noise kernel $p_{f(x)}^{\epsilon}(\cdot)$ and analytic transition function $f(x)$ (that uniformly satisfies $|\partial^{k}f(x)| \leq k!\eta^k$ for some $\eta = \poly(\epsilon^{-1})$) requires space at least $\Omega\left(\log\frac{1}{\epsilon} + \log\log\frac{1}{\delta}\right)$.}
\end{thm:lbnd}
\begin{proof}
We will use the function $F(x)$ defined in Lemma \ref{step} to approximate the function $f(x)$ defined in the proof of Lemma \ref{lbndpre} with an analytic function. We will then show that the dynamical system corresponding to this new $f$ still has the property that it has significant measure on the interval $[1/2, 1]$ if and only if the Turing machine $T$ accepts $s$. 

As before, let $S = 2^{M}$ be the number of states of the Turing machine $T$, and let $N = 2S^2$. Partition the interval $[0,1]$ into the $N$ intervals $X_{k} = [\frac{k}{N}, \frac{k+1}{N}]$ for $0 \leq k < N$, and let $c_k = \frac{2k+1}{2N}$ be the center of interval $X_k$. Let $\suc(k)$ be defined equivalently as in the proof of Theorem \ref{lbndpre}. Then, in Lemma \ref{step}, set $\alpha = \beta = S^{-100}$, and consider the function

\begin{equation}
f(x) = c_{\suc(0)} + \sum_{i=1}^{N} \left(c_{\suc(i)} - c_{\suc(i-1)}\right)F\left(x-\frac{i}{N}\right)
\end{equation}

Note that by Lemma \ref{step}, this function $f$ satisfies the following condition: if $|x - c_k| \leq \frac{1}{2N} - \alpha$, then $|f(x) - c_{\suc(k)}| \leq N\beta = O(S^{-98})$. We will next claim that if we set $\epsilon = S^{-10}$, then we simultaneously have that

\begin{equation}\label{eq:pbdtogd}
\max_{x} p_{\epsilon}(x) \leq \frac{S^{10}}{\sqrt{2\pi}}
\end{equation}

\noindent
and that

\begin{equation}\label{eq:pgdtobd}
\int_{-(\frac{1}{2N} - \alpha - N\beta)}^{\frac{1}{2N} - \alpha - N\beta}p_{\epsilon}(x) dx \geq 1-16 S^{-16}
\end{equation}

To show the first of these inequalities, note simply that $p_{\epsilon}(x) \leq \frac{1}{\epsilon\sqrt{2\pi}}$; inequality \ref{eq:pbdtogd} then follows from substituting $\epsilon = S^{-10}$. To show the second inequality, note first that $\frac{1}{2N} - \alpha - N\beta \geq \frac{1}{4N}$. Hence the integral in inequality \ref{eq:pgdtobd} is at most the probability that the noise is within $\frac{1}{4N\epsilon} = \frac{S^{8}}{4}$ standard deviations of its mean. By Chebyshev's inequality it follows that this probability is at most $1 - 16S^{-16}$, from which this second inequality follows (much better bounds are in fact possible).

We can now proceed to analyze the invariant measure $\mu$ of this dynamical system. For each $k$, let $Y_{k} = \left[c_k - \frac{1}{2N} + \alpha, c_k + \frac{1}{2N} - \alpha\right]$, and let $Y = \cup_{k=0}^{N-1}Y_{k}$. We will first show that $\mu$ has measure at least $1-32S^{-16}$ on $Y$. 

Let $x_1, x_2, \dots$ be a sequence of iterates of this dynamical system. Call a time $t$ \textit{bad} if $x_{t} \in Y_{k}$ but $x_{t+1} \not\in Y_{\suc(k)}$. By inequality \ref{eq:pbdtogd}, the probability that a time $t$ is bad (given that $x_{t} \in Y_k$ for some $k$) is at most $16S^{-16}$. It follows that $\mathrm{Pr}[x_{t+1} \not\in Y| x_{t}\in Y] \leq 16S^{-16}$. On the other hand, note that if $x_{t} \not\in Y$, then by inequality \ref{eq:pgdtobd}, the probability $x_{t+1}$ is in $Y$ is at least

\begin{eqnarray*}
1 - \left(\max_{x} p_{\epsilon}(x)\right) |X \setminus Y| &\geq & 1 - \frac{S^{10}}{\sqrt{2\pi}}(N\alpha) \\ 
& \geq & 1 - \sqrt{\frac{2}{\pi}}S^{-88} \\
& \geq & \frac{1}{2}
\end{eqnarray*}

\noindent
It follows that the weight of $\mu$ over $Y$ must be at least $0.5/(0.5+16S^{-16}) \geq 1-32S^{-16}$, as desired.

Next, as before, let $\mathcal{S} = \{sS, \suc(sS), \suc(\suc(sS)), \dots \}$ be the set of iterates of the initial state $s$ of our Turing machine. If $T$ accepts starting on $s$, then $\{S^2, \dots, 2S^2-1\}$ is a subset of $\mathcal{S}$; otherwise, if it rejects or fails to halt, then $\{S^2, \dots, 2S^2-1\}$ is not a subset of $\mathcal{S}$. Let $Y_{\mathcal{S}} = \cup_{k\in \mathcal{S}} Y_k$. We will next show that the weight of $\mu$ over $Y_{\mathcal{S}}$ is at least $1 - 64S^{-14}$.

To prove this, recall that if we start at some $x \in Y$, after $N$ noise-free steps, we are guaranteed to be in $Y_{\mathcal{S}}$. Since the weight of $\mu$ over $Y$ is at least $1-32S^{-16}$ and since the probability a string of $N$ steps are all good is at least $(1-16S^{-16})^{N} \geq 1 - 32S^{-14}$, the weight of $\mu$ over $Y_{\mathcal{S}}$ is at least $(1-32S^{-16})(1-32S^{-14}) \geq 1-64S^{-14}$. 

Finally, assume that $T$ accepts on $s$, and let $Y_{path} = \cup_{k=S^2}^{2S^2-1}Y_{k} = [1/2, 1]$; note that $Y_{path}$ is a subset of $Y_{\mathcal{S}}$. We claim that the weight under the measure $\mu$ of $Y_{path}$ is at least $\frac{1}{2}(1-32S^{-14})$ of the weight of $Y_{\mathcal{S}}$. To see this, call the sequence $x_{t}, x_{t+1}, \dots, x_{t+|\mathcal{S}|}$ \textit{good} if no time $t+i$ is bad for any $0\leq i < |\mathcal{S}|$. Note that this occurs with probability at least $(1-16S^{-16})^{N} \geq 1-32S^{-14}$. But in any good sequence, each element of $\mathcal{S}$ appears exactly once; it follows that, asymptotically, the probability that $x_{t}$ belongs to $X_{path}$ given that $x_{t}$ belongs to $X_{\mathcal{S}}$ is at least

\begin{equation*}
(1-32S^{-14})\frac{S^2}{|\mathcal{S}|} \geq (1-32S^{-14})\frac{S^2}{2S^2} = \frac{1}{2}(1-32S^{-14})
\end{equation*}

Combining these two results, it follows that the weight of the invariant measure over $Y_{path}$ (and hence $[1/2, 1]$) is at least

\begin{equation*}
\frac{1}{2}(1-32S^{-14})(1-64S^{-14}) > \frac{1}{3}
\end{equation*}

\noindent
On the other hand, if $T$ does not accept on $s$, then $[1/2, 1] \cap Y_{\mathcal{S}} = \emptyset$, and therefore the weight of $\mu$ over $[1/2, 1]$ is at most $4S^{-14} \ll 1/3$, as desired.

\end{proof}








\bibliographystyle{alpha}
\bibliography{bibliography}

\begin{appendix}

\section{Space-Bounded Church-Turing Thesis} \label{sec:sbct}

This appendix serves as a short introduction to the Space-Bounded Church-Turing thesis (hereafter referred to as the SBCT). For more further details, we suggest the reader consult \cite{BRS}.

Let $\mathcal{S} = X_{t}$ be a closed, discrete-time stochastic system over a state space $\mathcal{X}$. We define the \textit{memory available to $\mathcal{S}$} as

\begin{equation}
\mathcal{M}(\mathcal{S}) = \sup_{t} \sup_{\mu} I_{X_{t} \sim \mu}(X_{t}; X_{t+1})
\end{equation}

\noindent
where the inner supremum is taken over all distributions $\mu$ over $\mathcal{X}$. Here, $I(X_{t};X_{t+1})$ is Shannon's mutual information and is a measure of how much information is preserved from time $t$ to time $t+1$. If $f(x, y)$ is the PDF of the distribution of $(X_{t}, X_{t+1})$, then $I(X_{t};X_{t+1})$ is defined via

\begin{equation*}
I(X_{t};X_{t+1}) = \int\int f(x,y)\log\frac{f(x,y)}{f(x)f(y)}dxdy
\end{equation*}

We can now state a concrete form of the SBCT (in \cite{BRS}, the statement below is referred to as the Simulation Assertion). 

\begin{conjecture}\label{SBCT}
The problem of computing the asymptotic behavior of a stochastic system $\mathcal{S}$ with memory $M = \mathcal{M}(\mathcal{S})$ to within precision $2^{-n}$ is in the complexity class $\mathbf{SPACE}((M+\log n)^{O(1)})$. 
\end{conjecture}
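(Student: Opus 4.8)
\medskip

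\noindent\textbf{A proof proposal.}
As stated, Conjecture~\ref{SBCT} is open in full generality, and the plan below is really two things: a record of the special case that the present paper already settles, and a sketch of the route one would take towards the general statement. The organizing idea is that a bound $I(X_t;X_{t+1})\le M$ on the per-step mutual information means the system can preserve only about $M$ bits of state from one time step to the next, so that --- for the purpose of computing \emph{asymptotic} statistics --- the continuous chain $X_t$ ought to be replaceable by a \emph{finite} Markov chain on $2^{O(M)}$ states, and finite Markov chains are exactly what \Thmref{matpowers} lets us analyse in space polylogarithmic in their size.

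First, the special case handled in this paper. For a one-dimensional system perturbed by Gaussian noise of variance $2\epsilon^2$ one shows $\mathcal{M}(\mathcal{S})=O(\log\tfrac1\epsilon)$ (the kernel $K_\epsilon$ is a convolution at scale $\epsilon$, so at most $O(\log 1/\epsilon)$ bits survive a step; see \cite{BRS}). Writing $M=\Theta(\log 1/\epsilon)$ and $2^{-n}=\delta$, \Thmref{ubnd} produces an algorithm computing the invariant measure --- the unique asymptotic object here --- in space $O(\poly(\log\tfrac1\epsilon+\log\log\tfrac1\delta))=O(\poly(M+\log n))$, which is exactly the membership claim of the conjecture for this family; \Thmref{lbnd} shows it cannot be improved. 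So on the class of systems to which this note applies, the conjecture is verified, and tightly.

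For a general closed stochastic system $\mathcal{S}$ with memory $M$, the plan is: (i) fix the target precision $2^{-n}$ and discretize $\mathcal{X}$ into $K=2^{O(M+\log n)}$ cells, chosen fine enough that the coarse-grained process keeps the asymptotic statistics of $X_t$ to within $2^{-n}$; (ii) form the $K\times K$ transition matrix $P$ of the coarse-grained chain, whose entries are integrals of the transition kernel against cell indicators and are computable --- under the standing assumption that the local description of the system is not itself a source of complexity --- to precision $2^{-\poly(\log K)}$ in space $\poly(\log K)=\poly(M+\log n)$; (iii) extract the required asymptotic object (a stationary distribution, or the projection onto the dominant eigenspace of $P$) by raising $P$ to a power $t=2^{\poly(\log K)}$ large enough for mixing, which by \Thmref{matpowers} costs only space $\poly(\log K)$; and (iv) read off the requested statistic from $P^{t}\mu_0$. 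Steps (ii)--(iv) are, in outline, the argument of Sections~\ref{sect:matpow}--\ref{sect:ubnd}.

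The hard part is step (i): turning the information-theoretic hypothesis ``$I(X_t;X_{t+1})\le M$'' into a \emph{constructive} coarse-graining of size $2^{O(M+\log n)}$ that provably preserves asymptotic behaviour. One must handle systems that diffuse information slowly but across many coordinates, the possibility of several (or nearly degenerate) invariant measures, the precise meaning assigned to ``asymptotic behaviour'' (Ces\`aro limits of iterates? the whole simplex of invariant measures? Lyapunov-type data?), and measure-theoretic pathologies that are simply absent in the Gaussian setting, where the kernel is bounded and everywhere positive. Making (i) rigorous --- and pinning down which notion of asymptotic behaviour the conjecture should quantify over --- is, I expect, the real obstacle, and is left to future work.
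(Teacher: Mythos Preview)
The statement you are addressing is a \emph{conjecture}, and the paper does not claim to prove it in general. What the paper does contain is (a) the remark, immediately following the conjecture, that ``this conjecture can be easily falsified by an artificial construction, [but] a case can be made that it holds for physically relevant systems'', together with (b) a short Lemma showing $\mathcal{M}(\mathcal{S})=\Theta(\log\tfrac1\epsilon)$ for the Gaussian-noise systems of this paper, and a one-line Corollary that combines this with \Thmref{ubnd} to verify the conjecture in that special case.

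Your handling of the special case is essentially identical to the paper's: compute the memory as $\Theta(\log 1/\epsilon)$ and invoke \Thmref{ubnd}. (The paper proves the memory bound directly in the appendix rather than citing \cite{BRS}, but the content is the same.) So on the part that the paper actually establishes, you and the paper agree.

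Where you diverge from the paper is in framing the general statement as ``open'' and then sketching a four-step programme towards it. The paper's position is stronger and in the opposite direction: the conjecture as literally stated is \emph{false} --- there are artificial stochastic systems with bounded memory whose asymptotic behaviour is not computable in the claimed space --- and it is meant to be read as a thesis about physically reasonable systems, not as a theorem awaiting proof. Consequently your step~(i), turning the information bound $I(X_t;X_{t+1})\le M$ into a faithful $2^{O(M+\log n)}$-state coarse-graining, is not merely ``the hard part'' but is provably impossible in full generality; any honest write-up should flag that the obstruction you identify is genuine and not just technical. The remainder of your sketch (steps (ii)--(iv)) is a fair summary of the machinery in Sections~\ref{sect:matpow}--\ref{sect:ubnd}, but it is machinery for the cases where (i) \emph{can} be carried out, not a route to the conjecture as stated.
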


While this conjecture can be easily falsified by an artificial construction, a case can be made that it holds for physically relevant systems. 
In particular, the present paper establishes Conjecture \ref{SBCT} in the case where $\mathcal{S}$ is a dynamical system with $\epsilon$-Gaussian noise. 

\begin{lemma}
If $\mathcal{S}$ is a dynamical system over $X = [0,1]$ with Gaussian noise kernel $p_{f(x)}^{\epsilon}$, then $\mathcal{M}(\mathcal{S}) = \Theta(\log\epsilon^{-1})$.
\end{lemma}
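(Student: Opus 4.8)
The plan is to estimate $I(X_t; X_{t+1})$ from above and below by $\Theta(\log \epsilon^{-1})$, working with the Markov kernel $K_\epsilon(y,x)$ and over distributions $\mu$ for the ``source'' variable $X_t$. Write $f(x,y)$ for the joint density of $(X_t, X_{t+1})$ when $X_t \sim \mu$, so that $f(x,y) = \mu(x) K_\epsilon(f(x), y)$ and the ``mutual information'' integral becomes $I(X_t;X_{t+1}) = \int\!\!\int \mu(x) K_\epsilon(f(x),y) \log \frac{K_\epsilon(f(x),y)}{\rho(y)}\,dx\,dy$, where $\rho(y) = \int \mu(x) K_\epsilon(f(x),y)\,dx$ is the marginal of $X_{t+1}$. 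Equivalently, $I = \expect_{x\sim\mu}\big[ D_{\mathrm{KL}}(Q_{f(x)} \,\|\, \rho)\big]$ where $Q_{f(x)}$ is the (truncated-to-$[0,1]$, renormalized) Gaussian centered at $f(x)$ with variance $\epsilon^2$.

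For the upper bound $\mathcal{M}(\mathcal{S}) = O(\log \epsilon^{-1})$: I would show that each conditional density $Q_{f(x)}$ and the marginal $\rho$ are both bounded between $c\cdot 1$ and $C/\epsilon$ pointwise on $[0,1]$ (the upper bound because a truncated Gaussian density is at most $O(1/\epsilon)$ and the lower bound because, on $[0,1]$, the Gaussian tail over any subinterval of length $1$ is at least some absolute constant times $\epsilon^{0}$ once normalized — more carefully, $K_\epsilon(f(x),y) \geq c$ for all $x,y$ since the center $f(x)\in[0,1]$ and the normalization constant $C_\epsilon(x)$ is $\Theta(1)$). Then $\log\frac{K_\epsilon(f(x),y)}{\rho(y)} \leq \log \frac{C/\epsilon}{c} = O(\log \epsilon^{-1})$ uniformly, and since the outer integral is against a probability measure and $K_\epsilon(f(x),\cdot)$ integrates to $1$, we get $I \leq O(\log\epsilon^{-1})$, uniformly in $\mu$, hence $\mathcal{M}(\mathcal{S}) = O(\log\epsilon^{-1})$.

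For the lower bound $\mathcal{M}(\mathcal{S}) = \Omega(\log\epsilon^{-1})$: here I get to choose both $f$ and $\mu$. Take $f$ to be (say) the identity, or more robustly any $f$ whose image contains two points $a,b$ at distance $\Omega(1)$; take $\mu = \frac12(\delta_a + \delta_b)$ in the limiting/discrete sense, or a smoothed version supported near $a$ and $b$. Then $X_{t+1}$ is a $\frac12,\frac12$ mixture of two nearly-disjoint narrow bumps (truncated Gaussians of width $\epsilon$ centered at $f(a)$ and $f(b)$). The mutual information $I(X_t;X_{t+1})$ is then essentially $\log 2$ minus the overlap error — but this only gives $\Omega(1)$, not $\Omega(\log\epsilon^{-1})$. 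To get the logarithmic factor I instead spread the source mass over $\Theta(1/\epsilon)$ well-separated points $a_1,\dots,a_m$ with $m = \Theta(1/\epsilon)$ and $\mu$ uniform on them (and $f$ chosen so the $f(a_i)$ are $\Theta(\epsilon)$-separated, e.g. $f$ a contraction spreading $[0,1]$ over an $\epsilon$-net of itself, or simply $f=\mathrm{id}$ with $a_i = i/m$). Since the conditional bumps have width $\sim\epsilon$ and are spaced $\sim\epsilon$ apart, knowing $X_{t+1}$ pins down $i$ with $\Omega(1)$ probability, so $I(X_t;X_{t+1}) \geq H(\text{index}) - O(1) = \log m - O(1) = \Omega(\log\epsilon^{-1})$. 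I would make this rigorous via Fano-type / data-processing reasoning: $I(X_t;X_{t+1}) \geq I(X_t; \widehat{X_t})$ where $\widehat{X_t}$ is the nearest-grid-point estimate of $X_{t+1}$, and bound the error probability of $\widehat{X_t}$ by a Gaussian tail $O(1)$ (small but constant) using disjointness of the bumps up to exponentially small overlap.

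The main obstacle I expect is the lower bound's logarithmic gain: a single well-separated pair only yields $\Omega(1)$, so one must genuinely exploit that $\sim 1/\epsilon$ distinguishable ``channel inputs'' fit into $[0,1]$ at the noise resolution, and handle the boundary truncation of the Gaussian (which distorts the bumps near $0$ and $1$ and the normalization $C_\epsilon(x)$) — restricting attention to grid points in, say, $[1/4,3/4]$ and noting $C_\epsilon(x) = 1+o(1)$ there cleanly sidesteps this. The upper bound is comparatively routine once the two-sided pointwise bounds $c \leq K_\epsilon(f(x),y) \leq C/\epsilon$ are established; the only mild care needed is that $\rho$ inherits the same bounds as a mixture of the $K_\epsilon(f(x),\cdot)$'s.
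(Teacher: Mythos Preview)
Your upper bound argument has a real gap: the claimed uniform lower bound $K_\epsilon(f(x),y)\geq c$ for an absolute constant $c$ is false. When $|f(x)-y|=\Omega(1)$ the truncated Gaussian density is of order $\epsilon^{-1}e^{-\Omega(1/\epsilon^2)}$, which is exponentially small in $1/\epsilon^2$, not bounded below by a constant. Consequently $\rho(y)$ need not be bounded below either (take $\mu$ supported on points $x$ with $f(x)$ far from $y$), so the pointwise bound on $\log\frac{K_\epsilon(f(x),y)}{\rho(y)}$ does not go through. The paper instead uses the entropy decomposition $I(X_t;X_{t+1})=H(X_{t+1})-H(X_{t+1}\mid X_t)$: since $X_{t+1}$ is supported on $[0,1]$ one has $H(X_{t+1})\le 0$, and since $X_{t+1}\mid X_t$ is (essentially) a Gaussian of scale $\epsilon$ its differential entropy is $\ln(\epsilon\sqrt{2\pi e})+O(1)$, giving $I\le O(\log\epsilon^{-1})$ directly. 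This sidesteps any pointwise density control.

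For the lower bound, note first that the paper's own proof addresses only the $O(\log\epsilon^{-1})$ direction, despite the $\Theta$ in the statement. More importantly, you write ``here I get to choose both $f$ and $\mu$,'' but $f$ is part of the given system $\mathcal{S}$; only $\mu$ is varied in the definition of $\mathcal{M}(\mathcal{S})$. And the lower bound cannot hold for every $f$: if $f$ is constant then $X_{t+1}$ is independent of $X_t$ and $\mathcal{M}(\mathcal{S})=0$. So either the $\Theta$ is to be read existentially (there are $f$ for which the memory is $\Omega(\log\epsilon^{-1})$, which your $1/\epsilon$-grid construction with $f=\mathrm{id}$ correctly establishes), or only the upper bound---which is all that the subsequent corollary uses---is actually being claimed.
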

\begin{proof}
We can write $I(X_{t};X_{t+1})$ as $H(X_{t+1}) - H(X_{t+1}|X_{t})$ (here $H(x)$ is the differential Shannon entropy; see \cite{CT91}). Let $p(x)$ be the PDF of $X_{t+1}$; by Jensen's inequality, note that

\begin{equation*}
H(X_{t+1}) = -\int_{0}^{1}p(x)\log p(x) dx \leq -\int_{0}^{1} 1\log 1 dx = 0
\end{equation*}

On the other hand, $X_{t+1}|X_{t}$ is a Gaussian with variance $\epsilon$; the differential entropy of such a distribution is given by $\ln(\epsilon\sqrt{2\pi e})$ (\cite{CT91}). Hence it follows that $I(X_{t};X_{t+1}) \leq -\ln(\epsilon\sqrt{2\pi e}) = O(\log \frac{1}{\epsilon})$. 
\end{proof}

\begin{corollary}
Conjecture \ref{SBCT} is true for the case where $\mathcal{S}$ is a dynamical system over $X = [0,1]$ with Gaussian noise kernel $p_{f(x)}^{\epsilon}$.
\end{corollary}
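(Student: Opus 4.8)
The plan is to obtain the corollary as an immediate consequence of the preceding Lemma together with Theorem~\ref{ubnd} (or, equally well, Theorem~\ref{ubndalt}). The Lemma establishes that a dynamical system $\mathcal{S}$ on $X=[0,1]$ with $\epsilon$-Gaussian noise has memory $\mathcal{M}(\mathcal{S}) = \Theta(\log \epsilon^{-1})$; writing $M = \mathcal{M}(\mathcal{S})$, this means $\log \epsilon^{-1} = \Theta(M)$, so any space bound of the form $\poly(\log\epsilon^{-1} + {}\cdot{})$ is simultaneously a bound of the form $\poly(M + {}\cdot{})$.

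First I would pin down what ``asymptotic behavior'' means here: for a Gaussian-perturbed system the invariant measure $\mu$ exists and is unique (as observed in the proof of Lemma~\ref{lbndpre}, since from any state every positive-measure set is reached with positive probability), and $\mu$ is exactly the object describing the asymptotic statistics of $\mathcal{S}$. Computing the asymptotic behavior to within precision $2^{-n}$ is therefore computing $\mu$ to precision $\delta = 2^{-n}$ in the sense of Section~\ref{sec:prelim}. By Theorem~\ref{ubnd}, under the standing assumption that $f$ is $(\log\epsilon^{-1})+$log-space integrable (or the hypotheses of Theorem~\ref{ubndalt}), this can be done in space $O(\poly(\log\epsilon^{-1} + \log\log\delta^{-1})) = O(\poly(\log\epsilon^{-1} + \log n))$. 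Substituting $\log\epsilon^{-1} = \Theta(M)$ gives a space bound of $O(\poly(M+\log n)) = (M+\log n)^{O(1)}$, i.e.\ the computation lies in $\mathbf{SPACE}((M+\log n)^{O(1)})$, which is precisely the content of Conjecture~\ref{SBCT}.

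The one point that needs care is conceptual rather than technical: Conjecture~\ref{SBCT} cannot hold verbatim for a completely arbitrary transition function $f$, since the mutual-information memory of the noise channel carries no information about the cost of evaluating $f$ itself --- indeed this is the ``artificial construction'' falsifying the conjecture alluded to just before the statement. Accordingly, the corollary should be read as applying to systems whose $f$ ``is not itself a source of computational complexity'', i.e.\ one satisfying the hypotheses of Theorem~\ref{ubnd} or Theorem~\ref{ubndalt}; granting that, the proof is the two-line substitution above and there is no genuinely hard step, all the real work having already been done in the Lemma and in Theorem~\ref{ubnd}.
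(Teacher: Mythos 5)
Your proposal is correct and is exactly the argument the paper leaves implicit (the corollary is stated without an explicit proof, relying on precisely the substitution $\log\epsilon^{-1} = \Theta(M)$ from the preceding Lemma into the space bound of Theorem~\ref{ubnd}). Your remark that the corollary must be read as inheriting the hypotheses on $f$ from Theorem~\ref{ubnd} or~\ref{ubndalt} --- since $\mathcal{M}(\mathcal{S})$ measures only the noise channel and carries no information about the cost of evaluating $f$ --- is a correct and useful clarification that the paper itself only gestures at in its ``artificial construction'' caveat.
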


\section{Exponentiating numbers to high powers}\label{sect:numpow}

In 1984, Alt showed how to compute $x^{E}$ (for potentially exponentially large $E$) in logarithmic space by computing $\exp(E\log x)$ (see \cite{Alt84}). For completeness, we include in this appendix a proof of Alt's result. Formally, we prove the following theorem:

\begin{theorem}\label{powers}
Given a positive real number $x$ presented in binary up to precision $2^{-\poly(n)}$ and an integer exponent $E \leq 2^{\poly(n)}$, there exists an algorithm that computes $y = x^{E}$ in space $O(\log n)$ to within precision $2^{-\poly(n)}$ if $y \leq 2^{n}$ (and otherwise reports that $y \geq 2^{n}$).
\end{theorem}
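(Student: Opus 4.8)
\textbf{Proof proposal for Theorem~\ref{powers}.} The plan is to compute $x^E$ via the identity $x^E = \exp(E \log x)$, reducing the problem to computing $\log$, a single multiplication, and $\exp$, each to precision $2^{-\poly(n)}$. First I would handle trivial and edge cases: if $x = 0$ the answer is $0$; and I can detect whether $y = x^E \le 2^n$ before doing the full computation, since $y \le 2^n \iff E \log x \le n \ln 2$, and a crude approximation to $\log x$ (good to precision, say, $2^{-2n}$) multiplied by $E$ suffices to decide this inequality except when $E\log x$ lies within $2^{-n}$ of $n\ln 2$ — in that boundary regime $y$ is within a constant factor of $2^n$ and we may simply report $y \ge 2^n$ (or, if one wants the stated precision exactly at the threshold, compute to slightly higher accuracy; the statement only asks us to correctly output when $y \le 2^n$, so erring on the ``too large'' side near the threshold is acceptable after one picks the constant in $2^n$ appropriately). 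Once we know $y \le 2^n$, we proceed to the genuine computation.

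The key steps, in order: (1) compute $\ell := \log x$ to precision $2^{-n^c}$ for a sufficiently large $c$ in space $O(\log n)$ — this is exactly item~\ref{funcs} of Section~\ref{sect:spacebound} (Alt~\cite{Alt84}), via a prefix of the Taylor series for $\log$; note $x$ is given to precision $2^{-\poly(n)}$ and bounded by $2^{\poly(n)}$, so $|\ell| = O(\poly(n))$. (2) Compute the product $m := E \cdot \ell$ in space $O(\log n)$; since $E \le 2^{\poly(n)}$ is a $\poly(n)$-bit integer and $\ell$ is a real given to precision $2^{-\poly(n)}$ bounded by $\poly(n)$, this is a single integer-times-real multiplication, doable in logarithmic space by item~\ref{mult} together with item~\ref{realarith}. (3) Compute $\exp(m)$ to precision $2^{-n^{c'}}$ in space $O(\log n)$, again by item~\ref{funcs}; this requires that $m$ be bounded by $\poly(n)$, which holds precisely because we have already verified $y = e^m \le 2^n$, i.e. $m \le n\ln 2 = O(n)$ (and $m$ is bounded below as well, since $x$ and $E$ are bounded). (4) Finally, compose these three log-space computations using item~\ref{comp}: the composition of a constant number of log-space functions is log-space, so the whole pipeline runs in $O(\log n)$ space.

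The only real subtlety — and the step I expect to be the main obstacle — is \emph{error propagation}, specifically controlling how the precision $2^{-n^c}$ to which we compute $\ell = \log x$ degrades after multiplication by the potentially exponentially large $E$. Since $m = E\ell$, an additive error $\rho$ in $\ell$ becomes an additive error $E\rho$ in $m$, and then $\exp$ amplifies an additive error $\sigma$ in $m$ to a \emph{relative} error of roughly $e^\sigma - 1 \approx \sigma$ in $y$ (using $|y| \le 2^n$ to bound the absolute error by $\approx 2^n\sigma$). So to get final precision $2^{-n^{c'}}$ we need $E \rho \lesssim 2^{-n}\cdot 2^{-n^{c'}}$, i.e. we must compute $\log x$ to precision $2^{-(n^{c'} + n + \log_2 E)} = 2^{-\poly(n)}$ — still polynomial in $n$, since $\log_2 E = O(\poly(n))$. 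The point is that demanding more precision from the inner computations costs only a constant factor more space (the degree of the space-polynomial is independent of the target precision exponent, as noted in Section~\ref{sec:prelim}), so the whole thing stays in $O(\log n)$ space. I would also double-check the $x < 1$ case, where $\log x < 0$ and $x^E$ is a tiny positive number: then $y \le 2^n$ automatically, and we want $y$ to precision $2^{-\poly(n)}$, which the same pipeline delivers since $\exp$ of a (possibly large negative) $m$ is computed accurately by Alt's algorithm as long as $|m| = O(\poly(n))$, which holds because $x$ is bounded away from $0$ by $2^{-\poly(n)}$ and $E \le 2^{\poly(n)}$, giving $m \ge -\poly(n)$.
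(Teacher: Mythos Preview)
Your approach is correct and essentially identical to the paper's: compute $x^E$ as $\exp(E\log x)$ using Alt's log-space computations of $\log$ and $\exp$, with the same error propagation analysis (an error $\rho$ in $\log x$ becomes $E\rho$ in $m$ and then roughly $2^n E\rho$ in $y$, so $\rho = 2^{-\poly(n)}$ suffices). One small slip: in your final paragraph you assert that when $x<1$ we have $m = E\log x \ge -\poly(n)$, arguing from $x \ge 2^{-\poly(n)}$ and $E \le 2^{\poly(n)}$ --- but multiplying a $\poly(n)$-sized $|\log x|$ by an exponential $E$ can make $|m|$ exponential, not polynomial. This is trivially patched (if $m < -n^{c'}\ln 2$ just output $0$, which is within $2^{-n^{c'}}$ of $e^m$), and the paper glosses over the same case, so it does not affect the correctness of the strategy.
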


\begin{remark}
The condition that we only output $y$ to within precision $2^{-\poly(n)}$ if $y \leq 2^n$ is crucial. In general, it is possible for $x^{E}$ to be on the order of $2^{2^{\poly(n)}}$, and hence require an exponential number of bits in $n$ to represent to within precision $2^{-\poly(n)}$. However, a machine with space $O(\log n)$ only has $\poly(n)$ different states, and hence cannot hope to output a binary string of length exponential in $n$. 
\end{remark}

Instead of proceeding via iterative squaring, our algorithm first calculates $\log x$ to sufficient precision, then computes $\exp(E\log x)$.  To do this, we first show that we can compute the functions $\log(x)$ and $\exp(x)$ in logarithmic space.

\begin{lemma}\label{complog}
Given a positive real number $x$ up to precision $2^{-\poly(n)}$, we can compute $\log x$ to within precision $2^{-\poly(n)}$ in logarithmic space.
\end{lemma}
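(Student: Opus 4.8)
\textbf{Proof proposal for Lemma~\ref{complog}.}

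The plan is to compute $\log x$ via a rapidly-converging series whose partial sums are log-space computable by the arithmetic facts collected in Section~\ref{sect:spacebound}. First I would reduce to the case $x \in [1/2, 1]$ (or some similarly nice interval): since $x$ is given up to precision $2^{-\poly(n)}$ and bounded in magnitude by $2^{\poly(n)}$, we can write $x = 2^{e} \cdot x'$ with $x' \in [1/2,1)$ and $|e| = O(\poly(n))$, computing $e$ and $x'$ in logarithmic space by simply locating the leading binary digit of $x$. Then $\log x = e\log 2 + \log x'$, and it suffices to compute $\log 2$ and $\log x'$ to precision $2^{-\poly(n)}$, since adding $O(\poly(n))$ such quantities costs only logarithmic additional space by remark~\ref{add}.

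For the core computation on the reduced interval, I would use the standard series for $\log$: writing $x' = 1 - u$ with $u \in (0, 1/2]$, we have $\log x' = -\sum_{j\ge 1} u^{j}/j$, which converges geometrically since $|u|\le 1/2$. Truncating after $K = \poly(n)$ terms incurs error at most $2^{-K}\cdot O(1) \le 2^{-\poly(n)}$, so it suffices to compute $\sum_{j=1}^{K} u^{j}/j$. Each term $u^{j}/j$ is computable to precision $2^{-\poly(n)}$ in logarithmic space: $u^{j}$ is a power with $\poly(n)$-bit exponent of a real given to precision $2^{-\poly(n)}$ (remark~\ref{realarith} together with the iterated-multiplication fact, or directly Theorem~\ref{powers}), and division by the $\poly(n)$-bit integer $j$ is log-space by remark~\ref{div}. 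Summing the $K$ terms is again log-space by remark~\ref{add}. (Alternatively, one can invoke remark~\ref{funcs}, which already asserts $\log$ is log-space computable to precision $2^{-\poly(n)}$ via a Taylor prefix; the argument above simply spells out why.) The constant $\log 2$ is computed the same way, e.g. as $-\log(1/2)$ using the same series with $u = 1/2$, or as $2\log(3/2) - \log(9/8)$ type identities to keep $|u|$ small.

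I do not expect a serious obstacle here; the only point requiring mild care is the argument reduction, ensuring that extracting the exponent $e$ and normalizing $x'$ into a fixed interval on which the series converges geometrically can be done in logarithmic space and that the precision of $x'$ is not degraded by more than a constant number of bits — both of which follow from the standard log-space arithmetic primitives. The composition of these finitely many log-space steps remains log-space by remark~\ref{comp}.
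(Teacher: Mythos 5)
Your proof is correct and takes essentially the same approach as the paper: extract a power of two to normalize into a fixed interval, then sum a $\poly(n)$-term prefix of the Taylor series for $\log$ about $1$, using the log-space arithmetic primitives from Section~\ref{sect:spacebound}. The only cosmetic difference is that the paper normalizes to $[1,1.5)$ via half-integer powers of $\sqrt{2}$ and uses $\log(1+z)$, whereas you normalize to $[1/2,1)$ via integer powers of $2$ and use $\log(1-u)$; both give $|u|\le 1/2$ and geometric convergence.
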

\begin{proof}
First, find an integer $v$ such that $2^{-v/2}x \in [1, 1.5)$; if $\ell$ is the length of the binary representation of the integer part of $x$, then we can choose $v$ to be either $2\ell$ or $2\ell+1$. Set $w = 2^{-v/2}x$. Then $\log x = \log w + v\log \sqrt{2}$. 

We have therefore reduced the problem to computing the logarithm of numbers within the range $[1, 1.5]$. To do this, we will use the following Taylor expansion for $\log(1+z)$:

\begin{equation*}
\log(1+z) = \sum_{k=1}^{\infty} (-1)^{k+1}\frac{z^k}{k}
\end{equation*}

Since $z = x-1 \leq  \frac{1}{2}$ for all $x$ in $[1, 1.5]$, the error from truncating after $k$ terms is at most $2^{-k}$. Therefore, to evaluate $\log(1+z)$ to within precision $2^{-\poly(n)}$, it suffices to sum the first $\poly(n)$ terms of this series. By comments \ref{itmult} and \ref{realarith} of section \ref{sect:spacebound}, each of these terms can be evaluated in logarithmic space, and hence $\log x$ can be computed to within precision $2^{-n}$ in space $O(\log n)$, as desired.
\end{proof}

Again, when computing $\exp(x)$ in logarithmic space, we must ensure the output does not have length exponential in $n$. For that reason, we restrict ourselves to computing $\exp(x)$ for values $x < \poly(n)$. 

\begin{lemma}\label{compexp}
Given a positive real number $x < \poly(n)$ to within precision $2^{-\poly(n)}$, we can compute $\exp(x)$ to within precision $2^{-\poly(n)}$ in logarithmic space. 
\end{lemma}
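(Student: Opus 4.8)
The plan is to compute $\exp(x)$ for $0 \le x < \poly(n)$ by combining a range-reduction step with a truncated Taylor series, exactly in the spirit of the proof of Lemma~\ref{complog}. First I would reduce to the case where the argument is small. Since $x < n^{c}$ for some constant $c$, write $x = m + r$ where $m = \lfloor x \rfloor$ is a nonnegative integer bounded by $n^{c}$ and $r = x - m \in [0,1)$; then $\exp(x) = e^{m}\cdot \exp(r)$. The factor $e^{m}$ can be obtained as $(e)^{m}$, an integer power with $\poly(n)$-bit exponent, of the constant $e$ computed to precision $2^{-\poly(n)}$ (the constant $e = \exp(1)$ is itself handled by the Taylor series below, or one can simply cite Alt's result \ref{funcs}/\ref{realarith} in Section~\ref{sect:spacebound}); raising it to the $m$-th power to the required precision is a logspace operation by remark \ref{realarith}. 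Note that $e^{m} \le e^{n^{c}} = 2^{O(n^{c})}$, so this number has only $\poly(n)$ bits and can legitimately be written down in logspace.

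Next I would handle $\exp(r)$ for $r \in [0,1)$ via the Taylor series
\begin{equation*}
\exp(r) = \sum_{k=0}^{\infty} \frac{r^{k}}{k!}.
\end{equation*}
Since $0 \le r < 1$, the tail after $K$ terms is bounded by $\sum_{k > K} 1/k! \le 2/K!$, which is at most $2^{-\poly(n)}$ once $K = \poly(n)$ (using $K! \ge 2^{K}$ for $K$ large, or more simply $K! \ge 2^{\poly(n)}$ when $K = \poly(n)$). So summing the first $\poly(n)$ terms suffices. Each term $r^{k}/k!$ is computable to precision $2^{-\poly(n)}$ in logspace: $r^{k}$ by remark \ref{realarith} (or \ref{itmult}), $k!$ by remark \ref{facts}, and the quotient by remark \ref{div}; the sum of $\poly(n)$ such terms is logspace by remark \ref{add}. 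Care must be taken that the intermediate partial products $r^k$ and the factorials $k!$ stay within $2^{\poly(n)}$ in magnitude, which they do here since $r < 1$ and $k \le \poly(n)$.

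Finally I would multiply the two computed quantities: $\exp(x) = e^{m}\cdot\exp(r)$, a single multiplication of two numbers each given to precision $2^{-\poly(n)}$ and of magnitude at most $2^{\poly(n)}$, hence logspace by remark \ref{realarith}. A brief error-propagation remark is needed to confirm that approximating $e^{m}$ to precision $2^{-\poly(n)}$ and $\exp(r)$ to precision $2^{-\poly(n)}$ yields $\exp(x)$ to precision $2^{-\poly(n)}$, using that both factors are bounded by $2^{\poly(n)}$ (so the product of the errors with the opposite factors is still $2^{-\poly(n)}$ after choosing the internal precisions a fixed polynomial amount finer). The main thing to be careful about — though it is not really an obstacle — is the bookkeeping of precisions and magnitudes to ensure nothing overflows the $2^{\poly(n)}$ bound and that the overall guarantee holds for every target precision $2^{-n^{c'}}$ with the space degree independent of $c'$; the actual mathematical content is entirely routine given the logspace primitives already catalogued in Section~\ref{sect:spacebound}.
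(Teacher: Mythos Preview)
Your proposal is correct. The paper takes a slightly more direct route: it applies the Taylor expansion to $x$ itself (without range reduction), bounds the Lagrange remainder $\frac{\exp(x)x^{k+1}}{(k+1)!}$ using Stirling's approximation, and observes that taking $k$ of order $x^{d}=\poly(n)$ drives this remainder below $2^{-n^{d}}$. Your decomposition $x=m+r$ with $r\in[0,1)$ trades that Stirling step for an extra computation of $e^{m}$ via iterated multiplication; the payoff is that the Taylor tail bound for $\exp(r)$ becomes trivial ($\sum_{k>K}1/k!$), at the cost of a separate argument that $e^{m}$ is logspace-computable and the error-propagation bookkeeping for the final product. Both arguments are standard and of comparable length; the paper's version avoids the auxiliary $e^{m}$ computation, while yours avoids having to reason about the Taylor remainder when the argument is as large as $\poly(n)$.
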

\begin{proof}
Again, we will compute $\exp(x)$ via its Taylor expansion, truncating after a suitable number of terms. Recall that the Taylor expansion of $\exp(x)$ is given by

\begin{equation*}
\exp(x) = \sum_{k=0}^{\infty} \dfrac{x^k}{k!}
\end{equation*}

By Lagrange's remainder theorem for Taylor series, we have that the error from truncating after $k$ terms is at most

\begin{equation*}
\frac{\exp(x)x^{k+1}}{(k+1)!}
\end{equation*}

Choosing $k$ on the order of $x^{d}$ (for some constant $d$) and applying Stirling's approximation, guarantees that this error is at most $2^{-n^{d}}$; it therefore follows that to evaluate $\exp(x)$ to within precision $2^{-\poly(n)}$, it suffices to compute and sum a polynomial number of terms of this series, which can be done in logarithmic space.
\end{proof}

We now proceed to prove Theorem \ref{powers}.

\begin{proof}[Proof of Theorem~\ref{powers}]
As mentioned earlier, we will compute $x^{E}$ by computing $\exp(E\log x)$. Since $e^{x+\delta} - e^{x} \approx \delta e^{x}$, computing $\exp(E\log x)$ to within precision $\epsilon$ requires us to compute $E\log x$ to within precision $\delta = \epsilon/\exp(E\log x)$. Since we need only do this in the case where $\exp(E\log x) \leq 2^{n}$, we can assume $\delta \leq \epsilon 2^{-n} = 2^{-\poly(n)}$. 

Hence, first compute (via Lemma \ref{complog}) $E\log x$ to within this required precision, and then compute (via Lemma \ref{compexp}) $\exp(E\log x)$ to within the desired precision. (If $E\log x \geq n\log 2$, then we can instead output that $y \geq 2^{n}$). This completes the proof.
\end{proof}

Finally, we demonstrate how to extend this result to the case of computing powers of complex numbers. Our approach will be similar, except we will consider the phase and amplitude separately.

We first show it is possible to compute (in logarithmic space) the argument of a complex number.

\begin{lemma}\label{comparg}
Given a complex number $z = x + yi$ where $x$ and $y$ are given to precision $2^{-\poly(n)}$, we can compute $\arg(z)$ to within precision $2^{-\poly(n)}$ in logarithmic space (where $\arg(z)$ is the unique $\theta \in [0, 2\pi)$ such that $z = Me^{i\theta}$ for some positive real $M$).
\end{lemma}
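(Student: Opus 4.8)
The plan is to reduce the computation of $\arg(z)$ to the computation of $\arctan$, which we already know how to perform in logarithmic space by remark \ref{funcs} of Section \ref{sect:spacebound}. The only subtleties are (i) handling the quadrant in which $z$ lies so as to land in the correct branch of the inverse tangent, and (ii) dealing with the case where $|x|$ is small (or zero), where $y/x$ is large or undefined and a direct call to $\arctan(y/x)$ would be numerically or definitionally problematic.

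First I would handle the degenerate cases directly: if $x=0$ (to within the given precision), output $\pi/2$ or $3\pi/2$ according to the sign of $y$; if $y=0$, output $0$ or $\pi$ according to the sign of $x$. Otherwise, I would split into the regime $|y| \le |x|$ and the regime $|y| > |x|$. In the first regime the ratio $y/x$ has absolute value at most $1$, so $\arctan(y/x)$ is well-behaved; I compute it to precision $2^{-\poly(n)}$ via remark \ref{funcs} and then add the appropriate multiple of $\pi$ (namely $0$, $\pi$, or $2\pi$) determined by the signs of $x$ and $y$ to place $\theta$ in $[0,2\pi)$. In the second regime I instead compute $\arctan(x/y)$ (whose argument now has absolute value at most $1$) and use the identity $\arg(z) = \pi/2 - \arctan(x/y)$ up to the sign/quadrant correction, again adding an appropriate multiple of $\pi$ based on the signs of $x$ and $y$. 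All of the case analysis depends only on comparing $|x|$ with $|y|$ and reading off sign bits, which costs $O(\log n)$ space; the division $y/x$ (or $x/y$) is log-space computable by remark \ref{div}; and the final additions and the constant $\pi$ (computable from $\arctan 1$) are log-space computable by the remarks in Section \ref{sect:spacebound}.

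For the error analysis I would note that in either regime the argument fed to $\arctan$ lies in $[-1,1]$, where $\arctan$ has derivative bounded by $1$, so an error of $2^{-\poly(n)}$ in the ratio (coming from the precision of $x$ and $y$ and from the division) propagates to an error of $2^{-\poly(n)}$ in the output; the subsequent additions of rational multiples of $\pi$ contribute only a further $2^{-\poly(n)}$ from the approximation of $\pi$. Hence $\arg(z)$ is computed to precision $2^{-\poly(n)}$ overall, in space $O(\log n)$.

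The main obstacle is not mathematical depth but bookkeeping: one must be careful that the branch of $\arctan$ returned by the subroutine (typically the principal branch, with range $(-\pi/2,\pi/2)$) is matched with exactly the right quadrant correction so that the result is the \emph{unique} $\theta \in [0,2\pi)$ with $z = Me^{i\theta}$, including on the boundary cases where $x$ or $y$ is exactly $0$ to the given precision (where one must also make sure the reported precision is consistent with the discontinuity of $\arg$ across the positive real axis — this is harmless since we only claim to output \emph{some} valid $\theta$ to within $2^{-\poly(n)}$, not a canonical one near the branch cut). Once the quadrant table is written down correctly, the remaining steps are routine invocations of the log-space primitives already established.
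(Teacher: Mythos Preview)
Your argument is correct. Both you and the paper reduce the problem to evaluating $\arctan$ on a bounded real argument and then applying an additive angular correction, but the reductions differ. The paper multiplies $z$ by a power of a sixth root of unity $\omega = e^{-\pi i/3}$ so that the rotated point lands in a sector of angular width $\pi/3$; there the ratio of imaginary to real part lies in $[0,0.6)$, and $\arctan$ is evaluated directly from its Taylor series. You instead use the standard \texttt{atan2}-style octant decomposition: read off the signs of $x,y$, compare $|x|$ with $|y|$, feed the smaller-over-larger ratio (hence in $[-1,1]$) to the already-cited log-space $\arctan$ routine from remark~\ref{funcs}, and patch with a multiple of $\pi/2$. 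Your route is marginally more elementary---it avoids computing the rotation $\omega^{v}z$ to high precision and appeals to an existing primitive rather than re-deriving the Taylor bound---at the cost of a longer case table; the paper's rotation trick trades that table for a single uniform case. Either way the space bound and precision analysis go through identically.
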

\begin{proof}
Set $\omega = \exp(- 2\pi i/6)$. Find an integer $v \in [0, 12)$ such that $z' = \omega^{v}z$ has an argument in the range $[0, \arctan(0.6))$ (note that $\pi/3 < \arctan(0.6)$). Then, $\arg(z) = \arg(z') - \frac{v\pi}{3}$. 

We therefore wish to compute $\arg(z)$ for those $z$ whose argument lies in $[0, 0.6)$. For these $z$, $\arg(z) = \arctan\left(\frac{y}{x}\right)$, so it therefore suffices to compute $\arctan(t)$ for $t \in [0, 0.6)$. 

As before, we will proceed via evaluating the Taylor series of $\arctan(t)$. Recall that the Taylor expansion of $\arctan(t)$ is given by

\begin{equation*}
\arctan(t) = \sum_{k=0}^{\infty}(-1)^{k}\frac{t^{2k+1}}{2k+1}
\end{equation*}

By comparison to a geometric series, the error from truncating after $k$ terms of this series is at most $2^{-k}$. Therefore, to evaluate $\arg(z)$ to within precision $2^{-\poly(n)}$, it suffices to sum the first $k = \poly(n)$ terms of this series. Since computing each term can also be done in logarithmic space, it is therefore possible to compute $\arg(z)$ to within this precision in logarithmic space.
\end{proof}

\begin{theorem}\label{cpxpowers}
Given a complex number $z = x + iy$ with $x$ and $y$ presented in binary up to precision $2^{-\poly(n)}$ and an integer exponent $E = O(2^{\poly(n)})$, there exists an algorithm that computes $z' = z^{E}$ in space $O(\log n)$ to within precision $2^{-\poly(n)}$ if $|z'| \leq 2^{n}$ (and otherwise reports that $|z'| \geq 2^{n}$).
\end{theorem}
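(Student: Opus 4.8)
The plan is to separate modulus and phase: writing $z = Me^{i\theta}$ with $M=|z|\ge 0$ and $\theta=\arg(z)\in[0,2\pi)$, we have $z^{E}=M^{E}\bigl(\cos(E\theta)+i\sin(E\theta)\bigr)$, so it suffices to compute $M^{E}$ and the pair $\cos(E\theta),\sin(E\theta)$, each in logarithmic space. If the supplied approximations to $x$ and $y$ are both $0$ (taking them fine enough, this is exactly the case $z=0$), output $0$. Otherwise first compute $M=\exp\!\left(\tfrac12\log(x^{2}+y^{2})\right)$ to precision $2^{-\poly(n)}$ in logarithmic space via \Lemref{complog} and \Lemref{compexp}; here the argument $\tfrac12\log(x^{2}+y^{2})$ has magnitude $O(\poly(n))$ since $x^{2}+y^{2}\in[2^{-\poly(n)},2^{\poly(n)}]$. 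Then feed $M$ to the algorithm of \Thmref{powers} (recomputing bits of $M$ on demand, so the composition is constant-depth, cf.\ remark~\ref{comp}): it either returns $M^{E}=|z^{E}|$ to precision $2^{-\poly(n)}$, or it reports $M^{E}\ge 2^{n}$, in which case we report $|z'|\ge 2^{n}$ and halt. In parallel, compute $\theta=\arg(z)$ to precision $2^{-\poly(n)}$ in logarithmic space via \Lemref{comparg}.

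The crux is computing $\cos(E\theta)$ and $\sin(E\theta)$ when $E\theta$ can be as large as $2^{\poly(n)}$: one cannot simply truncate the Taylor series of $\cos$ or $\sin$ at $E\theta$, as that would need $2^{\poly(n)}$ terms and suffer catastrophic cancellation, so $E\theta$ must first be reduced modulo $2\pi$. To do this, compute $\pi=6\arctan(1/\sqrt{3})$ to precision $2^{-\poly(n)}$ in logarithmic space --- since $1/\sqrt{3}<0.6$, this is precisely the regime handled inside the proof of \Lemref{comparg} --- and, writing $\tilde\theta$ for the computed approximation to $\theta$, form $E\tilde\theta$ (a real of magnitude $O(2^{\poly(n)})$ known to precision $2^{-\poly(n)}$ by remark~\ref{realarith}), pick an integer $q\le 2^{\poly(n)}$ with $\phi:=E\tilde\theta-2\pi q$ bounded (logarithmic-space division, remark~\ref{div}; note that since $\cos$ and $\sin$ are $2\pi$-periodic we do not need $q=\lfloor E\tilde\theta/(2\pi)\rfloor$ exactly, only that $\phi$ be bounded), and finally evaluate $\cos\phi,\sin\phi$ by truncating their Taylor series after $\poly(n)$ terms, exactly as in \Lemref{compexp}. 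Output $M^{E}(\cos\phi+i\sin\phi)$; if the modulus branch above reported $M^{E}\ge 2^{n}$ we instead report $|z'|\ge 2^{n}$. Every step is logarithmic-space and they compose a constant number of times, giving total space $O(\log n)$.

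The hard part will be the error analysis, carried out by the standard device of fixing every intermediate precision to be a sufficiently large polynomial in $n$; two amplifications have to be tracked. First, an error $e$ in $\theta$ induces an error $\le E|e|\le 2^{\poly(n)}|e|$ in $E\theta$, and this is not worsened by the modular reduction (dividing by the constant $2\pi\ge 1$ and subtracting an essentially exact integer does not increase absolute error), so $\phi$ is known to precision $2^{-\poly(n)}$ once $\theta$ and $\pi$ are computed finely enough; moreover, since the guarantee is only required when $|z^{E}|=M^{E}\le 2^{n}$, a perturbation $\Delta$ of $\phi$ perturbs $z^{E}$ by at most $M^{E}\Delta\le 2^{n}\Delta$, again $2^{-\poly(n)}$. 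Second, an error in $M$ is amplified by roughly $E\,M^{E-1}$ in passing to $M^{E}$; but $M\ge 2^{-\poly(n)}$ and, in the relevant case, $M^{E}\le 2^{n}$, so $E\,M^{E-1}\le 2^{\poly(n)}$ and a fine enough approximation to $M$ suffices --- which is exactly what \Thmref{powers} provides, its threshold behaviour on $M^{E}$ matching the required threshold behaviour on $|z^{E}|$ because $|z^{E}|=M^{E}$. Combining these two bounds yields $|z'-z^{E}|\le 2^{-\poly(n)}$ whenever $|z^{E}|\le 2^{n}$, as required.
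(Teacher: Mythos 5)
Your proof is correct and takes essentially the same route as the paper's: split $z$ into modulus and argument, apply Theorem~\ref{powers} to the modulus, compute the argument via Lemma~\ref{comparg}, reduce $E\theta$ modulo $2\pi$, and evaluate $\cos$ and $\sin$ by truncated Taylor series. The paper's own proof is terser --- it asserts that $r=\sqrt{x^2+y^2}$ is computable in logspace without spelling it out (you supply $\exp\bigl(\tfrac12\log(x^2+y^2)\bigr)$), it does not mention the $z=0$ corner case, and it compresses the modular-reduction error analysis into the phrase ``reducing $\theta'$ modulo $2\pi$'' --- whereas you carefully track the $E$-fold amplification of the error in $\theta$ and the amplification of the $\pi$-approximation error by the quotient $q$. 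One small phrasing caveat: the step ``subtracting an essentially exact integer does not increase absolute error'' slides past the fact that what is actually subtracted is $2\tilde\pi q$, so the error in $\tilde\pi$ is amplified by $q\le 2^{\poly(n)}$; you account for this implicitly by asking that $\pi$ be computed ``finely enough,'' but it would be cleaner to name the amplification explicitly rather than suggest there is none.
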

\begin{proof}
Write $z = re^{i\theta}$, with $r = \sqrt{x^2 + y^2}$ and $\theta = \arg(z)$. Note then that $z' = r^{E}e^{iE\theta}$. 

Given $x$ and $y$, we can compute $r$ to any desired precision $2^{-\poly(n)}$ and hence by Theorem \ref{powers}, we can also compute $r^{E}$ to within any precision $2^{-\poly(n)}$ (as long as $r^{E} \leq 2^{n}$). Likewise, by Lemma \ref{comparg}, we can compute $\theta$ to within any precision $2^{-\poly(n)}$, and hence compute $\theta' = E\theta$. By reducing $\theta'$ modulo $2\pi$ so that $\theta' \in [0, 2\pi)$, we can then compute $\exp(i\theta') = \cos(\theta') + i\sin(\theta')$  to within precision $2^{-\poly(n)}$ via similar logic to Lemma \ref{compexp}. 

Finally, these two pieces let us compute $z' = r^{E}\exp(i\theta')$ to within precision $2^{-\poly(n)}$, as desired.
\end{proof}

\end{appendix}

\end{document}